\declaretheorem{theorem}
\declaretheorem[sibling=theorem]{lemma}
\declaretheorem[sibling=theorem]{corollary}
\declaretheorem[sibling=theorem]{definition}
\declaretheorem[sibling=theorem]{claim}
\declaretheorem[sibling=theorem]{observation}
\newcommand{\E}{\mathbf{E}}
\renewcommand{\Pr}{\mathbf{Pr}}
\newcommand{\opt}{\textsc{Opt}}
\newcommand{\ls}{\textsc{Ls}}
\newcommand{\Int}{\textsc{Int}}
\newcommand{\Ext}{\textsc{Ext}}
\newcommand{\cost}{\textsc{Cost}}
\newcommand{\Sref}[1]{\hyperref[#1]{\S\ref*{#1}}}
\newcommand{\adm}{\text{adm}}
\newcommand{\calK}{\mathcal{K}}
\newcommand{\calC}{\mathcal{C}}
\newcommand{\rmCostStays}{\mathrm{CostStays}}
\newcommand{\rmCostMoves}{\mathrm{CostMoves}}
\newcommand{\rmEstCostStays}{\mathrm{EstCostStays}}
\newcommand{\rmEstCostMoves}{\mathrm{EstCostMoves}}
\newcommand{\rmExpCostStays}{\mathrm{ExpCostStays}}
\newcommand{\rmExpCostMoves}{\mathrm{ExpCostMoves}}
\title{Combinatorial Correlation Clustering}
\author{
    Vincent Cohen-Addad
    \thanks{Google Research. Email: cohenaddad@google.com.}
    \and
    David Rasmussen Lolck
    \thanks{University of Copenhagen. Email: dalo@di.ku.dk.}
    \and
    Marcin Pilipczuk
    \thanks{University of Warsaw. Email: malcin@mimuw.edu.pl.}
    \and
    Mikkel Thorup
    \thanks{University of Copenhagen. Email: mthorup@di.ku.dk.}
    \and
    Shuyi Yan
    \thanks{University of Copenhagen. Email: shya@di.ku.dk.}
    \and
    Hanwen Zhang
    \thanks{University of Copenhagen. Email: hazh@di.ku.dk.}
}
\date{}
\begin{document}
\begin{titlepage}
\def\thepage{}
\thispagestyle{empty}
\maketitle
\begin{abstract}

    Correlation Clustering is a classic clustering objective arising in numerous machine learning and data mining applications. Given a graph $G=(V,E)$, the goal is to partition the vertex set into  clusters so as to minimize the number of edges between clusters plus the number of edges missing within clusters.

    The problem is APX-hard and the best
    known polynomial time approximation factor is 
    1.73 by Cohen-Addad, Lee, Li,  and Newman [FOCS'23].  They use an LP with 
    $|V|^{1/\epsilon^{\Theta(1)}}$ variables for some small $\epsilon$. However, due to the  practical relevance of correlation clustering, there has also been great interest in getting more efficient sequential and parallel algorithms. 

    The classic combinatorial \emph{pivot} algorithm of Ailon, Charikar and Newman [JACM'08] provides
    a 3-approximation in linear time. Like most other algorithms discussed here, this uses randomization.
   Recently, Behnezhad, Charikar, Ma and  Tan [FOCS'22] presented a $3+\epsilon$-approximate solution for solving
    problem in a constant number of rounds in the Massively Parallel Computation (MPC) setting. Very recently,  Cao, Huang, Su [SODA'24] provided a 2.4-approximation in a polylogarithmic number of rounds in the MPC model and in $\tilde{O}
    (|E|^{1.5})$ time in the classic sequential  setting. They asked whether it
    is possible to get a better than 3-approximation in near-linear time?

    We resolve this problem with an efficient
    combinatorial algorithm providing a drastically better approximation factor. It achieves a $\sim 2-2/13 < 1.847$-approximation in sub-linear ($\tilde O(|V|)$) sequential time or in sub-linear ($\tilde O(|V|)$) space in the streaming setting. In the MPC model, we give an algorithm using only a constant number of rounds that achieves a $\sim 2-1/8 < 1.876$-approximation.

\end{abstract}

\end{titlepage}

\thispagestyle{empty}
\tableofcontents 
\thispagestyle{empty}
\newpage
\setcounter{page}1

\section{Introduction}

Correlation clustering is a fundamental clustering objective
that models a large number of machine learning and data mining applications. Given a set of data elements, represented as vertices $V$ in a graph $G$, and a set of pairs of similar elements, represented as edges  $E$ in the graph, the goal is to find a partition of the vertex sets that minimizes the number of missing edges within the parts of
the partition plus the number of edges across the parts of the partition. An alternative formulation is that we want a graph consisting of disjoint cliques, minimizing the symmetric difference to the input graph. Below $n=|V|$ and $m=|E|$.

The problem was originally introduced by Bansal, Blum, and Chawla in the early 
2000~\cite{BBC04} and has since then found a large variety of applications 
ranging from finding clustering ensembles
\cite{bonchi2013overlapping}, duplicate detection
\cite{arasu2009large}, community mining \cite{chen2012clustering},
disambiguation tasks \cite{kalashnikov2008web}, to automated labelling
\cite{agrawal2009generating, chakrabarti2008graph} and many more.

Thanks to its tremendous modelling success, Correlation clustering has been widely
studied in the algorithm design, machine learning and data mining communities.
From a complexity standpoint, the problem was shown to be APX-Hard~\cite{CGW05} and
Bansal, Blum, and Chawla~\cite{BBC04} gave the first $O(1)$-approximation 
algorithm. The constant was later improved to 4 by Charikar, Guruswami, and Wirth~\cite{CGW05}. 

In a landmark paper, Ailon, Charikar and Newman~\cite{ACN08} gave two
very important pivot-based algorithms. First, they presented a \emph{combinatorial} -- in the sense that it does not rely on solving a linear program --  pivot algorithm that iteratively picks a random vertex uniformly at random in the set of unclustered vertices and clusters it with all its unclustered neighbors.
They showed that this algorithm achieves a 3 approximation (and their analysis is tight). Next, they improved the approximation factor to 2.5 using a standard linear program (LP) which was shown to have an integrality gap of at least 2.  They still
used a random pivot, but instead of creating a cluster containing all the unclustered neighbors of the pivot, they randomly assigned each vertex to the cluster of the pivot based on the LP solution.  

A better rounding of this LP of Ailon, Charikar and Newman~\cite{ACN08} was later presented by 
Chawla, Makarychev, Schramm and Yaroslavtev~\cite{CMSY15} who achieved a 2.06-approximation still relying on a pivot-based approach to round, hence coming close to a nearly-optimal rounding given its integrality gap of at least 2.

Since the integrality gap of this LP is at least 2, 2 has appeared as a strong 
barrier for approximating Correlation Clustering, and 3 has remained 
the best-known non-LP-based algorithm to this day. Recently, different 
LP formulations have been used to get better than 2 approximations. 
Cohen-Addad, Lee and Newman~\cite{CLN22} showed that the Sherali-Adams hierarchy helps bypass the standard LP
integrality gap of 2 by providing a $1.995$ approximation which was very recently improved to
$1.73$ by Cohen-Addad, Lee, Li and Newman~\cite{CLLN23+} using a new linear programming formulation and relying on the Sherali-Adams hierarchy too. However, these improvements have happened at an even
greater computational cost: While the pivot-rounding algorithm required to solve a linear program on
$n^2$ variables, the above two algorithms require $n^{1/\epsilon^{\Theta(1)}}$ variables and running time for some small $\epsilon$.

\subsubsection*{Efficient algorithms}

Motivated by the large number of applications in machine learning and 
data mining applications, researchers have put an intense effort in 
\emph{efficiently} approximating Correlation Clustering: From obtaining 
linear time algorithm~\cite{ACN08}, to dynamic algorithms~\cite{BehnezhadDHSS-FOCS19}, distributed (Map-Reduce 
or Massively Parallel Computation (MPC) models)~\cite{DBLP:conf/focs/BehnezhadCMT22,CLMNP21}, streaming~\cite{BehnezhadCMT-SODA23,ChakrabartyM-NIPS23,CambusKLPU-SODA24,DBLP:conf/innovations/Assadi022,CLMNP21}, or sublinear time~\cite{DBLP:conf/innovations/Assadi022}.
In most of these models, the LP-based approaches have remained 
unsuccessful and the state-of-the-art algorithm remains the combinatorial pivot algorithm. 
Arguably, the modularity and simplicity of the combinatorial pivot 
algorithm and its analysis have been key to obtaining these results.

Thus, the question of \emph{how fast and well and in which model} one can approximate Correlation Clustering better than a factor 3 has naturally emerged.
At one end of the spectrum, we have the above $1.73$ in time $n^{1/\epsilon^{\Theta(1)}}$. At the other end of the spectrum,
Cohen-Addad, Lattanzi, Mitrovic, Norouzi-Fard, Parotsidis, and Tarnawski~\cite{CLMNP21} and Assadi and Wang~\cite{DBLP:conf/innovations/Assadi022} showed
that Correlation Clustering can be approximated to an $O(1)$-factor in linear time (i.e.: $O(m)$ time) and in fact in sublinear
time (i.e.: $n \log^{O(1)} n$) using random
sampling of neighbors. However, the constant proven in these works is larger than 500.
In a very recent work,~\cite{CaoHS-SODA24} showed that one can approximately solve and round the standard LP in time $O(m^{1.5})$
and achieve a $2.4$-approximation, coming close to the linear time-bound. 
In the streaming model,~\cite{DBLP:conf/focs/BehnezhadCMT22} gave a $3+\epsilon$-approximation using $O(1/\epsilon)$ passes.
Later~\cite{BehnezhadCMT-SODA23} gave a 1-pass algorithm achieving a 5-approximation. Very recently, 
~\cite{CambusKLPU-SODA24} and~\cite{chakrabarty2023single} independently provided a 1-pass  $3+\epsilon$-approximation.
\cite{CaoHS-SODA24} thus asks:
\emph{How fast and in which models can one approximate Correlation Clustering within a factor 3 and beyond?}

\subsection{Our Results}

Our main contribution is a novel combinatorial 
algorithm that we show 
achieves a $2-2/13$-approximation.  Our new algorithm alternates between performing a local search and a certain flip. The flip takes all cut edges in the current solution and aims to make them internal, doing so by increasing the cost of cutting them in the objective for the next local search. One can think of this as a very systematic method to escape a bad local minimum, leading to strong theoretical guarantees.

We highlight that despite the fact that the number of local search
iterations could be close to linear in the worst-case, we show that with a minor loss in approximation factor, the whole local search can be implemented in near-linear total time.

In fact, using random sampling of neighbors,
we can implement it in sublinear time, 
improving over the state-of-the-art sublinear-time
$>500$-approximation~\cite{DBLP:conf/innovations/Assadi022,CLMNP21}. On top of that, we can implement it in the streaming model. A variation of this algorithm can be implemented in the MPC model with a constant number of rounds, achieves a $2-1/8$-approximation. 
\begin{theorem}
For any $\epsilon>0$, there is an algorithm that w.h.p., achieves a $2-2/13+\epsilon$-approximation for 
Correlation Clustering in time $O(2^{\epsilon^{-O(1)}}n\log n)$, or in space $O(2^{\epsilon^{-O(1)}}n\log n)$ in the streaming setting.

In addition, for any $\delta = \Omega(1)$, there is an algorithm that w.h.p., achieves a $2-1/8+\epsilon$-approximation and can be implemented to terminate in 
$(\epsilon\delta)^{-O(1)}$ rounds in the MPC model with $n^{\delta}$ memory per machine and
total memory $O(\epsilon^{-O(1)} m \log n)$.
\end{theorem}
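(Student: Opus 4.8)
The plan is to study an algorithm that maintains a clustering of $V$ together with a weight vector $w\in\mathbb{R}_{\ge 1}^{\binom V2}$ defining a \emph{weighted} correlation-clustering objective: a cut edge $uv\in E$ costs $w_{uv}$, while an internal non-edge $uv\notin E$ costs $1$ (a symmetric reweighting of non-edges is also an option). Starting from the arbitrary clustering into singletons and from $w\equiv 1$ (the original objective), it alternates two operations. A \textbf{local search}: while some single-vertex move — reassigning one vertex to another existing cluster or to a fresh singleton — strictly decreases the current $w$-cost, perform such a move, until a locally optimal clustering $S_t$ is reached. A \textbf{flip}: multiply $w_{uv}$ by $1+\Theta(\epsilon)$ for every $uv\in E$ that $S_t$ cuts (equivalently, insert parallel copies of those edges). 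Iterate this $T=T(\epsilon)$ times, keeping all of $S_1,\dots,S_T$, and return the one of least cost under the \emph{original} unit objective. The MPC variant uses a coarser schedule — a cruder approximate local search and only a constant number of "batched" flips — which is cheaper to parallelize at the price of the weaker $2-1/8$ factor.

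For the approximation guarantee I would first record, for each round $t$, the local-optimality inequalities of $S_t$: for every vertex $v$, staying is no worse than moving $v$ into $C$, for each cluster $C$ of $S_t$, each cluster $C$ of $\opt$, and a new singleton. Summing these over $v$ with suitable multipliers and symmetrizing over the two endpoints of each pair yields, in the spirit of the classical local-search/pivot $3$-approximation but now in the $w$-objective, an inequality of the form $\cost_w(S_t)\le \alpha\cdot \cost_w(\opt)+(\text{reweighting slack})$. Second, I would treat the flips as a multiplicative-weights update on the "experts" $=$ edges, with $\opt$ as the comparator: every edge that $\opt$ keeps internal but $S_t$ keeps cutting has its weight boosted each round, and a standard potential argument bounds $\sum_{t}\sum_{uv\in E:\ uv\text{ cut by }S_t,\ uv\text{ internal in }\opt}\Theta(\epsilon)\,w^{(t)}_{uv}$ by $O\!\big(\log(\max_{uv}w_{uv})\big)$ times the number of such edges, which is at most $\cost(\opt)$. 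Feeding this bound on the reweighting slack into the per-round inequalities, summing over $t$, and dividing by $T$ gives $\min_t\cost(S_t)\le (2-2/13+\epsilon)\cdot\cost(\opt)$; the precise constant $2-2/13$ (and $2-1/8$ for the coarse schedule) falls out of balancing the three contributions — internal non-edges, external edges, and the reweighting term — and of optimizing the flip increment and $T$. I expect \textbf{this charging argument, and in particular extracting the exact constant, to be the main obstacle.}

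For the running time, note that an exact local search may make $\Theta(\cost(S_0))=\Theta(m)$ moves, each costing $\Theta(\deg v)$ time, which is too slow; worse, a $\tilde O(n)$ bound is sublinear, so we cannot even read all of $E$. I would instead (a) run an \emph{approximate} local search that performs a move only when it improves the $w$-cost by at least a $\Theta(\epsilon)$-fraction of the $w$-contribution of the moved vertex — this still yields the inequalities above up to an extra additive $\epsilon\cost(\opt)$ and bounds the number of moves and the total work by $\tilde O(n)$ per round via a potential argument; and (b) estimate, for each vertex $v$ and candidate target cluster $C$, the key quantity $|N(v)\cap C|$ (hence the cost change of a move) from $O(\epsilon^{-2}\log n)$ random neighbor samples of $v$ together with running counts of cluster sizes, accurate enough by a Chernoff bound and a union bound over the $\mathrm{poly}(n)$ relevant (vertex, round) pairs — the source of the "w.h.p." With clusters and weights stored in a near-linear-size structure, each of the $T(\epsilon)=2^{\epsilon^{-O(1)}}$ rounds costs $\tilde O(n)$, for total time $O(2^{\epsilon^{-O(1)}}n\log n)$; the same sampled sketch is buildable in $O(1)$ passes of a stream within $O(2^{\epsilon^{-O(1)}}n\log n)$ space, giving the streaming bound.

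For MPC I would parallelize each approximate-local-search round: in one parallel step, every vertex computes its best move from locally gathered (sampled) data, and a large set of pairwise-non-conflicting improving moves is executed simultaneously, the conflict graph broken by random priorities via a standard symmetry-breaking argument; since each such step shrinks the (scaled) cost by a constant factor in expectation, $\mathrm{poly}(1/(\epsilon\delta))$ steps complete one local search, and since the coarse schedule uses only $O(1)$ flips, the whole algorithm runs in $(\epsilon\delta)^{-O(1)}$ rounds with $n^{\delta}$ memory per machine and $O(\epsilon^{-O(1)}m\log n)$ total memory. Combining the approximation bound with these three implementations yields the theorem.
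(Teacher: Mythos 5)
Your proposal diverges from the paper in three places where the divergence is fatal rather than cosmetic. First, your local search uses \emph{single-vertex moves}, whereas the paper's local optimum is defined with respect to swapping in an \emph{arbitrary set} $C$ as a new cluster ($\cost(\ls+C)\ge\cost(\ls)$ for every $C\subseteq V$). The entire $2$-approximation backbone comes from instantiating this inequality with $C$ ranging over the clusters of $\opt$ and summing; single-vertex local optimality does not imply these cluster-swap inequalities, and your claim that "summing over $v$ with suitable multipliers" yields $\cost_w(S_t)\le\alpha\cost_w(\opt)+\text{slack}$ is exactly the step that has no proof. This choice also explains why your implementation looks easy: the paper's Sections 4--7 exist precisely because the cluster-swap neighborhood is exponential and must be searched approximately via a batched, sampled \emph{GenerateCluster} routine. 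Second, you omit the \emph{Pivot} operation that combines three clusterings. In the paper this is not an optimization: the $2-\tfrac18$ bound is proved by contradiction via $\mathrm{Pivot}(\opt,\ls1,\ls2)$, and the $2-\tfrac{2}{13}$ algorithm actually computes $\calC_i''=\mathrm{Pivot}(\calC_{i-1}',\calC_i,\calC_i')$ and feeds $\cost(\calC_i'')$ into the telescoping recursion on the potentials $b_i$ that produces the constant. Your multiplicative-weights potential argument over edges, with no combination step, has no identified mechanism to beat $2$; "the constant falls out of balancing" is not a substitute. (Incidentally, the paper's flips add a \emph{constant} $\beta=0.5$ and reset the weights each iteration, keeping weights in $\{1,1.5,2\}$ with $k=O(1/\epsilon)$ iterations; the $2^{\epsilon^{-O(1)}}$ factor comes from the sampling parameters, not from the number of flips. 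Also, the MPC algorithm is weaker because Pivot is hard to implement there, not because of a "coarser schedule.")

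Third, your error accounting for the sublinear/streaming implementation does not close. You allow each move an additive slack proportional to an $\epsilon$-fraction of the moved vertex's weighted degree; summed over vertices this is $\Theta(\epsilon m)$, which cannot be charged to $\cost(\opt)$ (which may be far smaller than $m$, even zero). The paper avoids this by first computing a \emph{preclustering} $(\calK,E_{\adm})$ with the guarantee $|E_{\adm}|=O(\epsilon^{-12}\cost(\opt))$, and then defining a $\gamma$-good local optimum whose slack is $2\gamma|E_{\adm}|$, so that all sampling and truncation errors are charged to admissible pairs and hence to $\cost(\opt)$. The preclustering also supplies the degree bounds ($|C|=\Theta_\epsilon(d(v))$ for every $v$ in a non-singleton cluster, degree-similarity of admissible pairs) that make uniform sampling of a candidate cluster meaningful and keep the per-vertex work proportional to its degree. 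Without some replacement for this structure, your Chernoff-plus-union-bound step controls the variance of each estimate but not the aggregate bias, and the claimed $(1+\epsilon)$ multiplicative loss does not follow.
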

Fixing $\epsilon=0.0008$, we
get an approximation factor of
$2-2/13+\epsilon<1.847$ and 
$2^{\epsilon^{-O(1)}}=O(1)$ though with a very large hidden constant.
Our approach thus opens up the possibility of achieving a better than 2 
approximation in other models such as dynamic or CONGEST,
and it is an interesting direction to see whether this can be achieved.

\subsection{Techniques}
The local search framework we consider maintains a clustering. At 
each iteration, the algorithm tries to swap in a cluster (i.e.: a set of vertices
of arbitrary size), creating it as a new cluster and removing its elements
from the clusters they belonged to. The swap
is improving if the resulting clustering
has a smaller cost, and we keep doing improving swaps until no more are found.

Of course, as described above the algorithm would have a running time exponential in the
input size since it for potential swaps needs to 
consider all subsets of vertices of the input but let's
first put aside the running time consideration and focus on
the approximation guarantee. We first discuss that the algorithm achieves a 2-approximation;
this serves as a backbone for our further improvements. We further discuss how we can
implement the above variant of the above algorithm in polynomial time while losing only
a $(1+\epsilon)$ factor in the approximation guarantee. 

\paragraph{Achieving a 2-approximation using a simple (exponential time) local search}
The analysis of the above algorithm simply consists of the following steps: Let $S$ be
the solution output by the algorithm and $\opt$ the optimum solution. For each cluster
$C$ in $\opt$ we will consider the solution $S_C$ which is created from $S$ by inserting $C$ as a cluster
(by creating cluster $C$ and removing its elements from the clusters they belong to in $S$).

By local optimality, we know that the cost of solution $S$ is no larger than the cost of the
solution $S_C$. The difference in cost between the two solutions is only coming from edges
that are adjacent to at least one vertex in $C$. More concretely, denote by $E^+_S$ the set
of edges cut by $S$ and by $E^-_S$ the set of non-edges that are not cut by $S$, i.e.:
the cost of $S$ is equal to $|E^+_S|+|E^-_S|$. Then the cost of solution $S_C$
minus the cost of solution $S$ is positive and equal to
\begin{align*}
0 \le 
|\{(u,v) \mid (u,v) \in E \setminus E^+_S \text{ and either } u \text{ or } v \in C\}|+|\{(u,v) \mid (u,v) \notin E \cup E^-_S
\text{ and both } u,v \in C\}|
\\- |\{(u,v) \mid (u,v) \in E^+_S \text{ and both } u,v \in C\}|-
|\{(u,v) \mid (u,v) \in E^-_S \text{ and either } u\text{ or }v \in C\}| 
\end{align*}

This statement holds for any cluster $C$ of the optimum solution and
so summing up the above inequality over all these clusters shows 
that on the left-hand side, the cost of the local search solution is 
no more than twice the cost of the optimum solution. 
This hence provides a simple (exponential time) 2 approximation algorithm for Correlation Clustering.

Equipped with the above analysis, our work then moves in two complementary directions:
(1) Improving over the approximation bound of 2 with certain flips; and 
(2) showing how to implement the above algorithm while losing at most
a $(1+\epsilon)$ factor in the overall guarantee.

\paragraph{Flips between local searches}
Let us first focus on (1). When considering the above analysis, one 
can note that the factor of 2 arises when summing over all 
clusters and so each edge $(u,v) \in E \setminus E^+_S$ where $u \in C$
and $v \in C' \neq C$ in the optimum solution. Indeed, in this case,
the edge $(u,v)$ appears twice in the sum: once when upper bounding the
cost of the vertices in $C$ and once when upper bounding the cost of the vertices
in $C'$. On the other hand, the ratio coming from either the non-edges
or the pairs $(u,v) \in E^+_S \cup E^-_S$ that are also paid by $\opt$ is in fact 1.

It follows that the solution $S$ obtained by local search is only close to a 2 
approximation if a, say, .99 fraction of the cost of the optimum solution is 
due to the edges (and not the non-edges) cut by the optimum solution and
that are not in $E^+_S$. 

Hence, for the ratio of 2 to be tight, the solution $S$, the optimum
solution and the underlying graph must be very peculiar and satisfy
the following:
\begin{itemize}
    \item .99 fraction of the cost of the optimum solution is due to edges (and not non-edges)
    \item $S$ and the optimum solution must have at most .01 fraction
    of their cost shared (i.e.: intersection of the edges and non-edges
    paid by the $\opt$ and $S$ must be at most a .01 fraction of their overall cost). 
\end{itemize}
Under our assumptions, we see that
$\opt$ and $S$ share very few cut edges, so to get closer to $\opt$, it would make sense to \emph{flip} the cut edges of $S$; seeking a solution where they are no longer cut. We do this by increasing the cost of cutting these edges.

The above may seem a bit naive, but it turns out to yield an approximation factor substantially below 2. More precisely, after having found 
the first local search solution $S$, we double the cost of cutting the weight of the cut edges of $S$. This is called the \emph{flip}. Then we run a new local search on this
modified input; yielding a new local optimum $S'$.
Finally, we return the one of $S$ and $S'$ that minimizes the original cost. Magically, it turns out that this local-search-flip-local-search always provides a $15/8$-approximation.
We provide a description of this algorithm and its approximation ratio as a warm-up
in Section~\ref{sec:warmup}.

\paragraph{Iterated-flipping Local Search}
We then go one step beyond and analyze the bad instances for the solution
output after we perform the flip (i.e.: the weighting of the edges cut by the initial
solution $S$).
Our first new ingredient is to start iterating the flip process with different
adaptive weights on the edges cut by the solutions output by the local search algorithm on
the different weighted graphs (the motivation is again that the bad case in these scenarios
is when the above two assumptions are nearly satisfied).
In this case, we use the sequence of solutions found by applying the local search to further
and further refine our weighting scheme. 
The second key ingredient is a procedure that pivots over three clusterings to output a new
clustering that combines them. The pivot procedure iteratively creates a new clustering by
taking the largest set of vertices that are together in all three clusterings,
and creating a new cluster consisting of vertices that are together with 
the chosen ones in at least two out of three clusterings, 
and repeating on the remaining vertices.
We then build a sequence of solutions that iteratively modify the weights of the edges of
the graph and pivot on the three last solutions created. We show that the best solution among
the ones created achieves our final bound of $2-2/13$.

\paragraph{Efficient implementations}
In the previous paragraph, we have worked with the idealistic local 
search algorithm that can swap in and out clusters of arbitrary sizes.

A striking phenomenon here is that it is impossible to approximate the
cost of the optimum solution up to any $o(n)$ factor on graphs with $n$ vertices
in the sublinear memory regime; hence all previous algorithms on this model
output a solution without estimating its cost. In light of this, it may seem
hopeless to use a local-search-based approach that requires estimating the
cost of swapping in and out clusters to improve the cost of the current solution.
Yet, we show that thanks to the coarse preclustering, sampling techniques are enough to estimate the cost of potential
swaps and implement our approach.

Another crucial feature of our local-search-based approach is that it
can also be implemented in a distributed environment and allow to 
perform local search iterations in parallel.

We now discuss the main ideas that led to achieving a linear running time, and derive algorithms for 
the sublinear time
and massively-parallel computation models. Let's focus on implementations for the simple
local search presented at the beginning of this section. 
To implement our approach we need to show how to implement the following key primitive:
Given a clustering $S$, is there a cluster $C$ such that the solution $S_C$ has a better
cost than $S$. Assume first that we are in the case where the optimum solution has cost
at least $\epsilon$ times the total number of edges of the graph.
In this case, we can tolerate an additive cost proportional to $\epsilon^2$ times the sum
of the degrees of the vertices in $C$ and our analysis would be unchanged up to an additive
term of $2\epsilon$ times the cost of $\opt$.

In this context, we make use of the fact that the clusters of any optimum solution are
dense (i.e.: each cluster $C$ of $\opt$ contains at least $|C|(|C|-1)/2 -1$ edges since
otherwise, the vertices could be placed into a singleton and the cost would improve).
This means that we can pick uniformly at random a set of, say, $\text{poly}(1/\epsilon)$
vertices from the cluster and this provides an accurate estimate of the fraction of
neighbors in the cluster for any other vertex $v$ in the graph, up to an additive
$\epsilon |C|$ term. Moreover, we show that there exists a near-optimal solution $\opt^*$
where each vertex in $C$ has degree proportional to $|C|$. Our algorithm then aims
at sampling a set of $\text{poly}(1/\epsilon)$ vertices from $C$, and places
these vertices in a tentative cluster $C'$. It then 
iterates over a set of \emph{candidate} vertices that
could tentatively be included in $C'$ (and that contains
the vertices in $C$).
We would then like to show that the total
amount of mistakes made by the algorithm is proportional to the sum of the degrees
of the vertices in $C$. To achieve this, we require two more ingredients: (1) the candidate vertices have degree proportional to $|C|$ and
their number is proportional to $|C|$; and (2) that these greedy
steps are consistent with the uniform sample we have selected, namely that the uniform
sample we are using to make our decisions is also a uniform sample of $C'$ (since otherwise
the decisions made are not relevant w.r.t. the final cluster produced).

To both achieve (1) and ensure that we can tolerate an error of $\epsilon$ times the sum of
the edges in the instance, we use the Preclustering algorithm recently introduced in~\cite{CLLN23+}
which merges vertices that they prove can be clustered together in a near-optimal solution. 

To achieve (2), we use an iterated approach where the cluster is constructed in batches
of size $\epsilon^{q}|C|$ for some constant $q$ and the sampling is ``updated'' after each batch.
Decisions made for the $i$th batch are based on a sample
obtained by sampling the set of vertices that have already joined $C'$ and the set of
vertices that we would like to join $C'$. The mistakes for the $i$th batched are thus proportional
to $\epsilon^{2q}|C|$ and since the total number of batches is $O(1/\epsilon^{q})$ (thanks to (1)) the
final mistake obtained is $\epsilon^{q}|C|$. 

Finally, we then show that the algorithm can be implemented in linear time and in fact sublinear time
and MPC models. The key observation here is that when looking for a new cluster $C$ to swap in, one
can focus on a vertex $v$ of $C$ and look at the set of vertices with similar degree to $v$ and that
are adjacent to $v$ or to a neighbor of $v$ with degree similar to $v$. The size of this set is proportional
to the degree of $v$ and so proportional to the cluster $C$. Thus, in time proportional to the degree of $v$
one can run the above procedure to build the cluster $C$.

\subsection{Further Related Work}
If the number of clusters is a hard constraint and a constant, then
a polynomial-time approximation
scheme  exists~\cite{giotis2006correlation,karpinski2009linear}
(the running time depends exponentially on the number of clusters).
As hinted at in the introduction, there is a large body of on 
Correlation Clustering in other computation models: there exists
online algorithms~\cite{mathieu2010online,NEURIPS2021_250dd568,CLMP22},
dynamic algorithms~\cite{BehnezhadDHSS-FOCS19}, distributed or
parallel algorithms~\cite{chierichetti2014correlation,pmlr-v37-ahn15,CLMNP21,DBLP:conf/nips/PanPORRJ15,DBLP:conf/wdag/CambusCMU21,DBLP:conf/icml/Veldt22,DBLP:conf/www/VeldtGW18,DBLP:conf/focs/BehnezhadCMT22,CaoHS-SODA24},
differentially-private algorithms~\cite{bun2021differentially,Daogao2022,CFLMNPT22}, 
sublinear time algorithms~\cite{DBLP:conf/innovations/Assadi022}, 
and streaming algorithms~\cite{DBLP:conf/soda/BehnezhadCMT23,DBLP:conf/innovations/Assadi022,CambusKLPU-SODA24,CLMNP21}. Note that recently,~\cite{DBLP:conf/focs/Cohen-Addad0KPT21,CFLM22} have 
establish a connection between Correlation Clustering and metric and
ultrametric embeddings.

In the case where the edges of the graph are arbitrarily weighted, and 
so the cost induced by violating an edge is its weight, the problem 
is as hard to approximate as the Multicut problem (and in fact ``approximation equivalent'') and an $O(\log n)$-approximation is
known~\cite{demaine2006correlation}. The work of 
Chawla, Krauthgamer, Kumar, Rabani, and Sivakumar thus implies that
there is no polynomial-time constant factor approximation algorithm 
for weighted Correlation Clustering assuming the Unique Game Conjecture~\cite{CKKRS06}.

If we flip the objective and aim at  maximizing
the number of non-violated edges, a PTAS has been  given by Bansal, Blum and Chawla~\cite{BBC04}, and a .77-approximation if the edges of the graph are weighted was given by Charikar, Guruswami and Wirth~\cite{CGW05} and
Swamy~\cite{swamy2004correlation}.

\section{Preliminaries}

\paragraph{Graph notation.}
We will work with simple undirected edges. For a graph $G$,
by $V(G)$ and $E(G)$ we denote its vertex and edge set, respectively.
If the graph is clear from the context, we denote $E^+ = E(G)$
and $E^- = \binom{V(G)}{2} \setminus E(G)$, that is, the set of edges and nonedges of $G$.

Let $d(v)$ the degree of vertex $v$, and for any subset $C$ of vertices
let $d(v,C)$ be the number of neighbors of $v$ in $C$.

Some of the graphs will be accompanied by a weight function $w : \binom{V(G)}{2} \to \mathbb{R}_+$.
The unweighted setting corresponds to $w \equiv 1$.
For any set $D \subseteq \binom{V(G)}{2}$, $w(D)$ denotes the total weight of pairs in $D$.
When the graph is unweighted, $w(D)=|D|$.

\paragraph{Correlation clustering.}
A \emph{clustering scheme} in $G$ is a partition of $V(G)$. 
Every set in a clustering scheme is called a \emph{cluster}. 
For a clustering scheme $\calC$, let $\Int(\calC)$ be the set of pairs of vertices that are in the same
cluster of $\calC$ and $\Ext(\calC)$ be the set of edges across different clusters.
$\cost_w^+(\calC)=w(E^+\cap\Ext(\calC))$ denotes the ``$+$'' cost of $\calC$.
$\cost_w^-(\calC)=w(E^-\cap\Int(\calC))$ denotes the ``$-$'' cost of $\calC$.
$\cost_w(\calC)=\cost_w^+(\calC)+\cost_w^-(\calC)$ denotes the total cost of $\calC$.
We drop the subscript if the weight function $w$ is clear from the context.

The input to \textsc{Correlation Clustering} is a simple undirected graph $G$, 
with additionally a weight function $w$ in the weighted variant. 
We ask for a clustering scheme of the minimum possible cost. 

We will also use the notation $\Int$ and $\Ext$ for individual clusters $C \subseteq V(G)$.
Then, $\Int(C) = \binom{C}{2}$ and $\Ext(C) = \{uv \in \binom{V(G)}{2}~|~|\{u,v\} \cap C| = 1\}$. 

\paragraph{Preclustering.}
Let $G$ be an unweighted instance to \textsc{Correlation Clustering} 
and let $\opt$ be a clustering scheme of minimum cost.
Note that every $v \in V(G)$ is in a cluster of size at most $2d(v)+1$, as otherwise
it would be cheaper to cut $v$ out into a single-vertex cluster. 
However, $v$ can be put in $\opt$ in a cluster much smaller than $d(v)$.

We can ensure that $v$ is put in a cluster of size comparable with $d(v)$
allowing a small slack in the quality of the solution.
Let $\varepsilon > 0$ be an accuracy parameter. Note that
if the cluster of $v$ in $\opt$ is of size smaller than $\frac{\varepsilon}{1+\varepsilon} d(v)$,
then one can cut out $v$ into a single-vertex cluster and charge the cost to at least
$\frac{1}{1+\varepsilon}d(v)$ already cut edges incident with $v$. 
That is, there exists a $(1+\varepsilon)$-approximate solution with the following property:
every $v\in V(G)$ is either in a single-vertex cluster or in a cluster of size
between $\frac{\varepsilon}{1+\varepsilon} d(v)$ and $2d(v)+1$. 
For fixed $\varepsilon > 0$, this is of order $d(v)$.

Preprocessing introduced in~\cite{CLLN23+} takes the above approach further and
also identifies some pairs of vertices that can be safely assumed to be together or separate
in a $(1+\varepsilon)$-approximate clustering scheme. 
The crux lies in the following guarantee: the number of vertex pairs of ``unknown'' status
is comparable with the cost of the optimum solution.

We now formally state the outcome of this preprocessing.
\begin{definition}\label{definition:prepro}
For an unweighted \textsc{Correlation Clustering} instance $G$, a {\em preclustered instance} is a pair $(\calK, E_{\adm})$, where 
$\calK$ is a family of disjoint subsets of $V(G)$ (not necessarily a partition),
and $E_{\adm} \subseteq \binom{V(G)}{2}$ is a set of pairs called \emph{admissible pairs}
such that for every $uv \in E_\adm$, at least one of $u$ and $v$ is not in $\bigcup \calK$.

Each set $K \in \calK$ is called an {\em atom}. We use $V_\calK:= \bigcup \calK$ to denote the set of all vertices in atoms.
A pair $uv \in \binom{V(G)}{2}$ with $u$ and $v$ in the same $K \in \calK$ is called an \emph{atomic pair}.
A pair $(u, v)$ between two vertices $u,v$ in the same $K \in \calK$ is called an {\em atomic edge}.
A pair that is neither an atomic pair nor an admissible pair is called a \emph{non-admissible pair}. 
\end{definition}

For a vertex $u$, let $d^{\adm}(u)$ denote the number of $v \in V(G)$ such that $uv$ is an admissible pair.
Note that $\sum_{u\in V(G)}d^{adm}(u)=2|E_{\adm}|$.

\begin{definition}\label{def:precluster}
Let $G$ be an unweighted \textsc{Correlation Clustering} instance, let $(\calK, E_{\adm})$ be a preclustered instance for $G$, and let $\varepsilon > 0$ be an accuracy parameter.
We say that $(\calK, E_{\adm})$ is 
    a \emph{$\varepsilon$-good} preclustered instance 
    if the following holds:
    \begin{enumerate}
         \item for every $v \in V(G)$, we have $d^\adm(v) \le 2\varepsilon^{-3} \cdot d(v)$, and 
        \item for every $uv \in E_\adm$, we have $d(u) \le 2 \varepsilon^{-1} \cdot d(v)$.
        \item for every atom $K \in \calK$, for every vertex $v \in K$, we have that $v$ is adjacent
        to at least a $(1-O(\epsilon))$ fraction
        of the vertices in $K$ and has at most $O(\epsilon |K|)$ neighbors not in $K$.
    \end{enumerate}
\end{definition}
Therefore, in a preclustered instance, the set $\binom{V(G)}{2}$ is partitioned into atomic, admissible, and non-admissible pairs.
By the definition of $E_\adm$, a pair $uv$ between two different atoms is non-admissible. 

\begin{definition}[Good clusters and good clustering schemes]
Let $G$ be an unweighted \textsc{Correlation Clustering} instance, let $(\calK, E_{\adm})$ be a preclustered instance for $G$, and let $\varepsilon > 0$ be an accuracy parameter.
Assume that $(\calK,E_\adm)$ is $\varepsilon$-good.
For a real $\delta > 0$, a set $C \subseteq V(G)$ is called a \emph{$(\varepsilon,\delta)$-good} cluster with respect to $(\calK, E_{\adm})$ if 
\begin{itemize}
    \item for any $u$ in $C$, for any 
    $v$, if $uv$ is an atomic pair, then $v$ is also in $C$.
    \item for any distinct $u,v$ in $C$, we have
    that $uv$ is not a non-admissible pair.
    \item for any vertex $v$ in $C$, if $|C| > 1$, then $|C| \ge \delta d(v)$.
\end{itemize}

Moreover, a clustering scheme $\calC$ is called {\em $(\varepsilon,\delta)$-good} with respect to $(\calK, E_{\adm})$ if all clusters $C \in \calC$ are $(\varepsilon,\delta)$-good. We will also say a cluster/clustering scheme is $\varepsilon$-good if it is $(\varepsilon,\varepsilon)$-good.
\end{definition}

That is, a good cluster can not break an atom, or join a non-admissible pair.
As a result, two atoms can not be in the same cluster as the pairs between them are non-admissible. 

We consider the following algorithm to compute a preclustering due to~\cite{CLLN23+}
\begin{tcolorbox}[beforeafter skip=10pt]
    \textbf{Preclustering($G,\varepsilon$)}
    \begin{itemize}[itemsep=0pt, topsep=4pt]
    \item Run Algorithm 1 of ~\cite{CLMNP21}. Each non-singleton cluster is an atom. let $\calK$ be the set of atoms produced.
    \item A pair $uv$ is \emph{degree-similar} if $\varepsilon d(v) \le d(u) \le d(v) / \varepsilon$.
A pair $uv$ is \emph{admissible} if (1) either $u$ or $v$ belongs to $V(G) \setminus{V_\calK}$, and 
(2) it is degree-similar, and (3) the number
of common neighbors that are degree-similar to both $u$ and $v$ is at least 
$\varepsilon \cdot \text{min}\{d(u), d(v)\}$.
We let $E_\adm$ be the set of all admissible pairs in $\binom{V(G)}{2}$.  
\item Output the pair $(\calK, E_{\adm})$.
    \end{itemize}
\end{tcolorbox}

The following theorem is in fact a restatement from~\cite{CLLN23+}: The only addition 
from the Theorem 4 of~\cite{CLLN23+} (arXiv version)
is that we require the instance to be $\varepsilon$-good: Namely that the 3 bullets of Definition~\ref{def:precluster} hold.
The last two bullets are shown to be true in the proof 
of Theorem 4 of~\cite{CLLN23+} (arXiv version) while
the last bullet follows from the description of 
Algorithm 1 of~\cite{CLMNP21}.

\begin{theorem}[Preclustering~\cite{CLLN23+,CLMNP21}]
\label{thm:preprocessed-wrapper}
Let $\varepsilon > 0$ be a sufficiently small accuracy parameter.
Then, given an unweighted \textsc{Correlation Clustering} instance $G$, the algorithm Preclustering($G,\varepsilon$)
  produces a $\varepsilon$-good preclustered instance $(\calK,E_{\adm})$ that 
  admits an (unknown to the algorithm) $\varepsilon$-good clustering scheme $\calC^*_{(\calK, E_{\adm})}$ with the following properties:
        \begin{itemize}
            \item $\cost(\calC^*_{(\calK, E_{\adm})})$ is at most $(1+\varepsilon)$ times the minimum cost of a clustering scheme in $G$;
            \item $\cost(\calC^*_{(\calK, E_{\adm})})$ is at least $(\varepsilon^{12}/2) \cdot |E_{\adm}|$.
        \end{itemize}
\end{theorem}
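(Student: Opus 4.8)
The plan is to obtain the two cost guarantees and the existence of the good clustering scheme $\calC^*_{(\calK, E_\adm)}$ directly as a black box from Theorem~4 of~\cite{CLLN23+} (arXiv version), and then to verify, one by one, the three conditions of Definition~\ref{def:precluster} that make $(\calK, E_\adm)$ an $\varepsilon$-good preclustered instance. For the cost bounds this is immediate: Theorem~4 of~\cite{CLLN23+} already produces a clustering scheme respecting the preclustering whose cost is within a $(1+\varepsilon)$ factor of the optimum and at least $(\varepsilon^{12}/2)\cdot|E_\adm|$. The only thing to add is that this scheme may be taken to be $\varepsilon$-good, i.e.\ it respects atoms, contains no non-admissible pair, and places every $v$ in a cluster of size at least $\varepsilon d(v)$. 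The first two are exactly the design goals of the preclustering and are shown inside that proof; the size bound follows from the elementary argument already recalled in the Preliminaries, namely that if $v$ lies in a cluster of size $<\frac{\varepsilon}{1+\varepsilon}d(v)$ then cutting $v$ out into a singleton only decreases the cost, so one may assume no cluster is that small.

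First I would dispatch Condition~2, that $d(u)\le 2\varepsilon^{-1}d(v)$ for every $uv\in E_\adm$: this is built into the definition, since an admissible pair is in particular degree-similar, i.e.\ $\varepsilon d(v)\le d(u)\le d(v)/\varepsilon$, and $\varepsilon^{-1}\le 2\varepsilon^{-1}$. Then I would prove Condition~1, that $d^\adm(u)\le 2\varepsilon^{-3}d(u)$, by a double-counting argument. Fix $u$ with $d(u)\ge 1$ (if $d(u)=0$ then $u$ has no degree-similar partner and $d^\adm(u)=0$). For any $v$ with $uv\in E_\adm$, degree-similarity gives $d(v)\ge\varepsilon d(u)$, so $\min\{d(u),d(v)\}\ge\varepsilon d(u)$, and by construction $u$ and $v$ have at least $\varepsilon\cdot\varepsilon d(u)=\varepsilon^2 d(u)$ common neighbors, each lying in the neighborhood of $u$ and degree-similar to both. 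Now count the pairs $(v,w)$ where $uv\in E_\adm$ and $w$ is such a common neighbor. On the one hand there are at least $d^\adm(u)\cdot\varepsilon^2 d(u)$ of them. On the other hand, each $w$ occurring is a neighbor of $u$ that is degree-similar to $u$, hence $d(w)\le d(u)/\varepsilon$, and every $v$ paired with $w$ is a neighbor of $w$, so $w$ occurs in at most $d(w)\le d(u)/\varepsilon$ pairs; since there are at most $d(u)$ such $w$, there are at most $d(u)\cdot d(u)/\varepsilon$ pairs in total. Combining the two bounds yields $d^\adm(u)\le\varepsilon^{-3}d(u)\le 2\varepsilon^{-3}d(u)$.

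Finally, for Condition~3 I would appeal to the construction of the atoms, which are precisely the non-singleton clusters returned by Algorithm~1 of~\cite{CLMNP21}. That algorithm is designed so that every vertex $v$ of an atom $K$ is adjacent to a $(1-O(\varepsilon))$ fraction of $K$ and has only $O(\varepsilon|K|)$ neighbors outside $K$; this is exactly the claimed property, and I would cite the corresponding lemma from~\cite{CLMNP21}, re-deriving the constant hidden in the $O(\cdot)$ if a precise match with the slack used elsewhere is needed.

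The main obstacle is not any single step taken in isolation, since Condition~2 is by definition and Condition~1 is a short self-contained count whose only delicate point is tracking the three nested applications of degree-similarity so as to land exactly at the exponent $-3$. Rather, the care required is in the repackaging: one must confirm that the near-optimal solution produced in the proof of Theorem~4 of~\cite{CLLN23+} can simultaneously be taken to respect atoms and admissible pairs and to satisfy the per-vertex cluster-size lower bound, and that the atom-density guarantee of Algorithm~1 of~\cite{CLMNP21} is stated with an $O(\varepsilon)$ (rather than some other function of $\varepsilon$) slack compatible with the rest of the paper. These are bookkeeping hazards inherited from the cited works rather than new mathematical difficulties.
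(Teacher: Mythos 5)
Your proposal matches the paper's treatment: the paper does not give a standalone proof but justifies the theorem as a restatement of Theorem~4 of~\cite{CLLN23+} combined with the guarantees of Algorithm~1 of~\cite{CLMNP21}, checking the three bullets of Definition~\ref{def:precluster} exactly as you outline. Your self-contained double-counting argument for the bound $d^{\adm}(u)\le 2\varepsilon^{-3}d(u)$ is correct and slightly more explicit than the paper, which simply defers that bullet to the cited proof.
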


 \subsection{Sublinear, streaming and MPC models}
We further define the sublinear, streaming and MPC models. Let $G=(V,E)$ be a graph.
\begin{definition}
In the \emph{sublinear} model, the algorithm can query the following information in $O(1)$ time:
\begin{itemize}
    \item Degree queries: What is the degree of vertex $v$?
    \item Neighbor queries: What is the $i$-th neighbor of $v \in V$ for $i$ less or equal to the degree of $v$?
\end{itemize}
\end{definition}

\begin{definition}
In the \emph{streaming} model, the algorithm knows the vertices of the graph in advance, and the edges arrive in a stream. The algorithm only has $O(n\operatorname{polylog}(n))$ space and can only read the stream once.
\end{definition}

In the MPC model, the set of edges is distributed to a set of machines. Computation then proceeds in
synchronous rounds. In each round, each machine first receives
messages from the other machines (if any), then performs some 
computation based on this information and its own internal allocated 
memory,  and terminates the round by sending messages to other machines
that will be received by the machine at the start of the next round. Each message has size $O(1)$ words. Each machine has limited local memory, which restricts the total number of messages it can receive or send in a round.

For the Correlation Clustering problem, the algorithm's output at the
end of the final computation round is that each machine is required
to store in its local memory the IDs of the cluster of the vertices
that it initially had edges adjacent to. 

We consider the strictly sublinear MPC regime where each machine 
has local memory $O(n^{\delta})$, where $\delta > 0$ is a constant that can be made arbitrarily small, and $\tilde{O}(m)$ total space.

The theorem below is immediate from Theorem 1 in \cite{CLMNP21}.
\begin{theorem}[Preclustering in the MPC and sublinear time model -- \cite{CLMNP21} Theorem 1]
\label{thm:sublinear-mpc-preclustering}
For any constant $\delta > 0$,  Preclustering$(G,\epsilon)$ can be 
implemented in the MPC model in $O(1)$ rounds. 
Letting $n = |V|$, this algorithm succeeds with probability at least 
$1 - 1/n$ and requires $O(n^\delta)$ memory per
machine. Moreover, the algorithm uses a total memory of
$O(|E|  \log n)$.
\end{theorem}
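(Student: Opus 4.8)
The plan is to recognize that \textbf{Preclustering}$(G,\varepsilon)$ is essentially Algorithm~1 of~\cite{CLMNP21} run on $G$, with the atoms read off from its non-singleton output clusters and the admissible pairs read off from the degree-similar ``agreement'' structure it computes internally; the whole statement then follows from Theorem~1 of~\cite{CLMNP21} together with a small amount of bookkeeping. Concretely, \textbf{Preclustering} does two things: (a) it runs Algorithm~1 of~\cite{CLMNP21} and declares each non-singleton output cluster an atom, yielding $\calK$; and (b) it forms the set $E_{\adm}$ of admissible pairs. Step~(a) is exactly the content of Theorem~1 of~\cite{CLMNP21}: it terminates in $O(1)$ MPC rounds, uses $O(n^{\delta})$ memory per machine and $O(m\log n)$ total memory, and succeeds with probability at least $1-1/n$. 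In the same round budget, and at no asymptotic cost, standard $O(1)$-round MPC primitives (sorting the edge list; segmented aggregation over it) also make available on the relevant machines the degree $d(v)$ of every vertex and the membership data ``is $v\in V_\calK$, and in which atom''; note that since the output convention only requires cluster IDs for vertices incident to an edge, isolated vertices cost nothing, so the number of vertices that the rest of the computation touches is $O(m)$.

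For step~(b), I would first bound $|E_{\adm}|$: by $\varepsilon$-goodness (the first bullet of Definition~\ref{def:precluster}, which holds by Theorem~\ref{thm:preprocessed-wrapper}) we have $d^{\adm}(v)\le 2\varepsilon^{-3}d(v)$ for every $v$, hence $|E_{\adm}|=\tfrac12\sum_v d^{\adm}(v)\le 2\varepsilon^{-3}m = O(m)$, so $E_{\adm}$ fits within the claimed $O(m\log n)$ total memory. To compute it, the key observation is that the three conditions defining an admissible pair (one endpoint outside $V_\calK$; degree-similarity; at least $\varepsilon\min\{d(u),d(v)\}$ common neighbors that are degree-similar to both) are precisely the similarity relation that Algorithm~1 of~\cite{CLMNP21} already estimates in order to decide which vertices to co-cluster, so $E_{\adm}$ is in effect a by-product of step~(a). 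If one prefers to recompute it independently, it can be done with the same ingredients: every non-isolated vertex $v$ samples $O(\varepsilon^{-O(1)}\log n)$ of its neighbors uniformly at random (using standard MPC sampling even when $d(v)\gg n^{\delta}$); for each degree-similar sampled neighbor $x$, an $O(1)$-round sort-and-route pass matches the implied wedge $u$--$x$--$v$ against the edge $uv$ and against the degree/atom tables; and a Chernoff bound shows $O(\varepsilon^{-2}\log n)$ samples suffice to estimate, for each relevant pair, its count of degree-similar common neighbors up to an additive $\tfrac{\varepsilon}{2}\min\{d(u),d(v)\}$ with probability $1-n^{-\Omega(1)}$. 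This is enough because the threshold $\varepsilon\min\{d(u),d(v)\}$ need only be resolved up to the $O(\varepsilon)$ slack that the definition of an $\varepsilon$-good preclustered instance (and the guarantees of Theorem~\ref{thm:preprocessed-wrapper}) already tolerate; a union bound over the $\mathrm{poly}(n)$ relevant pairs keeps the total failure probability at $1/n$. The sampled neighbors total $O(m\cdot\varepsilon^{-O(1)}\log n)=O(m\log n)$, consistent with the memory bound.

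The main obstacle — and the reason this is a reduction to~\cite{CLMNP21} rather than a one-liner — is the common-neighbor counting in condition~(3): a vertex may have degree far exceeding $n^{\delta}$, and the number of length-two paths in $G$ can be superlinear, so neither the neighbor lists nor an exhaustive wedge enumeration fit in the memory budget. The way around this is exactly the neighbor-sampling estimator sketched above, together with the facts that (i) only pairs that are degree-similar and at graph distance at most two can ever be admissible, and (ii) the $O(\varepsilon)$ slack built into $\varepsilon$-goodness means an approximate count is as good as an exact one — and all of this is already packaged inside Algorithm~1 of~\cite{CLMNP21}. The only point that genuinely needs checking is that the output format demanded here ($\calK$ \emph{and} the admissible-pair relation $E_{\adm}$) is what Theorem~1 of~\cite{CLMNP21} delivers, possibly after the trivial post-processing above; granting that, the resource bounds ($O(1)$ rounds, $O(n^{\delta})$ memory per machine, $O(m\log n)$ total memory, success probability $\ge 1-1/n$) transfer verbatim, which is the theorem.
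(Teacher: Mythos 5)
Your proposal is correct and follows essentially the same route as the paper, which simply declares the theorem ``immediate from Theorem 1 in \cite{CLMNP21}'' without further argument. The additional detail you supply on materializing $E_{\adm}$ (neighbor sampling, Chernoff estimation of common-neighbor counts up to the $O(\varepsilon)$ slack, and the $O(m\log n)$ memory accounting via $d^{\adm}(v)\le 2\varepsilon^{-3}d(v)$) is exactly the machinery the paper itself deploys for the analogous sublinear/streaming statement in Theorem~\ref{thm:admissible-query}, so it is a faithful filling-in of what the paper leaves implicit rather than a different approach.
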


The following sampling techniques will be used in our implementations in the sublinear and streaming models.

\begin{lemma}[\cite{DBLP:conf/innovations/Assadi022}]
\label{lem:sample-neighbor}
    In $O(n\log n)$ time (in the sublinear model) or in $O(n\log n)$ space (in the streaming model), we can sample $\Theta(\log n)$ neighbors (with repetition) of each vertex $v$.
\end{lemma}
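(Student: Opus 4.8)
The plan is to treat the two models separately: direct index sampling in the sublinear model, and a bank of independent reservoir samplers in the streaming model.

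\emph{Sublinear model.} For each vertex $v$, I would first issue a degree query to learn $d(v)$; if $d(v)=0$ there is nothing to sample. Otherwise I repeat the following $\Theta(\log n)$ times: draw an index $i$ uniformly at random from $\{1,\dots,d(v)\}$ and issue a neighbor query to obtain the $i$-th neighbor of $v$. Each such neighbor is by construction uniformly distributed over $N(v)$, and the $\Theta(\log n)$ repetitions are independent, so we obtain $\Theta(\log n)$ uniform samples with repetition. Every sampling step costs $O(1)$ time, and there are $O(n\log n)$ of them in total, giving the claimed running time.

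\emph{Streaming model.} Here I would run, for every vertex $v$, a bank of $t=\Theta(\log n)$ independent reservoir samplers over the neighbors of $v$. Concretely, maintain a counter $c_v$ initialized to $0$ and $t$ sample slots $S_v[1],\dots,S_v[t]$, initially empty. When an edge $uv$ arrives, process it once for $u$ and once for $v$; processing it for $u$ means incrementing $c_u$ and then, independently for each $j\in\{1,\dots,t\}$, overwriting $S_u[j]$ with $v$ with probability $1/c_u$ and leaving it unchanged otherwise. A routine induction on the number of neighbors of $u$ seen so far shows that once the stream ends, each $S_u[j]$ is uniformly distributed over $N(u)$; since the coin flips used for distinct $j$ are independent, the $t$ slots yield $t$ independent uniform samples with repetition. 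The total state is one counter and $t$ slots per vertex, i.e.\ $O(n\log n)$ words, and we make only one pass over the stream, matching the space bound.

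\emph{Main obstacle.} There is no genuine difficulty; the single point that deserves care is the space bound in the streaming case. We must never materialize an adjacency list, which could have size $\Theta(m)$, so it is essential that each reservoir sampler stores only its current sample together with the running count $c_v$, and that the $t$ slots are updated in place as edges stream past. Verifying the reservoir-sampling invariant for a single slot and observing that the per-slot randomness is independent then completes the argument.
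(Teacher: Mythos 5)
Your proof is correct, and it is exactly the standard argument (direct index sampling via degree/neighbor queries in the sublinear model, and a bank of $\Theta(\log n)$ independent reservoir samplers per vertex in the streaming model) that the paper implicitly relies on by citing Assadi--Wang, where no proof is given in the paper itself. Nothing is missing.
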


\begin{lemma}[\cite{DBLP:conf/innovations/Assadi022}]
\label{lem:sample-vertex}
    In $O(n\log n)$ time (in the sublinear model) or in $O(n\log n)$ space with high probability (in the streaming model), we can sample each vertex $v$ with probability $\Theta(\log n/d(v))$ and store all edges incident to sampled vertices.
\end{lemma}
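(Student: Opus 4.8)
The plan is to realize the stated sampling in the obvious way: sample each vertex $v$ independently with probability $p_v := \min\{1, c\ln n/d(v)\}$ for a sufficiently large constant $c$, and store all $d(v)$ edges incident to every sampled $v$. Since $p_v = \Theta(\log n/d(v))$ (with the probability capped at $1$ for vertices of degree $O(\log n)$), this is exactly the desired distribution, so the only content is to implement it within the stated time (sublinear model) or space (streaming model), the bottleneck in both cases being the number of stored edges.

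In the sublinear model this is direct: query $d(v)$ for every $v$ in $O(n)$ total time, flip for each $v$ an independent coin that is heads with probability $p_v$, and for every $v$ whose coin is heads issue $d(v)$ neighbor queries to read and store its incident edges. The running time is $O(n) + O\big(\sum_{v \text{ sampled}} d(v)\big)$, and it remains to bound the second term. In the streaming model we do not know degrees in advance, so we use a reservoir-style threshold: for each $v$ draw an independent threshold $\theta_v \in \Z_{\ge 0}\cup\{\infty\}$ with $\Pr[\theta_v \ge k] = \min\{1, c\ln n/k\}$, e.g.\ $\theta_v = \lfloor c\ln n/r_v\rfloor$ for $r_v$ uniform on $(0,1]$; each $\theta_v$ costs $O(\log n)$ bits to store (total $O(n\log n)$), or is generated by a hash function. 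We keep for each $v$ a counter $d_t(v)$ of its incident edges seen so far together with a buffer holding those edges, and the moment $d_t(v)$ exceeds $\theta_v$ we delete $v$'s buffer and permanently stop storing its edges. At the end, $v$ is declared sampled iff $d(v) \le \theta_v$, which occurs with probability exactly $\min\{1, c\ln n/d(v)\} = p_v$ independently over $v$, and the surviving buffers hold precisely the edges incident to sampled vertices.

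The remaining task is to bound $X_t := \sum_v \mathbf{1}[d_t(v) \le \theta_v]\, d_t(v)$, the number of stored edges after $t$ arrivals (in the sublinear model only $t=m$ with $d_m = d$ matters). The summands are independent over $v$, nonnegative, each at most $n$, and $\E\big[\mathbf{1}[d_t(v)\le\theta_v]\,d_t(v)\big] = \min\{d_t(v), c\ln n\} \le c\ln n$, so $\E[X_t] \le cn\ln n$ and $\sum_v \mathrm{Var} \le c\ln n\sum_v d_t(v) = 2tc\ln n \le cn^2\ln n$ using $t \le m \le \binom{n}{2}$. Bernstein's inequality then yields, for $c$ a large enough constant, $\Pr[X_t \ge 2cn\ln n] \le \exp\!\left(-\Omega(c\ln n)\right) \le n^{-10}$. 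In the sublinear model this (with $t=m$) immediately bounds the stored edges, and hence the running time, by $O(n\log n)$ w.h.p.; in the streaming model, a union bound over the at most $m \le n^2$ prefixes gives $X_t = O(n\log n)$ for all $t$ simultaneously w.h.p., so the total space (buffers, $O(n)$ counters, $O(n\log n)$ bits of randomness) is $O(n\log n)$.

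The one point that needs care is exactly this concentration step: the per-vertex contributions are bounded only by $d(v) \le n$, not by $O(\log n)$, so a naive Chernoff/Hoeffding bound on the number of stored edges would be far too weak to survive the union over all $m$ prefixes in the streaming setting. Using the global bound $m = O(n^2)$ to control the sum of variances and then invoking Bernstein (rather than a multiplicative Chernoff bound) is what makes the tail $n^{-\Omega(c)}$ and thus strong enough; everything else is routine.
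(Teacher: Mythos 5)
This lemma is imported by citation from Assadi--Wang \cite{DBLP:conf/innovations/Assadi022}; the paper gives no proof of it, so there is nothing internal to compare against. Your argument is a correct, self-contained derivation. The sublinear case is the obvious one, and your streaming implementation via an independent per-vertex threshold $\theta_v$ with $\Pr[\theta_v \ge k]=\min\{1,c\ln n/k\}$ (buffer $v$'s edges until the count exceeds $\theta_v$, declare $v$ sampled iff $d(v)\le\theta_v$) does produce exactly the right marginal probabilities, independently across vertices, without knowing degrees in advance; this is in the same spirit as the level-/threshold-based neighborhood sampling in Assadi--Wang. The concentration step is also sound: the per-prefix variables $\mathbf{1}[\theta_v\ge d_t(v)]\,d_t(v)$ are independent over $v$ with mean at most $c\ln n$ each, so the mean is $O(n\log n)$ while each summand is at most $n$, and Bernstein with your variance bound gives tail $n^{-\Omega(c)}$, surviving the union bound over the $\le\binom{n}{2}$ prefixes. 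Two minor remarks, neither a gap: (i) you should cap the stored threshold at $n$ (only $\min\{\theta_v,n\}$ matters) to justify the $O(\log n)$ bits per vertex, since $\lfloor c\ln n/r_v\rfloor$ itself is unbounded; and (ii) your closing claim that a Chernoff-type bound ``would be far too weak'' is slightly overstated --- Hoeffding with range $n$ is indeed useless, but the multiplicative Chernoff bound for independent variables in $[0,n]$ already gives $\Pr[X_t\ge 2cn\ln n]\le\exp(-\Omega(c\ln n))$ because the mean-to-range ratio is $\Omega(c\ln n)$; Bernstein and this bound are interchangeable here.
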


We remark that in the streaming model, we can also get several independent realizations of the above sampling schemes in one pass which only increases the space complexity.

The next theorem follows the decomposition theorem and the framework 
of Assadi and Wang~\cite{DBLP:conf/innovations/Assadi022,CLMNP21,CLLN23+}.

\begin{theorem}[\cite{DBLP:conf/innovations/Assadi022,CLMNP21,CLLN23+}]
\label{thm:admissible-query}
In $O(n \log^2 n)$ time (in the sublinear model) or in $O(n\log n)$ space (in the streaming model), one can compute a partition $\calK$ of the vertices and a data structure such that 
there exists an $\epsilon$-good preclustering $(\calK,E_{\adm})$ satisfying the conditions
of Theorem~\ref{thm:preprocessed-wrapper} with 
\begin{itemize}
    \item $\cost(\calC^*_{(\calK, E_{\adm})})$ is at most $(1+\varepsilon)$ times the minimum cost of a clustering scheme in $G$;
\item $\cost(\calC^*_{(\calK, E_{\adm})}) > (\varepsilon^{12}/2^{13}) \cdot |E_{\adm}|$ 
\end{itemize} 
and such that with success probability at least $1-1/n^2$, 
for any pair of vertices, the data structure can answer in $O(\log n)$ time whether the edge is in $E_{\adm}$ or not, 
and for any vertex $v$, the data structure can list all vertices admissible to $v$ in $O(d(v)\log^2 n)$ time.
\end{theorem}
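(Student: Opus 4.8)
\medskip

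\noindent\textbf{Reduction to an approximate admissibility oracle.} The plan is to run the combinatorial procedure Preclustering$(G,\varepsilon)$ behind Theorem~\ref{thm:preprocessed-wrapper} within the stated resources, conceding only a constant-factor loss in the lower bound on $\cost(\calC^*)$. Its two nontrivial ingredients are: (a)~the atom partition $\calK$, obtained from Algorithm~1 of~\cite{CLMNP21}, whose sublinear/streaming implementation is supplied by the Assadi--Wang framework~\cite{DBLP:conf/innovations/Assadi022} (the MPC analogue being Theorem~\ref{thm:sublinear-mpc-preclustering})---this produces $\calK$ in $O(n\log^2 n)$ time, resp.\ $O(n\log n)$ space, and the atom conditions (bullet~3 of Definition~\ref{def:precluster}) are read off from its description; and (b)~for each degree-similar pair $uv$ with an endpoint outside $V_\calK$, the test whether the number $c(u,v)$ of common neighbours degree-similar to both $u$ and $v$ reaches $\varepsilon\cdot\min\{d(u),d(v)\}$. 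Ingredient~(b) is a global counting problem we cannot afford exactly; we replace $c(u,v)$ by a sampled estimate $\hat c(u,v)$ and define $E_\adm$ by comparing $\hat c$ to a suitable threshold. For $t>0$ write $E_\adm^{(t)}$ for the set obtained by the same rule but with the \emph{exact} count compared to $t\cdot\min\{d(u),d(v)\}$ (same $\calK$, same degree-similarity parameter $\varepsilon$); the proof hinges on showing $E_\adm^{(\varepsilon)}\subseteq E_\adm\subseteq E_\adm^{(\varepsilon/8)}$ w.h.p.

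\medskip

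\noindent\textbf{The estimator and the sandwich.} Fix the randomness of all sampling schemes once, so $(\calK,E_\adm)$ is a deterministic function of the drawn samples; we show this function is good w.h.p. Take a vertex sample $S$ as in Lemma~\ref{lem:sample-vertex}, but with a sufficiently large $\varepsilon$-dependent constant, so each $w$ is included independently with probability $p_w=\Theta(\varepsilon^{-O(1)}\log n/d(w))$ and all edges incident to sampled vertices are stored; thus for $w\in S$ we can decide whether $w$ is a common neighbour of $u$ and $v$ degree-similar to both. Let $T(u,v)$ be the set of such $w$ and put $\hat c(u,v)=\sum_{w\in S\cap T(u,v)}1/p_w$, an unbiased estimator of $c(u,v)$. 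Degree-similarity forces every $w\in T(u,v)$ to have $d(w)\in[\varepsilon d(v),d(u)/\varepsilon]$ (assuming $d(u)\le d(v)$), so the weights $1/p_w$ over $T(u,v)$ lie within an absolute (for fixed $\varepsilon$) factor of one another and are $O(d(u)/(\varepsilon^{-O(1)}\log n))$. A Chernoff bound for sums of independent bounded terms then gives, with the oversampling constant large enough, that for \emph{every} pair simultaneously (union bound over $\binom n2$ pairs, each failing with probability $\le n^{-4}$): if $c(u,v)\ge\varepsilon\min\{d(u),d(v)\}$ then $\hat c(u,v)\ge\tfrac34\varepsilon\min\{d(u),d(v)\}$, and if $c(u,v)<\tfrac{\varepsilon}{8}\min\{d(u),d(v)\}$ then $\hat c(u,v)<\tfrac{\varepsilon}{4}\min\{d(u),d(v)\}$. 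Declaring $uv\in E_\adm$ iff $uv$ is degree-similar, has an endpoint outside $V_\calK$, and $\hat c(u,v)\ge\tfrac{\varepsilon}{4}\min\{d(u),d(v)\}$, we obtain $E_\adm^{(\varepsilon)}\subseteq E_\adm\subseteq E_\adm^{(\varepsilon/8)}$ with probability at least $1-1/n^2$.

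\medskip

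\noindent\textbf{From the sandwich to the guarantees.} The conditions of Definition~\ref{def:precluster} hold: degree-similarity (bullet~2) and the atom conditions (bullet~3) are kept verbatim, and $d^\adm(v)=O(\varepsilon^{-3}d(v))$ (bullet~1) follows from $E_\adm\subseteq E_\adm^{(\varepsilon/8)}$ by re-running the bound on $d^\adm$ in~\cite{CLLN23+} with the admissibility threshold lowered by the constant factor $8$---the resulting constant may be larger than the $2$ in bullet~1, but such constant-factor slippage is immaterial for the rest of the paper. For the two quantitative guarantees set $\calC^*_{(\calK,E_\adm)}:=\calC^*_{(\calK,E_\adm^{(\varepsilon)})}$, the $\varepsilon$-good near-optimal scheme furnished by Theorem~\ref{thm:preprocessed-wrapper} for the exact preclustering $(\calK,E_\adm^{(\varepsilon)})$. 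Since $E_\adm^{(\varepsilon)}\subseteq E_\adm$, every non-atomic within-cluster pair of $\calC^*_{(\calK,E_\adm^{(\varepsilon)})}$---being in $E_\adm^{(\varepsilon)}$---is also in $E_\adm$, while the remaining requirements of goodness (not breaking atoms, no non-admissible pair, $|C|\ge\varepsilon d(v)$) are unaffected by enlarging $E_\adm$; hence this scheme is $\varepsilon$-good with respect to $(\calK,E_\adm)$ too, and $\cost(\calC^*_{(\calK,E_\adm)})\le(1+\varepsilon)\opt$. Finally $|E_\adm|\le|E_\adm^{(\varepsilon/8)}|$, and re-running the charging argument of~\cite{CLLN23+} that bounds $|E_\adm|$ by a fixed power of $1/\varepsilon$ times $\cost(\calC^*)$, now with the threshold constant decreased by $8$, yields $|E_\adm^{(\varepsilon/8)}|<(2^{13}/\varepsilon^{12})\cost(\calC^*_{(\calK,E_\adm^{(\varepsilon)})})$, which is the claimed bound (the generous gap between $2^{13}$ and the original $2$ absorbs the constant-factor degradation). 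I expect this last step---carefully propagating a constant-factor perturbation of the admissibility threshold through the charging arguments of~\cite{CLLN23+}---to be the main technical obstacle; the concentration argument above is routine.

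\medskip

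\noindent\textbf{The data structure.} We store $\calK$ (with $O(1)$-time membership tests), the degree array, a few independent copies of the samples of Lemmas~\ref{lem:sample-neighbor} and~\ref{lem:sample-vertex}, and, precomputed from $S$ and its stored incident edges, for each vertex $z$ the list $S_z=\{w\in S: z\in N(w)\}$ sorted by $d(w)$---of total size $O(n\log n)$---built in $O(n\log^2 n)$ time, resp.\ $O(n\log n)$ space. A pair-query $uv$ checks degree-similarity and the $V_\calK$ condition in $O(1)$ time, then binary-searches the $d(w)$-window imposed by degree-similarity inside $S_u$ and $S_v$ (each yielding $O(\log n)$ vertices, since only $w$ with $d(w)\ge\varepsilon\min\{d(u),d(v)\}$ matter and $\sum_{w\in N(z),\,d(w)\ge\varepsilon d(z)}p_w=O(\log n)$), intersects them, and sums $1/p_w$ over $T(u,v)$---total $O(\log n)$. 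To list all vertices admissible to $v$ (in the sublinear model; the streaming variant follows the same recipe using the stored samples), form the candidate set $N(v)\cup\bigcup_{u\in N(v)}(\text{$u$'s }\Theta(\log n)\text{ sampled neighbours})$ of size $O(d(v)\log n)$ and run the $O(\log n)$ pair-query on each, for $O(d(v)\log^2 n)$ total. Completeness is the one extra point: if $w$ is admissible to $v$ then $c(v,w)=\Omega(\varepsilon^2 d(v))$, and for each of these common neighbours $u$ (which has $d(u)\le d(v)/\varepsilon^2$) the vertex $w$ is among $u$'s sampled neighbours with probability $\Omega(\varepsilon^2\log n/d(v))$, so with the oversampling constant large enough, a union bound over pairs shows that w.h.p.\ every admissible $w$ is caught as a candidate.
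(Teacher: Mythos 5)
Your proposal follows essentially the same route as the paper: compute the atoms via the Assadi--Wang framework, estimate the common-neighbour count from a degree-weighted vertex sample, sandwich the reported $E_\adm$ between the exact admissible sets at two thresholds via Chernoff plus a union bound over pairs, attribute the $2^{13}$ to re-applying Theorem~\ref{thm:preprocessed-wrapper} at a weakened threshold, and answer listing queries by testing an $O(d(v)\log n)$-size candidate set of sampled-neighbour neighbours. The one quantitative slip is your lower threshold $\varepsilon/8$: under the $t^{12}$ degradation you yourself invoke, $|E_\adm^{(\varepsilon/8)}|$ is only bounded by $(2\cdot 8^{12}/\varepsilon^{12})\cost(\calC^*)=2^{37}\varepsilon^{-12}\cost(\calC^*)$, not $2^{13}\varepsilon^{-12}\cost(\calC^*)$; tightening the concentration so the sandwich is $E_\adm^{(\varepsilon)}\subseteq E_\adm\subseteq E_\adm^{(\varepsilon/2)}$ (as the paper implicitly does, since $(\varepsilon/2)^{12}/2=\varepsilon^{12}/2^{13}$) repairs this at the cost of a larger oversampling constant.
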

The proof of the above statement follows from the following observation:
The set $\calK$ is partitioned into disjoint dense vertex set computed
by the framework of Assadi and Wang. We initialize one realization of Lemma \ref{lem:sample-vertex}, which samples each vertex $v$ with probability $40\varepsilon^{-2}\log n/d(v)$. With high probability, each vertex $v$ will only have $O(\log n)$ degree-similar sampled neighbors. Then, when the data structure
is queried with a pair $u,v$ it can implement the second step of 
the preclustering by (1) verifying that none of $u,v$ belong to 
$\calK$ in $O(1)$ time, (2) verifying that $u,v$ are degree similar
with two degree queries, and (3) looking at the degree-similar sampled
neighbors of $u$ and $v$ in total time $O(\log n)$ and 
verifying that the intersection is at least $\varepsilon \min\{d(u),d(v)\}/2$
of the Preclustering algorithm. 
An immediate application of Chernoff bound implies that the probability of misreporting that an edge is not admissible is at most
$1/n^4$. Then by a union bound, with probability $1-1/n^2$ it can correctly answer all queries. On the other hand, the number of edges reported is the 
same as in Theorem~\ref{thm:preprocessed-wrapper} applied to an
$\varepsilon$ twice smaller.
Then, when we want to list all vertices admissible to $v$, we look at every degree-similar sampled neighbor $u$ of $v$. For each neighbor of $u$, we query whether it is admissible to $v$. Given that the data structure can correctly answer all queries, we must find all candidates in this way, and the total number of queries we make is $O(d(v)\log n)$.

\section{Local Search with Flips}
\label{sec:LS}
In this section, we present and analyse a few local search based approaches to 
\textsc{Correlation Clustering}. 

First, we present a simple warm-up $(2-\frac{1}{8})$-approximation. 
Here, we will not discuss the implementation of a local search step, but only properties
of a local optimum. 
Then, we present a more involved $(2-\frac{2}{13}+\varepsilon)$-approximation, where we in the analysis also care to leave placeholders for our implementation of a local search step.
This implementation will be via sampling, and can only guarantee being in 
an ``$(1+\varepsilon)$-approximate'' local optimum.

Let $\calC$ be a clustering scheme in a graph $G$ with weight function $w$ and let 
$C \subseteq V(G)$. 
By $\calC + C$ we denote the clustering scheme obtained from $\calC$ by first removing
the vertices of $C$ from all clusters and then creating a new cluster $C$. 

\subsection{Warm-up: $(2-\frac{1}{8})$-approximation}
\label{sec:warmup}
As a warm-up, to show our techniques, we present an analysis of a local optimum of a simple local search. 
\begin{definition}
A clustering scheme $\ls$ is a \emph{local optimum} if
for every $C \subseteq V(G)$ we have $\cost(\ls + C) \geq \cost(\ls)$.     
\end{definition}

We will analyse the following algorithm.
\begin{tcolorbox}[beforeafter skip=10pt]\label{alg:local-search-flip}
    \textbf{Local-Search-Flip-Local-Search}
    \begin{itemize}[itemsep=0pt, topsep=4pt]
        \item Let $\ls1$ be a local optimum of $(G,w)$.
        \item Double the weight of edges in $E^+\cap\Ext(\ls1)$ to get a new weight function $w'$.
        \item Let $\ls2$ be a local optimum of $(G,w')$.
        \item Output the best of $\ls1$ and $\ls2$ with respect to the original cost function.
    \end{itemize}
\end{tcolorbox}

The remainder of this section is devoted to the proof of the following theorem.
\begin{theorem}
\label{thm:two-ls}
   The Local-Search-Flip-Local-Search algorithm applied to an instance with uniform weight ($w \equiv 1$) outputs a solution of cost within a
   $(2-\frac{1}{8})$ ratio of the optimum solution.
\end{theorem}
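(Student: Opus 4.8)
The plan is to convert the single‑cluster local‑optimality conditions of $\ls1$ and $\ls2$ into two linear inequalities, each comparing one of them to a fixed optimal clustering $\opt$, and then to combine those inequalities with the bookkeeping that relates the original cost $\cost$ (weights $\equiv 1$) to $\cost_{w'}$. First I would isolate the telescoping fact behind the exponential‑time $2$‑approximation sketched in the introduction: for any weight $w$, any local optimum $\ls$ of $(G,w)$, and any clustering $\calC^\star$, summing $\cost_w(\ls+C)\ge\cost_w(\ls)$ over all $C\in\calC^\star$ counts every pair with both endpoints in one cluster of $\calC^\star$ once and every pair across two clusters of $\calC^\star$ twice. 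Letting $A$, $B$, $D$ denote the pairs that are ``wrong'' for both $\ls$ and $\calC^\star$, for $\ls$ only, and for $\calC^\star$ only, this collapses to $w(B\cap E^+)+2w(B\cap E^-)\le w(D\cap E^-)+2w(D\cap E^+)$, hence to $\cost_w(\ls)\le\cost_w(\calC^\star)+w(D\cap E^+)-w(B\cap E^-)\le 2\cost_w(\calC^\star)-w(A)$. I would record it in this refined form, keeping the lower‑order terms $w(A)$, $w(B\cap E^-)$, $w(D\cap E^+)$ visible, since the whole improvement over $2$ comes from exploiting them.

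Next I would apply this estimate with $\calC^\star=\opt$ twice: to $\ls1$ with $w\equiv1$, and to $\ls2$ with $w'$, translating the latter back through $w'=w+\mathbf{1}[F]$ where $F=E^+\cap\Ext(\ls1)$. On the $\opt$ side, $\cost_{w'}(\opt)=\cost(\opt)+|F\cap\Ext(\opt)|$, and $F\cap\Ext(\opt)$ — the edges cut by both $\ls1$ and $\opt$ — lies in the common‑mistake set $A_1$ of the $\ls1$‑versus‑$\opt$ comparison; since the backbone gives $\cost(\ls1)\le 2\cost(\opt)-|A_1|$, we may assume $\cost(\ls1)>\tfrac{15}{8}\cost(\opt)$ (otherwise the algorithm already outputs a good‑enough $\ls1$), so $|A_1|<\tfrac{1}{8}\cost(\opt)$ and therefore $\cost_{w'}(\ls2)\le 2\cost_{w'}(\opt)<\tfrac{9}{4}\cost(\opt)$. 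On the $\ls2$ side, $\cost(\ls2)=\cost_{w'}(\ls2)-|F\cap\Ext(\ls2)|$, and $F\cap\Ext(\ls2)$ is exactly the set of edges cut by both $\ls1$ and $\ls2$. Hence if $|F\cap\Ext(\ls2)|\ge\tfrac{3}{8}\cost(\opt)$ we are done, since then $\cost(\ls2)<\tfrac{9}{4}\cost(\opt)-\tfrac{3}{8}\cost(\opt)=\tfrac{15}{8}\cost(\opt)$. This is precisely the mechanism by which the flip earns its keep: the doubled weight on $\ls1$'s cut edges either pushes $\ls2$ to keep them internal (bringing $\ls2$ toward $\opt$, which keeps almost all of them internal too), or else is charged straight against $\ls2$'s own cost.

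The remaining case — $\ls1$ far from optimal and yet $\ls2$ re‑cutting only a small slice of $F$ — is where the real work is, and I expect it to be the main obstacle. Here I would introduce the full profile of pair types, classifying each pair by its ($E^+$ versus $E^-$, internal versus external) status in each of $\opt$, $\ls1$, $\ls2$, and express $\cost(\ls1)$, $\cost(\ls2)$, $\cost(\opt)$ and the two local‑optimality inequalities as linear forms in these (at most sixteen) counts, singling out the overlap quantities $\beta=|E^+\cap\Ext(\ls1)\cap\Ext(\ls2)\cap\Int(\opt)|$ and $\gamma=|E^+\cap\Int(\ls1)\cap\Ext(\ls2)\cap\Int(\opt)|$ (roughly, ``bad'' cut edges of $\ls1$ that $\ls2$ also cuts, versus edges that $\ls1$ and $\opt$ agree to keep together but $\ls2$ splits). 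Proving the theorem then amounts to showing that the linear program maximizing $t$ subject to $\cost(\ls1)\ge t\cdot\cost(\opt)$, $\cost(\ls2)\ge t\cdot\cost(\opt)$, the two inequalities, and nonnegativity has value $\tfrac{15}{8}$; I anticipate needing, in addition to the two comparisons against $\opt$, the two cross‑comparisons obtained by inserting the clusters of $\ls1$ into $\ls2$ under $w'$ and the clusters of $\ls2$ into $\ls1$ under $w$, which are what force $\ls2$ to pay elsewhere whenever it keeps the now‑expensive edges of $F$ internal. The crux is guessing the right multipliers for this combination — equivalently, building the extremal instance that makes $\tfrac{15}{8}$ tight — because neither local‑optimality inequality alone does better than $2$: the gain is entirely due to the flip moving the ``doubly counted'' edges in $\ls2$'s analysis off of the ``doubly counted'' edges in $\ls1$'s analysis, so that the cheaper of the two solutions must already have cost below $2\cost(\opt)$.
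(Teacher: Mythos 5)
Your first two paragraphs correctly reproduce the backbone of the paper's argument: the refined local-optimality inequality with the lower-order terms kept visible is Lemma~\ref{lem:prop-ls}, and your two ``easy cases'' (either $\ls1$ and $\opt$ share at least $\frac18\cost(\opt)$ worth of mistakes, or $\ls2$ re-cuts at least a constant fraction of the flipped edges $F$) are exactly the content of Lemmas~\ref{lem:ls1} and~\ref{lem:ls2}. The gap is in the remaining case, which you correctly identify as the real work but do not solve. The paper does \emph{not} close this case with a linear program over pair-type counts; it introduces a genuinely new ingredient, the operation $\mathrm{Pivot}(\opt,\ls1,\ls2)$, which \emph{constructs a fourth clustering} $\calC$ from the three at hand and proves (Lemmas~\ref{lem:pivot-crux} and~\ref{lem:pivot-special}, via a charging argument inside the pivot procedure) that $\cost(\calC)$ is at most twice the total weight of ``special'' pairs, essentially $\cost^-(\opt)+\cost^-(\ls1)+\cost^-(\ls2)$ plus the three pairwise overlaps of cut-edge sets. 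Under the hypothesis that both local optima fail to be $(2-\delta)$-competitive, Lemma~\ref{lem:special-cost} makes that quantity less than $4\delta\cost(\opt)$, so $\cost(\calC)<8\delta\cost(\opt)$, and the contradiction for $\delta=\frac18$ comes from the \emph{global} optimality of $\opt$ applied to this newly built clustering.

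Your proposed LP has no mechanism to supply that step. All the constraints you list are consequences of local optimality of $\ls1$ and $\ls2$, and in the bad case they only say that $\opt$, $\ls1$, $\ls2$ pairwise cut nearly disjoint edge sets and pay almost nothing on non-edges; this is precisely Lemma~\ref{lem:special-cost}, whose left-hand side is a sum of nonnegative terms, so it yields no contradiction on its own for any $\delta>0$. The cross-comparisons you hope to add are too weak: inserting the clusters of $\ls2$ into $\ls1$ gives $\cost(\ls1)\le 2\cost(\ls2)-\cdots$, and under the contradiction hypothesis $\cost(\ls2)\approx 2\cost(\opt)$, so the right-hand side is about $4\cost(\opt)$. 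What is missing is a constraint asserting that some clustering constructed from the three has cost at least $\cost(\opt)$; producing such a clustering with a provable cost bound is the combinatorial heart of the proof, and it cannot be recovered from the sixteen aggregate pair-type counts, since the argument of Lemma~\ref{lem:pivot-crux} depends on the partition structure of the three clusterings (the pivot cell being chosen largest among the unassigned coordinate classes), not just on how many pairs fall into each type. The paper's $3\times5\times5$ hard instance at the end of Section~\ref{sec:warmup} illustrates that configurations nearly saturating your inequalities do exist, so without the pivot your LP would not certify any ratio strictly below $2$, let alone $\frac{15}{8}$.
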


We start the proof with a few observations about local optimum.
\begin{lemma}\label{lem:prop-ls}
Let $\ls$ be a local optimum and let $\calC$ be a clustering scheme. Then, the following holds.
\begin{align}
\cost^-(\calC)+2\cost^+(\calC) &\ge w(E^+\cap\Ext(\ls)\cap\Int(\calC))+2w(E^+\cap\Ext(\ls)\cap\Ext(\calC))\nonumber\\
    &\quad +w(E^-\cap\Int(\ls)\cap\Int(\calC))+2w(E^-\cap\Int(\ls)\cap\Ext(\calC)),\label{eq:prop-ls}\\
\cost(\ls) &\le 2\cost(\calC)-\cost^-(\calC)-w(E^-\cap\Int(\ls)\cap\Ext(\calC))\nonumber\\
  &\quad -w(E^+\cap\Ext(\ls)\cap\Ext(\calC)),\label{eq:prop-ls-2}\\
\cost(\ls) &\le 2\cost(\calC)-\cost^-(\calC)-w(E^+\cap\Ext(\ls)\cap\Ext(\calC)),\label{eq:ls-cost-1}\\
\cost(\ls) &\le 2\cost(\calC)-\cost^-(\ls)-w(E^+\cap\Ext(\ls)\cap\Ext(\calC)).\label{eq:ls-cost-2}
\end{align}
\end{lemma}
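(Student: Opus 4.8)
The plan is to derive all four inequalities from a single ``swap inequality'' obtained by offering each cluster of $\calC$ to the local optimum. Fix a cluster $C\in\calC$. By local optimality $\cost(\ls+C)-\cost(\ls)\ge 0$, and the pairs whose internal/external status differs between $\ls$ and $\ls+C$ are exactly those with at least one endpoint in $C$: pairs in $\Int(C)=\binom{C}{2}$ become internal, and pairs in $\Ext(C)$ become external. Going through the four cases (edge or nonedge, internal or external in $\ls$) and reading off the signed change in cost in each gives
\[ \cost(\ls+C)-\cost(\ls) \;=\; w(E^-\cap\Ext(\ls)\cap\Int(C)) - w(E^+\cap\Ext(\ls)\cap\Int(C)) + w(E^+\cap\Int(\ls)\cap\Ext(C)) - w(E^-\cap\Int(\ls)\cap\Ext(C)) \;\ge\; 0. \]
This sign-by-sign case check is the only genuinely fiddly step; everything afterwards is bookkeeping with nonnegative quantities.

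Next I would sum this over all $C\in\calC$. A pair in $\Int(\calC)$ lies in $\Int(C)$ for exactly one cluster $C$, whereas a pair in $\Ext(\calC)$ lies in $\Ext(C)$ for exactly two clusters (those containing its two endpoints). Hence the $\Int(C)$-terms collapse to their $\Int(\calC)$-versions and the $\Ext(C)$-terms acquire a factor $2$, yielding the master inequality
\[ w(E^-\cap\Ext(\ls)\cap\Int(\calC)) + 2\,w(E^+\cap\Int(\ls)\cap\Ext(\calC)) \;\ge\; w(E^+\cap\Ext(\ls)\cap\Int(\calC)) + 2\,w(E^-\cap\Int(\ls)\cap\Ext(\calC)). \tag{$\star$} \]
Then \eqref{eq:prop-ls} is just $(\star)$ rewritten: expand $\cost^-(\calC)=w(E^-\cap\Int(\calC))$ and $\cost^+(\calC)=w(E^+\cap\Ext(\calC))$, split each of $E^-\cap\Int(\calC)$ and $E^+\cap\Ext(\calC)$ according to whether the pair is internal or external in $\ls$, and observe that the terms $w(E^-\cap\Int(\ls)\cap\Int(\calC))$ and $2\,w(E^+\cap\Ext(\ls)\cap\Ext(\calC))$ cancel from both sides, leaving precisely $(\star)$.

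Finally, \eqref{eq:prop-ls-2}, \eqref{eq:ls-cost-1} and \eqref{eq:ls-cost-2} all follow from \eqref{eq:prop-ls} by elementary rearrangement. For \eqref{eq:prop-ls-2}, write $2\cost(\calC)-\cost^-(\calC)=2\cost^+(\calC)+\cost^-(\calC)$, lower-bound it via \eqref{eq:prop-ls}, and subtract the two displayed terms $w(E^-\cap\Int(\ls)\cap\Ext(\calC))$ and $w(E^+\cap\Ext(\ls)\cap\Ext(\calC))$; what is left is exactly the four pieces into which $\cost(\ls)=\cost^+(\ls)+\cost^-(\ls)$ splits once $\cost^+(\ls)$ and $\cost^-(\ls)$ are decomposed along $\Int(\calC)/\Ext(\calC)$. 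Inequality \eqref{eq:ls-cost-1} is the weakening of \eqref{eq:prop-ls-2} obtained by dropping the nonnegative term $w(E^-\cap\Int(\ls)\cap\Ext(\calC))$. For \eqref{eq:ls-cost-2} I would run the same computation but instead of the full $\cost^-(\calC)$ keep one copy of $w(E^-\cap\Int(\calC))$ and bound it below by $w(E^-\cap\Int(\ls)\cap\Int(\calC))$; together with $\cost^-(\ls)=w(E^-\cap\Int(\ls)\cap\Int(\calC))+w(E^-\cap\Int(\ls)\cap\Ext(\calC))$ this produces exactly the stated bound. The main obstacle, as noted, is getting the per-cluster case analysis and the factor-of-two counting right; no later step requires anything beyond adding and subtracting nonnegative terms.
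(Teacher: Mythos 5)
Your proposal is correct and follows essentially the same route as the paper: the per-cluster swap inequality from local optimality, summed over the clusters of $\calC$ with the observation that $\Int(C)$-terms are counted once and $\Ext(C)$-terms twice, yields \eqref{eq:prop-ls}, and the remaining three inequalities follow by the same rearrangements (your derivation of \eqref{eq:ls-cost-2} via $w(E^-\cap\Int(\ls)\cap\Int(\calC))\le\cost^-(\calC)$ is exactly the paper's estimate \eqref{eq:ls-extra}). The only cosmetic difference is that you cancel common terms to state the master inequality $(\star)$ before identifying it with \eqref{eq:prop-ls}, whereas the paper sums the uncancelled form directly.
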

\begin{proof}
Since $\ls$ is a local optimum, for every cluster $C$ of $\calC$
it holds that $\cost(\ls + C) \geq \cost(\ls)$. That is,
\begin{align*}
w(E^- \cap \Int(C)) + w(E^+ \cap \Ext(C)) &\geq w(E^+ \cap \Int(C) \cap \Ext(\ls)) + w(E^+ \cap \Ext(C) \cap \Ext(\ls)) \\
    &\quad + w(E^- \cap \Int(C) \cap \Int(\ls)) + w(E^- \cap \Ext(C) \cap \Int(\ls)).
\end{align*}
Summing the above over every cluster $C$ of $\calC$ proves~\eqref{eq:prop-ls}, as every pair
of $\Ext(\calC)$ will appear twice and every pair of $\Int(\calC)$ will appear once
in the summands.

Inequality~\eqref{eq:prop-ls-2} follows from~\eqref{eq:prop-ls} by simple algebraic manipulation,
  noting that $\cost(\calC) = \cost^+(\calC) + \cost^-(\calC)$ and
  $\cost(\ls) = w(E^+\cap\Ext(\ls)\cap\Int(\calC))+w(E^+\cap\Ext(\ls)\cap\Ext(\calC))
  + +w(E^-\cap\Int(\ls)\cap\Int(\calC))+w(E^-\cap\Int(\ls)\cap\Ext(\calC))$.
 
Inequality~\eqref{eq:ls-cost-1} is just~\eqref{eq:prop-ls-2} with
the nonnegative term $w(E^- \cap \Int(\ls) \cap \Ext(\calC))$ dropped.

For~\eqref{eq:ls-cost-2}, we use the following immediate estimate:
\begin{equation}\label{eq:ls-extra}
 w(E^-\cap\Int(\ls)\cap\Int(\calC)) \le \cost^-(\calC).
\end{equation}
Inequality~\eqref{eq:ls-cost-2} follows from~\eqref{eq:prop-ls-2} by first 
adding~\eqref{eq:ls-extra} to the sides and then observing
that $\cost^-(\ls) = w(E^- \cap \Int(\ls) \cap \Int(\calC)) + w(E^- \cap \Int(\ls) \cap \Ext(\calC))$.
\end{proof}

Let $\delta = \frac{1}{8}$. In the proof, it is instructive to think about $\delta$
as a small constant; only in the end we will observe that setting $\delta = \frac{1}{8}$ gives
the best result.

Let $\opt$ be an optimum solution. We say that a clustering scheme $\calC$ is
\emph{$\alpha$-competitive} if $\cost(\calC) \leq \alpha \cost(\opt)$.

The structure of the proof of Theorem~\ref{thm:two-ls} is as follows:
assuming that both $\ls1$ and $\ls2$ are not $(2-\delta)$-competitive,
we construct a solution which is better than $8\delta$-competitive.
This finishes the proof, as we cannot have a solution better than the optimum one. 

To execute this line, we analyse in detail costs of various sets of (non)edges of $G$
using inequalities of Lemma~\ref{lem:prop-ls}.
\begin{lemma}
\label{lem:ls1}
If $\ls1$ is not $(2-\delta)$-competitive, then the following holds.
\begin{align*}
\cost^-(\opt)+w(E^+\cap\Ext(\ls1)\cap\Ext(\opt)) &< \delta\cost(\opt),\\
\cost^-(\ls1)+w(E^+\cap\Ext(\ls1)\cap\Ext(\opt)) &< \delta\cost(\opt).
\end{align*}
\end{lemma}

\begin{proof}
Immediately from~\eqref{eq:ls-cost-1} and~\eqref{eq:ls-cost-2} for $\ls1$ and $\opt$.
\end{proof}

Here we use $\cost'$ to denote the cost on the instance $(G,w')$.
By the definition of $w'$, for any clustering scheme $\calC$,
\begin{equation}
\label{eqn:new-cost}
    \cost'(\calC) = \cost(\calC)+w(E^+\cap\Ext(\ls1)\cap\Ext(\calC)).
\end{equation}

We remark that $\opt$ still denotes the optimal solution of $(G,w)$, and the ``$-$'' cost for any clustering scheme remains the same in $(G,w')$.

\begin{lemma}
\label{lem:ls2}
    If $\ls2$ is not $(2-\delta)$-competitive, then 
\begin{align*}
& \cost^-(\ls2)+w(E^+\cap\Ext(\ls2)\cap\Ext(\opt))+w(E^+\cap\Ext(\ls1)\cap\Ext(\ls2)) \\
  &\qquad < \delta\cost(\opt) + 2w(E^+\cap\Ext(\ls1)\cap\Ext(\opt)).
\end{align*}
\end{lemma}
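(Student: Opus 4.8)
The plan is to mirror the proof of Lemma~\ref{lem:ls1}, but now working in the modified instance $(G,w')$ where $\ls2$ is a local optimum. I would apply inequality~\eqref{eq:ls-cost-2} of Lemma~\ref{lem:prop-ls} to $\ls=\ls2$ and $\calC=\opt$, but with respect to the weight function $w'$ and its cost $\cost'$. This gives
\[
\cost'(\ls2) \le 2\cost'(\opt) - \cost^{-}(\ls2) - w'(E^+\cap\Ext(\ls2)\cap\Ext(\opt)),
\]
where I have used that the ``$-$'' cost is unchanged between $w$ and $w'$ (since $w'$ only reweights edges, not non-edges, as remarked in the text). The next step is to translate each of the three $\cost'$ quantities back into $\cost$ and $w$ using the identity~\eqref{eqn:new-cost}: namely $\cost'(\ls2) = \cost(\ls2) + w(E^+\cap\Ext(\ls1)\cap\Ext(\ls2))$, $\cost'(\opt) = \cost(\opt) + w(E^+\cap\Ext(\ls1)\cap\Ext(\opt))$, and for the last term I must expand $w'(E^+\cap\Ext(\ls2)\cap\Ext(\opt))$ as $w(E^+\cap\Ext(\ls2)\cap\Ext(\opt))$ plus the extra weight contributed by doubled edges, i.e.\ plus $w(E^+\cap\Ext(\ls1)\cap\Ext(\ls2)\cap\Ext(\opt))$.

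Substituting these in and rearranging, the left-hand side should collect into
\[
\cost(\ls2) + \cost^-(\ls2) + w(E^+\cap\Ext(\ls2)\cap\Ext(\opt)) + w(E^+\cap\Ext(\ls1)\cap\Ext(\ls2)) + w(E^+\cap\Ext(\ls1)\cap\Ext(\ls2)\cap\Ext(\opt))
\]
bounded above by $2\cost(\opt) + 2w(E^+\cap\Ext(\ls1)\cap\Ext(\opt))$. Then I would invoke the hypothesis that $\ls2$ is not $(2-\delta)$-competitive with respect to the \emph{original} cost, i.e.\ $\cost(\ls2) > (2-\delta)\cost(\opt)$, to replace $\cost(\ls2)$ on the left by $(2-\delta)\cost(\opt)$; the $2\cost(\opt)$ terms cancel up to $\delta\cost(\opt)$, yielding
\[
\cost^-(\ls2) + w(E^+\cap\Ext(\ls2)\cap\Ext(\opt)) + w(E^+\cap\Ext(\ls1)\cap\Ext(\ls2)) + w(E^+\cap\Ext(\ls1)\cap\Ext(\ls2)\cap\Ext(\opt)) < \delta\cost(\opt) + 2w(E^+\cap\Ext(\ls1)\cap\Ext(\opt)).
\]
Finally I would simply drop the nonnegative term $w(E^+\cap\Ext(\ls1)\cap\Ext(\ls2)\cap\Ext(\opt))$ to arrive at exactly the stated inequality.

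The main obstacle I anticipate is bookkeeping: carefully verifying that~\eqref{eq:ls-cost-2} is legitimately applied in the reweighted instance (one must check that the derivation of Lemma~\ref{lem:prop-ls} only used generic properties of a local optimum and an arbitrary weight function, which it does, so $\ls2$ being a local optimum of $(G,w')$ suffices), and getting the expansion of $w'(E^+\cap\Ext(\ls2)\cap\Ext(\opt))$ right — the subtlety is that an edge in $\Ext(\ls2)\cap\Ext(\opt)$ carries doubled weight precisely when it also lies in $\Ext(\ls1)$, which is the source of the extra triple-intersection term that then gets discarded. No step requires a genuinely new idea beyond Lemma~\ref{lem:prop-ls} and identity~\eqref{eqn:new-cost}; the work is purely in tracking which intersections of $\Ext(\ls1),\Ext(\ls2),\Ext(\opt)$ receive the doubled weight.
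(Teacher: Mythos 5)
Your proposal is correct and matches the paper's argument: the paper likewise applies~\eqref{eq:ls-cost-2} to $\ls2$ and $\opt$ under $w'$, translates back via~\eqref{eqn:new-cost}, and invokes the non-competitiveness hypothesis. The only cosmetic difference is that the paper discards the doubled-weight contribution early by bounding $w'(E^+\cap\Ext(\ls2)\cap\Ext(\opt)) \ge w(E^+\cap\Ext(\ls2)\cap\Ext(\opt))$, whereas you carry the exact triple-intersection term $w(E^+\cap\Ext(\ls1)\cap\Ext(\ls2)\cap\Ext(\opt))$ and drop it at the end — the same step in a different place.
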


\begin{proof}
    By Equation \eqref{eqn:new-cost}, we have:
    \begin{align*}
    \cost'(\opt) &= \cost(\opt)+w(E^+\cap\Ext(\ls1)\cap\Ext(\opt)),\mathrm{\ and}\\
    \cost'(\ls2) &= \cost(\ls2)+w(E^+\cap\Ext(\ls1)\cap\Ext(\ls2)).
    \end{align*}
    
By applying~\eqref{eq:ls-cost-2} to $\ls2$ and $\opt$ in $(G,w')$, we obtain:
    \begin{align*}
    \cost'(\ls2) &\le 2\cost'(\opt)-\cost^-(\ls2)-w'(E^+\cap\Ext(\ls2)\cap\Ext(\opt))\\
      &\le 2\cost'(\opt)-\cost^-(\ls2)-w(E^+\cap\Ext(\ls2)\cap\Ext(\opt)).
    \end{align*}
    Here, the last inequality follows from the fact that all weights
    of $w$ are nonnegative and after the flip $w'$ is $w$ with some weights doubled.

    Combining them, we get:
    \begin{align*}
        \cost(\ls2) \le & \ 
        2(\cost(\opt)+w(E^+\cap\Ext(\ls1)\cap\Ext(\opt)|)) - \cost^-(\ls2) \\
        & -w(E^+\cap\Ext(\ls2)\cap\Ext(\opt))-w(E^+\cap\Ext(\ls1)\cap\Ext(\ls2)).
    \end{align*}

    The lemma follows.
\end{proof}

We combine the estimates of Lemmata~\ref{lem:ls1} and~\ref{lem:ls2} into the following.
\begin{lemma}
\label{lem:special-cost}
    If both $\ls1$ and $\ls2$ are not $(2-\delta)$-competitive, then
    \begin{align*}
    &\cost^-(\opt)+\cost^-(\ls1)+\cost^-(\ls2)+w(E^+\cap\Ext(\ls1)\cap\Ext(\opt))\\
    &+w(E^+\cap\Ext(\ls2)\cap\Ext(\opt))+w(E^+\cap\Ext(\ls1)\cap\Ext(\ls2))\\
    &\qquad < 4\delta\cost(\opt).
    \end{align*}
\end{lemma}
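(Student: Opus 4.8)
The plan is to prove this purely by adding up the inequalities already in hand: the two from Lemma~\ref{lem:ls1} and the one from Lemma~\ref{lem:ls2}. To keep the bookkeeping light, abbreviate $X := w(E^+\cap\Ext(\ls1)\cap\Ext(\opt))$. Under the hypothesis that $\ls1$ is not $(2-\delta)$-competitive, Lemma~\ref{lem:ls1} gives both $\cost^-(\opt)+X < \delta\cost(\opt)$ and $\cost^-(\ls1)+X < \delta\cost(\opt)$; under the hypothesis that $\ls2$ is not $(2-\delta)$-competitive, Lemma~\ref{lem:ls2} gives $\cost^-(\ls2)+w(E^+\cap\Ext(\ls2)\cap\Ext(\opt))+w(E^+\cap\Ext(\ls1)\cap\Ext(\ls2)) < \delta\cost(\opt)+2X$.

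Summing these three strict inequalities, the left-hand sides contribute $2X$ and the right-hand sides also contribute $2X$, so this term cancels and one is left with
\[
\cost^-(\opt)+\cost^-(\ls1)+\cost^-(\ls2)+w(E^+\cap\Ext(\ls2)\cap\Ext(\opt))+w(E^+\cap\Ext(\ls1)\cap\Ext(\ls2)) < 3\delta\cost(\opt).
\]
This is exactly the target inequality except that the left side is missing one copy of $X=w(E^+\cap\Ext(\ls1)\cap\Ext(\opt))$ and the right side is missing one copy of $\delta\cost(\opt)$. The remaining step is to supply precisely that discrepancy: the first inequality of Lemma~\ref{lem:ls1}, together with the nonnegativity of $\cost^-(\opt)$, already yields $X < \delta\cost(\opt)$. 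Adding this last estimate to the displayed inequality produces the claimed bound $< 4\delta\cost(\opt)$.

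There is no real obstacle here — the argument is a one-line summation. The only points that require a bit of care are noticing that the $2X$ terms cancel when the three inequalities are added (so that the right-hand side becomes $3\delta\cost(\opt)$ rather than $3\delta\cost(\opt)+2X$), and remembering that a fourth, separate application of Lemma~\ref{lem:ls1}'s first inequality (with the $\cost^-(\opt)$ term dropped) is what upgrades $3\delta\cost(\opt)$ to $4\delta\cost(\opt)$ while restoring the missing $X$ on the left.
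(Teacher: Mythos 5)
Your proof is correct and is essentially the same as the paper's: both arguments combine Lemma~\ref{lem:ls2} with the two inequalities of Lemma~\ref{lem:ls1} (the paper substitutes the bound from Lemma~\ref{lem:ls2} first and then applies Lemma~\ref{lem:ls1} three times, which amounts to the same bookkeeping as your summation plus the extra application of $X<\delta\cost(\opt)$). The cancellation of the $2X$ terms and the use of nonnegativity of $\cost^-(\opt)$ are both handled correctly.
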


\begin{proof}
    By Lemma~\ref{lem:ls2}, 
    the left-hand-side is at most
    \[\cost^-(\opt)+\cost^-(\ls1)+3w(E^+\cap\Ext(\ls1)\cap\Ext(\opt))+\delta\cost(\opt),\]
    which, by Lemma~\ref{lem:ls1}, is at most
    \[4\delta\cost(\opt).\]
\end{proof}

We now use Lemma~\ref{lem:special-cost} to show a clustering scheme of cost 
strictly less than $8\delta\cost(\opt)$.

To this end, we will need the following \emph{Pivot} operation.
Let $\calC_x$, $\calC_y$, $\calC_z$ be three clustering schemes of $G$.
Within each clustering scheme, number the clusters with positive integers.
For each vertex of $G$, we associate it with a 3-dimensional vector
$(x,y,z) \in \mathbb{Z}_+^3$ if it is in $x$-th cluster in $\calC_x$, in the
$y$-th cluster in $\calC_y$, and in the $z$-th cluster in $\calC_z$.
We define the distance between two vertices $u$ and $v$, denoted $d(u,v)$, to be
the Hamming distance of their coordinate vectors.

The operation $\mathrm{Pivot}(\calC_x, \calC_y, \calC_z)$ produces a new clustering scheme
$\calC$ as follows. 
Initially, all vertices of $G$ are unassigned to clusters, that is, we start with $\calC = \emptyset$.
While $\bigcup \calC \neq V(G)$, that is, not all vertices of $G$ are assigned into clusters,
we define a new cluster in $G$ and add to $\calC$.
The new cluster is found as follows: we take a coordinate $(x,y,z)$ with the maximum number of vertices
of $V(G) \setminus \bigcup \calC$ assigned to it (breaking ties arbitrarily),
 and creates a new cluster
consisting of all vertices of $V(G) \setminus \bigcup \calC $ within distance at most $1$
from $(x,y,z)$. 
This concludes the definition of the \emph{Pivot} operation.

The following lemma encapsulates the crux of the analysis of the \emph{Pivot} operation.
\begin{lemma}\label{lem:pivot-crux}
Let $G$ be an unweighted \textsc{Correlation Clustering} instance, let $\calC_x$, $\calC_y$, $\calC_z$ be three clustering schemes, and let $\calC := \mathrm{Pivot}(\calC_x,\calC_y,\calC_z)$.
Divide each of $E^+$ and $E^-$ into 4 sets by the distances. $E^+_i$ denotes the set of 
pairs of $E^+$ with distance $i$, and $E^-_i$ denotes the set of pairs of $E^-$ with distance $i$.
Call pairs in $E^+_0\cup E^+_1\cup E^-_3$ \emph{normal}, and call other pairs \emph{special}.
Then, $(E^+ \cap \Ext(\calC))$ contains no edges of $E^+_0$ and $(E^- \cap \Int(\calC))$ contains
no pairs of $E^-_3$. 
Furthermore, the number of special edges is at least half of the size of $(E^+ \cap \Ext(\calC)) \cup (E^- \cap \Int(\calC))$. %
\end{lemma}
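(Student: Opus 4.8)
The plan is to treat the three assertions separately. The first two are immediate from the triangle inequality for the Hamming distance $d(\cdot,\cdot)$ on coordinate vectors: if $d(u,v)=0$ then $u$ and $v$ have identical coordinate vectors, so at every step of $\mathrm{Pivot}$ they are simultaneously inside or simultaneously outside the radius-$1$ ball of the chosen coordinate, hence are never separated and $E^+_0\cap\Ext(\calC)=\emptyset$; and if $u,v$ end up in a common cluster, created around a coordinate $\pi$, then $d(u,\pi)\le 1$ and $d(v,\pi)\le 1$, so $d(u,v)\le 2$, whence no distance-$3$ pair is internal, i.e.\ $E^-_3\cap\Int(\calC)=\emptyset$ (and, reading the same fact for edges, $E^+_3\subseteq\Ext(\calC)$).

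For the ``half'' statement I would first reduce it. Put $D:=(E^+\cap\Ext(\calC))\cup(E^-\cap\Int(\calC))$. By the first two parts $D$ contains no pair of $E^+_0$ and none of $E^-_3$; moreover every pair of $E^-\cap\Int(\calC)$ has distance at most $2$ and is therefore special, as is every pair of $E^+_2\cup E^+_3$. So the only normal pairs in $D$ are those of $E^+_1\cap\Ext(\calC)$, and $|D|=|E^+_1\cap\Ext(\calC)|+\#\{\text{special pairs in }D\}$. It therefore suffices to exhibit an injection from $E^+_1\cap\Ext(\calC)$ into the set of all special pairs, since then $|D|\le 2\cdot\#\{\text{special pairs}\}$, which is the claim.

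To build the injection, take $uv\in E^+_1\cap\Ext(\calC)$ and let $t$ be the first step at which exactly one of $u,v$ — say $u$ — joins the cluster $N$ created that step, around a coordinate $\pi$; both are unclustered at the start of step $t$, and $d(u,\pi)=1$ and $d(v,\pi)=2$ (both using $d(u,v)=1$). Relabeling coordinates so that $v$ and $\pi$ differ exactly in the first two positions, one checks that the only distance-$1$-from-$\pi$ buckets at distance $1$ from $v$ are $A=(\pi_1,v_2,\pi_3)$ and $A'=(v_1,\pi_2,\pi_3)$, so $u$ lies in the still-unclustered part of $A$ or of $A'$ — both fully absorbed into $N$, as is the unclustered part $P$ of bucket $\pi$, and $|P|$ is at least the size of the unclustered part of $A$ and of $A'$ by the pivot rule. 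Grouping the cut distance-$1$ edges by their non-joining endpoint $v$ (so $d(v,\pi)=2$) and separating step $t$, I would then, within a group: if some edge has its other endpoint in $A$, map the $A$-edges injectively to pairs $\{p,v\}$ with $p\in P$ and each $A'$-edge to $\{a,u'\}$ where $a\in A$ is a fixed $A$-endpoint and $u'$ is its $A'$-endpoint; otherwise map the (necessarily $A'$-)edges injectively to pairs $\{p,v\}$, $p\in P$. Every target has distance $2$ (a vertex of $P$ to $v$, a vertex of $A$ to one of $A'$), hence is special, and within a group there are never more edges than available $\{p,v\}$-pairs.

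The step I expect to be the real obstacle — besides hitting on the idea of charging to $\{p,v\}$-pairs via the largest bucket — is verifying that this assignment is \emph{globally} injective. Within a single step this is easy: $\{p,v\}$ contains the non-joiner $v\notin N$ while a cross pair $\{a,u'\}$ lies inside $N$, and for two distinct non-joiners $v\ne v'$ of that step the bucket pairs $\{A_v,A'_v\}$, $\{A_{v'},A'_{v'}\}$ (each determined by its non-joiner) share at most one bucket, which forces the two cross pairs to differ. Across steps, every pair used is ``step-local'' — in $\{p,v\}$ the vertex $p$ is clustered exactly at its own step and $v$ is still unclustered then, while a cross pair lies inside one cluster — so a short case check rules out a pair being a target at two different steps. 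Granting this bookkeeping, $E^+_1\cap\Ext(\calC)$ injects into the special pairs, and with the reduction we get $\#\{\text{special pairs}\}\ge |D|/2$. (A cruder assignment still gives the conclusion with a worse constant; obtaining $\frac{1}{2}$ exactly is what forces using the largest-bucket property and splitting the charge between $\{p,v\}$-pairs and cross pairs.)
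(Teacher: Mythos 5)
Your first two claims, the reduction of the ``half'' statement to bounding $|E^+_1\cap\Ext(\calC)|$ by the number of special pairs, and the identification of the two candidate buckets $A,A'$ and the pivot bucket $P$ with $n_P\ge\max\{n_A,n_{A'}\}$ all match the paper's argument. The gap is in the injection itself, and it is not just bookkeeping. First, your cross-group injectivity argument asserts that distinct non-joiners $v\ne v'$ determine bucket pairs $\{A_v,A'_v\}$, $\{A_{v'},A'_{v'}\}$ sharing at most one bucket. But a coordinate bucket typically contains many vertices, and whenever $v$ and $v'$ lie in the \emph{same} distance-$2$ bucket $\beta$ they determine the \emph{same} pair $\{A,A'\}$; if both groups are in your first case, fix the same $a\in A$, and share an $A'$-neighbour $u'$, then the edges $u'v$ and $u'v'$ are both sent to $\{a,u'\}$. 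Second, and more fundamentally, the routing is over capacity: you send \emph{every} $A'$-edge of \emph{every} group with non-joiner in $\beta$ to an $A\times A'$ cross pair, but there are only $n_A\, n_{A'}$ such pairs, while the number of those edges can be as large as $n_\beta\, n_{A'}$ with $n_\beta\gg n_A$ (take $n_A=1$ and every unclustered vertex of $\beta$ adjacent to the unique vertex of $A$ and to all of $A'$). So no choice of the fixed vertices $a_v$ repairs injectivity; most of those edges must instead be charged to $\{p,v\}$ pairs, of which you use only $\deg(v,A)$ out of the $n_P$ available per group.

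The fix is to abandon the explicit injection, which the lemma does not require: a count comparison suffices. Per step and per far bucket $\beta$, the number of newly cut $E^+_1$ edges is at most $n_\beta(n_A+n_{A'})$, while the special pairs between $P$ and $\beta$ together with those between $A$ and $A'$ number $n_\beta n_P+n_A n_{A'}$ (these bucket-to-bucket pair sets are complete, all at distance $2$, each has an endpoint absorbed into the new cluster, and they are disjoint across far buckets and across steps). The inequality $n_\beta(n_A+n_{A'})\le n_\beta n_P+n_A n_{A'}$ follows from $n_P\ge\max\{n_\beta,n_A,n_{A'}\}$, and summing over far buckets and steps gives exactly the factor $\tfrac12$. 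This is the paper's argument; your construction is an attempt to realize that counting bound as a matching, which in general would require a Hall-type argument rather than the local assignment you describe.
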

\begin{proof}
    The first claim follows immediately by definition: two vertices with distance $0$ will always be put into the same cluster, while two vertices with distance $3$ will never be put into the same cluster.

    Hence, all pairs in $E^- \cap \Int(\calC)$ are special, and the only
    edges of $E^+ \cap \Ext(\calC)$ that are normal are from $E_1^+$.
Observe that, while creating a cluster with a pivot in $(x,y,z)$,
an edge $uv \in E_1^+$ ends up in $\Ext(\calC)$ if $d(u,(x,y,z)) = 1$
and $d(v,(x,y,z)) = 2$ (or vice versa).
We will charge such edges
to pairs of $E_2 := E_2^- \cup E_2^+$ with at least one endpoint in the newly created cluster.
Note that $E_2$ consists only of special pairs.

Consider a step of the $\mathrm{Pivot}(\calC_x, \calC_y, \calC_z)$ operation
creating a cluster $C$ with pivot $(x,y,z)$.
To finish the proof of the lemma, it suffices to show
that the number of edges $uw \in E^+_1$ where $u \in C$ and $w \notin C$ remains in the graph
is not larger than the number of pairs $u'w' \in E_2$ with $|\{u',w'\} \cap C| \geq 1$ 
(i.e., the number of edges of $E_2$ that will be deleted when deleting the cluster $C$.)

To this end, for coordinate vector $(x',y',z')$, let $n_{(x',y',z')}$ be the number of vertices 
    at coordinate $(x',y',z')$ that are
    unassigned to clusters just before the cluster $C$ is created.

    We say that an edge $uw \in E_1^+$ is \emph{cut} if one endpoint of $uw$ is in $C$,
    and the second endpoint is outside $C$ and unassigned to any cluster at the moment
    of creating $C$.

    If an edge
    from $E_1^+$ is cut, its endpoint not in $C$ has coordinate vector
    at distance $2$ to $(x,y,z)$, say $(x',y',z)$. The number of $E^+_1$ edges cut to this coordinate is at most $v_{(x',y',z)}(v_{(x',y,z)}+v_{(x,y',z)})$. We charge them to the $E_2$ edges between $(x,y,z)$ and $(x',y',z)$ and the $E_2$ edges between $(x',y,z)$ and $(x,y',z)$. Because these
    edge sets are complete, it suffices to show that
    \[ n_{(x',y',z)}(n_{(x',y,z)}+n_{(x,y',z)}) \le n_{(x',y',z)}n_{(x,y,z)}+n_{(x',y,z)}n_{(x,y',z)}.\]
    This easily follows from the fact that $n_{(x,y,z)}\ge\max\{n_{(x',y',z)},n_{(x',y,z)},n_{(x,y',z)}\}$.
    This finishes the proof of the lemma.
\end{proof}
Lemma~\ref{lem:pivot-crux} suggests to bound the number of the special pairs. This is how we will do it.
\begin{lemma}\label{lem:pivot-special}
Let $G$, $\calC_x$, $\calC_y$, $\calC_z$, $\calC$, $E^+_i$, and $E^-_i$ for $i=0,1,2,3$ be as in Lemma~\ref{lem:pivot-crux}.
Then,
  \[ w(E^-_0 \cup E^-_1 \cup E^-_2) \leq w(E^- \cap \Int(\calC_x)) + w(E^- \cap \Int(\calC_y)) + w(E^- \cap \Int(\calC_z)), \]
and
  \[ w(E^+_2 \cup E^+_3) \leq 
    w(E^+ \cap \Ext(\calC_x) \cap \Ext(\calC_y)) + 
    w(E^+ \cap \Ext(\calC_y) \cap \Ext(\calC_z)) + 
    w(E^+ \cap \Ext(\calC_z) \cap \Ext(\calC_x)).\]
\end{lemma}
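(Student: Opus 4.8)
The plan is to prove the two inequalities separately, each by a direct counting argument based on the Hamming-distance coordinates $(x,y,z)$ attached to each vertex by the three clustering schemes. The key observation driving both is that for any pair $uv$, the Hamming distance $d(u,v)$ equals the number of coordinates in which $u$ and $v$ \emph{disagree}, i.e.\ the number of the three schemes $\calC_x,\calC_y,\calC_z$ in which $u$ and $v$ are \emph{separated}. Equivalently, $3 - d(u,v)$ counts the number of schemes in which $u$ and $v$ are \emph{together}. This lets me translate membership in $E^-_i$ or $E^+_i$ into statements about how many of the three schemes put the pair inside a cluster.

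For the first inequality, consider a nonedge $uv \in E^-$ with $d(u,v) = i$ for some $i \in \{0,1,2\}$. Then $u$ and $v$ are together in at least $3 - i \ge 1$ of the three schemes, so $uv$ lies in $\Int(\calC_x)$, $\Int(\calC_y)$, or $\Int(\calC_z)$ for at least one choice; hence $uv$ contributes its weight to at least one of the three terms $w(E^- \cap \Int(\calC_x))$, $w(E^- \cap \Int(\calC_y))$, $w(E^- \cap \Int(\calC_z))$. Summing over all $uv \in E^-_0 \cup E^-_1 \cup E^-_2$ gives the claimed bound (in fact with room to spare, since pairs at distance $0$ or $1$ get counted multiple times on the right). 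For the second inequality, take an edge $uv \in E^+$ with $d(u,v) = i$ for $i \in \{2,3\}$. Then $u$ and $v$ are separated in at least $i \ge 2$ of the three schemes, so among the three pairs of schemes $(\calC_x,\calC_y)$, $(\calC_y,\calC_z)$, $(\calC_z,\calC_x)$ there is at least one pair in which $uv$ is separated in \emph{both}; thus $uv$ contributes to at least one of the three terms on the right-hand side. Summing over $E^+_2 \cup E^+_3$ yields the inequality.

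The argument is essentially a pigeonhole count, so I do not anticipate a genuine obstacle; the only thing to be careful about is the bookkeeping of which of the three symmetric terms a given pair lands in, and making sure the ``at least one'' quantifier is used correctly (each pair need only be charged once, and the right-hand sides are sums of nonnegative weights, so over-counting is harmless). A secondary point worth stating explicitly is that both inequalities are purely about the schemes $\calC_x,\calC_y,\calC_z$ and the distance partition of $E^+,E^-$ — the output scheme $\calC$ of the Pivot operation plays no role here, it only entered in Lemma~\ref{lem:pivot-crux}. So the proof can be written compactly by defining, for each pair, the set of schemes (resp.\ pairs of schemes) witnessing its contribution, and observing that distance $i$ forces that set to be nonempty in the required range of $i$.
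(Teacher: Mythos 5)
Your argument is correct and is essentially identical to the paper's proof: both rely on the observation that distance at most $2$ forces the pair to be internal in at least one of the three schemes, and distance at least $2$ forces it to be external in both schemes of at least one of the three pairs of schemes. The paper states this in two sentences; your write-up just makes the pigeonhole bookkeeping explicit.
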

\begin{proof}
Every pair in $E^-_0\cup E^-_1\cup E^-_2$ is present in at least once 
in $\Int(\calC_x)$, $\Int(\calC_y)$, and $\Int(\calC_z)$.
Every edge in $E^+_2\cup E^+_3$ is present at least once in
$\Ext(\calC_x)\cap\Ext(\calC_y)$,
$\Ext(\calC_y)\cap\Ext(\calC_z)$, and
$\Ext(\calC_z)\cap\Ext(\calC_x)$.
\end{proof}  

We show that if $\ls1$ and $\ls2$ are both not $(2-\delta)$-competitive,
then $\calC := \mathrm{Pivot}(\opt, \ls1, \ls2)$ returns a clustering of cost strictly less
than $8\delta \cost(\opt)$. This will be a contradiction with the optimality of $\opt$
for $\delta = \frac{1}{8}$.
That is, we will use $\calC_x = \opt$, $\calC_y = \ls1$, and $\calC_z = \ls2$.

Combining Lemmata~\ref{lem:special-cost} and~\ref{lem:pivot-special} gives immediately the following.
\begin{lemma}
\label{lem:special-edge}
    If both $\ls1$ and $\ls2$ are not $(2-\delta)$-competitive, then the total weight
    of special pairs is less than $4\delta\cost(\opt)$.
\end{lemma}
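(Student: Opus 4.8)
The plan is to combine the three previously established lemmas in a direct counting argument. We want to bound the total weight of special pairs, where by Lemma~\ref{lem:pivot-crux} the special pairs are exactly those in $E^-_0 \cup E^-_1 \cup E^-_2 \cup E^+_2 \cup E^+_3$ (everything outside the normal set $E^+_0 \cup E^+_1 \cup E^-_3$). By Lemma~\ref{lem:pivot-special} applied with $\calC_x = \opt$, $\calC_y = \ls1$, $\calC_z = \ls2$, the weight of $E^-_0 \cup E^-_1 \cup E^-_2$ is at most $\cost^-(\opt) + \cost^-(\ls1) + \cost^-(\ls2)$, since $E^- \cap \Int(\calC_\bullet)$ is exactly what $\cost^-(\calC_\bullet)$ counts. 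Similarly, the weight of $E^+_2 \cup E^+_3$ is at most the sum of the three pairwise-$\Ext$ intersections $w(E^+ \cap \Ext(\opt) \cap \Ext(\ls1)) + w(E^+ \cap \Ext(\ls1) \cap \Ext(\ls2)) + w(E^+ \cap \Ext(\ls2) \cap \Ext(\opt))$.

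Next I would add these two bounds together to see that the total weight of special pairs is at most
\[
\cost^-(\opt)+\cost^-(\ls1)+\cost^-(\ls2) + w(E^+\cap\Ext(\opt)\cap\Ext(\ls1)) + w(E^+\cap\Ext(\ls1)\cap\Ext(\ls2)) + w(E^+\cap\Ext(\ls2)\cap\Ext(\opt)).
\]
This is precisely the left-hand side of the inequality in Lemma~\ref{lem:special-cost} (up to the harmless reordering $\Ext(\opt)\cap\Ext(\ls1) = \Ext(\ls1)\cap\Ext(\opt)$, and noting we are in the unweighted setting so $w(D)=|D|$ throughout). Since both $\ls1$ and $\ls2$ are assumed not $(2-\delta)$-competitive, Lemma~\ref{lem:special-cost} says this sum is strictly less than $4\delta\cost(\opt)$, which is exactly the claimed bound.

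The only subtlety worth spelling out is the bookkeeping: checking that the six terms produced by summing the two parts of Lemma~\ref{lem:pivot-special} match, term for term, the six terms in Lemma~\ref{lem:special-cost}, and that the instance is genuinely unweighted so that Lemma~\ref{lem:pivot-crux} and Lemma~\ref{lem:pivot-special}'s hypotheses are met and $w$-weights coincide with cardinalities. There is no real obstacle here — it is a one-line chaining of inequalities — but one must be careful that Lemma~\ref{lem:pivot-special} is stated for a general weighted instance while Lemma~\ref{lem:pivot-crux} (and the overall Theorem~\ref{thm:two-ls}) is about the uniform-weight case, so the invocation is legitimate and in fact $w(D)=|D|$ makes the two statements speak about the same quantities.

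\begin{proof}
By Lemma~\ref{lem:pivot-crux} applied to $\calC := \mathrm{Pivot}(\opt, \ls1, \ls2)$, the special pairs are exactly the pairs in $E^-_0 \cup E^-_1 \cup E^-_2 \cup E^+_2 \cup E^+_3$. Apply Lemma~\ref{lem:pivot-special} with $\calC_x = \opt$, $\calC_y = \ls1$, $\calC_z = \ls2$. Since the instance is unweighted, $w(D) = |D|$ for every set of pairs $D$, and $w(E^- \cap \Int(\calC_\bullet)) = \cost^-(\calC_\bullet)$. Hence
\[
w(E^-_0 \cup E^-_1 \cup E^-_2) \le \cost^-(\opt) + \cost^-(\ls1) + \cost^-(\ls2),
\]
and
\[
w(E^+_2 \cup E^+_3) \le w(E^+\cap\Ext(\ls1)\cap\Ext(\opt)) + w(E^+\cap\Ext(\ls1)\cap\Ext(\ls2)) + w(E^+\cap\Ext(\ls2)\cap\Ext(\opt)).
\]
Adding these, the total weight of special pairs is at most
\[
\cost^-(\opt)+\cost^-(\ls1)+\cost^-(\ls2) + w(E^+\cap\Ext(\ls1)\cap\Ext(\opt)) + w(E^+\cap\Ext(\ls2)\cap\Ext(\opt)) + w(E^+\cap\Ext(\ls1)\cap\Ext(\ls2)).
\]
Since both $\ls1$ and $\ls2$ are not $(2-\delta)$-competitive, Lemma~\ref{lem:special-cost} bounds the right-hand side above by $4\delta\cost(\opt)$, and the inequality there is strict. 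Therefore the total weight of special pairs is strictly less than $4\delta\cost(\opt)$.
\end{proof}
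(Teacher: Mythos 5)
Your proof is correct and follows exactly the paper's route: the paper proves this lemma in one line by combining Lemma~\ref{lem:special-cost} with Lemma~\ref{lem:pivot-special}, which is precisely the chaining you carry out (your version just spells out the term-by-term matching explicitly).
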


The final lemma below is the first place where we use that $w \equiv 1$ in Theorem~\ref{thm:two-ls}. 
This is necessary for Lemma~\ref{lem:pivot-crux} to be useful, as it considers the number of
pairs, not their weight.

\begin{lemma}
\label{lem:pivot-cost}
    Assume $w \equiv 1$.
    If both $\ls1$ and $\ls2$ are not $(2-\delta)$-competitive, then there exists a solution with cost less than $8\delta\cost(\opt)$.
\end{lemma}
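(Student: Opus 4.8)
The plan is to exhibit the promised cheap clustering explicitly as $\calC := \mathrm{Pivot}(\opt, \ls1, \ls2)$, i.e. to invoke the Pivot operation with $\calC_x = \opt$, $\calC_y = \ls1$, $\calC_z = \ls2$, and then bound $\cost(\calC)$ using the machinery already assembled. The first thing I would observe is that because $w \equiv 1$, the cost of $\calC$ is simply a count: $\cost(\calC) = |E^+ \cap \Ext(\calC)| + |E^- \cap \Int(\calC)|$, and since $E^+ \cap E^- = \emptyset$ these two sets are disjoint, so this equals $|(E^+ \cap \Ext(\calC)) \cup (E^- \cap \Int(\calC))|$. This identification is the only place the hypothesis $w\equiv 1$ really matters, exactly as flagged before the statement of Lemma~\ref{lem:pivot-cost}.

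Next I would apply Lemma~\ref{lem:pivot-crux} to this choice of $\calC_x,\calC_y,\calC_z$. It states that the number of special pairs is at least half the cardinality of $(E^+ \cap \Ext(\calC)) \cup (E^- \cap \Int(\calC))$; combined with the previous paragraph this gives $\cost(\calC) \le 2 N_{\mathrm{sp}}$, where $N_{\mathrm{sp}}$ denotes the number of special pairs. Then I would invoke Lemma~\ref{lem:special-edge}, whose hypothesis is precisely our hypothesis that neither $\ls1$ nor $\ls2$ is $(2-\delta)$-competitive: it yields that the total weight of special pairs is strictly less than $4\delta\cost(\opt)$, and since $w\equiv 1$ this weight is exactly $N_{\mathrm{sp}}$. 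Chaining the two inequalities produces $\cost(\calC) < 8\delta\cost(\opt)$, which is the claimed bound.

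There is no genuinely hard step remaining, since Lemmas~\ref{lem:special-cost}, \ref{lem:pivot-special}, and \ref{lem:pivot-crux} have already carried the combinatorial weight; the only thing to be careful about is the bookkeeping between "weight of special pairs'' (the quantity controlled in Lemma~\ref{lem:special-edge}, stated for weighted instances) and "number of special pairs'' (the quantity appearing in Lemma~\ref{lem:pivot-crux}, which counts pairs rather than weighing them), plus the disjointness of $E^+\cap\Ext(\calC)$ and $E^-\cap\Int(\calC)$ needed to turn the union-cardinality bound of Lemma~\ref{lem:pivot-crux} into a bound on $\cost(\calC)$. Once those are noted, the proof is a two-line composition. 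It is then immediate that for $\delta = \tfrac18$ the conclusion $\cost(\calC) < \cost(\opt)$ contradicts optimality of $\opt$, which is how Theorem~\ref{thm:two-ls} will be concluded.
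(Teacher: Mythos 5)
Your proof is correct and is essentially the paper's own argument: the paper likewise sets $\calC = \mathrm{Pivot}(\opt,\ls1,\ls2)$, uses Lemma~\ref{lem:pivot-crux} to bound the normal-pair contribution by the number of special pairs, and then applies Lemma~\ref{lem:special-edge} together with $w\equiv 1$ to get $\cost(\calC) < 8\delta\cost(\opt)$. Your phrasing via ``$\cost(\calC)\le 2N_{\mathrm{sp}}$'' is just a repackaging of the paper's split into normal and special contributions, each below $4\delta\cost(\opt)$.
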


\begin{proof}
    Recall that $\calC$ is the clustering scheme being the result of $\mathrm{Pivot}(\opt,\ls1,\ls2)$.
    By Lemma~\ref{lem:pivot-crux}, the number of normal pairs 
    accounted in $\cost(\calC)$ is not larger than the number of special pairs.
    By Lemma~\ref{lem:special-edge} and the assumption $w \equiv 1$,
    the cost of $\calC$ on normal pairs and on special pairs are both less
    than $4\delta\cost(\opt)$. Hence, $\cost(\calC) < 8\delta\cost(\opt)$.
\end{proof}

Since $\opt$ is an optimum solution,
Lemma~\ref{lem:pivot-cost} gives a contraction for $\delta = \frac{1}{8}$.
Hence, at least one of $\ls1$ and $\ls2$ is $(2-\frac{1}{8})$-competitive.
This completes the proof of Theorem~\ref{thm:two-ls}.

\paragraph{A remark on a hard instance.}
We finish this section with an example that the Local-Search-Flip-Local-Search algorithm can be as bad
as $\frac{14}{9}$-competitive. 
Consider a graph of $3\times5\times5$ vertices, where the vertices have coordinates from $(1,1,1)$ to $(3,5,5)$. There is an edge between two vertices if and only if their Hamming distance is at most 2. The optimal solution is to cluster vertices which have the same $x$-coordinate, and the cost is $(3-1)(5+5-1)/2=9$ per vertex. It can be checked that $\ls1$ and $\ls2$ may respectively cluster vertices which have the same $y$-coordinate/$z$-coordinate, and the cost is $(5-1)(3+5-1)/2=14$ per vertex. The ratio is $\frac{14}{9}\approx1.56$.

\subsection{Iterative version of local search}\label{sec:iter-ls}

In this section we present a more complicated iterative local search algorithm
that achieves an approximation ratio of $(2-\frac{2}{13}+\varepsilon)$ for any
$\varepsilon > 0$.
We also present it in full generality, where we are not able to find an exact
local optimum, but only an approximate one. 
We start with defining what this ``approximate one'' means.

Similarly as in the warm-up section, we will solve unweighted \textsc{Correlation Clustering}
instances, but use local search on instances with some of the edge weights increased from the flips. 
The increase from the flips will be small: we will use a constant $\beta > 0$ (set to $\beta = 0.5$ in the end)
and increase the weight of some edges by $\beta$ at most twice. 
Hence, the following definition needs to include weight functions $w : \binom{V(G)}{2} \to \mathbb{Z}_+$. We say that the weight function $w$ is \emph{normal} if $w(uv) = 1$ for $uv \notin E(G)$
and $w(uv) \geq 1$ for $uv \in E(G)$.

\begin{definition}\label{def:apx-max}
Let $G$ be an unweighted \textsc{Correlation Clustering} instance,
let $\varepsilon > 0$ be an accuracy parameter,
and let $(\calK, E_{\adm})$ be a preclustered instance for $G$ returned
by Theorem~\ref{thm:preprocessed-wrapper} for $(G,\varepsilon)$.
That is, $(\calK,E_\adm)$ is $\varepsilon$-good
and let $\calC^*_{(\calK, E_{\adm})}$ be the (unknown) $\varepsilon$-good clustering
scheme in $(\calK, E_{\adm})$ whose existence is promised by Theorem~\ref{thm:preprocessed-wrapper}.

Let $0 < \gamma < \varepsilon^{13}/4$ be a constant and $w : \binom{V(G)}{2} \to \mathbb{Z}_+$
be a normal weight function.
Then, a clustering scheme $\calC$ is a
\emph{$\gamma$-good local optimum for $w$} if
for every cluster $C$ of $\calC^*_{(\calK,E_{\adm})}$ it holds that
\[
\sum_{C\in\calC^*_{(\calK,E_{\adm})}} ( \cost_w(\calC) - \cost_w(\calC + C) )  \le 2 \gamma |E_{\adm}| .    
\]
\end{definition}

We are now ready to present our algorithm. 
It is parameterized by an accuracy parameter $\varepsilon > 0$, 
a constant $0 < \gamma < \varepsilon^{13}/4$, a constant $\beta > 0$, and a number of iterations $k$.
(We will set $\beta = 0.5$ and $0 < \gamma \ll \varepsilon$ to be small constants
 in the end.)
The input consists of an unweighted \textsc{Correlation Clustering} instance
$G$ and 
the algorithm starts with computing
an $\varepsilon$-good preclustered instance $(\calK, E_{\adm})$ for $G$
using Theorem~\ref{thm:preprocessed-wrapper}.

The algorithm uses two operations that are worth recalling. 
First, it explicitly uses the \emph{Pivot} operation described in the previous section.
Second, for a weight function $w$ and clustering scheme $\calC$, it performs the flip by computing
a new weight function $w' := w + \beta (E^+ \cap \Ext(\calC))$, which is a shorthand
for adding weight $\beta$ to all edges of $G$ that connect different clusters of $\calC$.

\begin{tcolorbox}[beforeafter skip=10pt]\label{alg:iterative-local-search}
    \textbf{Iterated-flipping Local Search($G$)}
    \begin{itemize}[itemsep=0pt, topsep=4pt]
        \item Let $(\calK,E_{\adm})$ be the preclustering obtained via Theorem~\ref{thm:preprocessed-wrapper} on $G$ and $\varepsilon$.
        \item Let $w_0 \equiv 1$ be the uniform weight function.
        \item Let $\calC_0'$ be a $\gamma$-good local optimum for $w_0$.
        \item For $i=1,2,\ldots,k$
        \begin{itemize}
            \item $w_i := w_0 + \beta (E^+ \cap \Ext(\calC_{i-1}'))$.
            \item Let $\calC_i$ be a $\gamma$-good local optimum for $w_i$.
            \item $w_i' := w_i + \beta (E^+ \cap \Ext(\calC_i))$.
            \item Let $\calC_i'$ be a $\gamma$-good local optimum for $w_i'$.
            \item $\calC_i'' := \mathrm{Pivot}(\calC_{i-1}',\calC_i,\calC_{i}')$ \footnote{Pivot can be implemented in $O(n)$ time and space. }
        \end{itemize}
        Output best of $\calC_1,\ldots,\calC_k,\calC_0',\ldots,\calC_k',\calC_1'',\ldots,\calC_k''$ with respect to the original cost function.
    \end{itemize}
\end{tcolorbox}

We prove in this section the following theorem.

\begin{theorem}
\label{thm:it-ls}
  For every $0 \leq \alpha < \frac{2}{13}$
  there exists a positive integer $k$ and a real $\varepsilon_0 > 0$
  such that for every $0 < \varepsilon \leq \varepsilon_0$ and for every $0 < \gamma < \varepsilon^{13}/4$,
  the Iterated-flipping Local Search Algorithm
  returns a $(2-\alpha)$-approximation to \textsc{Correlation Clustering} on an unweighted input $G$
  if run on parameters $\varepsilon$, $\gamma$, $\beta = 0.5$, and $k$.
\end{theorem}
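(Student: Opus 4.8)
The plan is to generalize the warm-up analysis of Theorem~\ref{thm:two-ls} to a longer chain of flips, and then extract the final improvement from a single \emph{Pivot} of three consecutive solutions in this chain. First I would record the ``approximate local optimum'' version of Lemma~\ref{lem:prop-ls}: if $\calC'$ is a $\gamma$-good local optimum for a normal weight function $w$, then for any clustering scheme $\calC$ one gets, up to an additive slack of $2\gamma|E_\adm|$,
\[
\cost_w(\calC') \le 2\cost_w(\calC) - \cost_w^-(\calC) - w(E^+\cap\Ext(\calC')\cap\Ext(\calC)) + 2\gamma|E_\adm|,
\]
and the analogous variants of \eqref{eq:ls-cost-2}. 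Since $\gamma < \varepsilon^{13}/4$ and by Theorem~\ref{thm:preprocessed-wrapper} $\cost(\calC^*_{(\calK,E_\adm)})\ge (\varepsilon^{12}/2)|E_\adm|$, this additive term is at most $O(\varepsilon)\cdot\cost(\opt)$, so it is harmless for the final $(2-\alpha)$-bound as long as we take $\varepsilon$ small. We also want to replace $\opt$ by $\calC^*_{(\calK,E_\adm)}$ throughout, since that is the reference solution the local-search inequalities actually talk about; its cost differs from $\cost(\opt)$ by a $(1+\varepsilon)$ factor.

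The heart of the argument is the following ``potential'' accounting. Suppose that \emph{none} of the solutions $\calC_1,\dots,\calC_k,\calC_0',\dots,\calC_k'$ is $(2-\alpha)$-competitive. For each $i$, writing the local-optimality inequality for $\calC_i$ against $\opt$ in the weighted instance $(G,w_i)$ and using $w_i = w_0 + \beta(E^+\cap\Ext(\calC_{i-1}'))$, and then doing the same for $\calC_i'$ in $(G,w_i')$ with $w_i' = w_i + \beta(E^+\cap\Ext(\calC_i))$, one obtains (as in Lemmata~\ref{lem:ls1} and~\ref{lem:ls2}) that all of the following are small, i.e.\ $< \alpha'\cost(\opt)$ for some $\alpha'$ tending to $0$ as $\alpha\to 0$ and $k\to\infty$: the ``$-$'' costs $\cost^-(\opt)$, $\cost^-(\calC_i)$, $\cost^-(\calC_i')$; the ``disagreement'' terms $w(E^+\cap\Ext(\calC_i)\cap\Ext(\opt))$ and $w(E^+\cap\Ext(\calC_i')\cap\Ext(\opt))$; and crucially the consecutive-solution overlaps $w(E^+\cap\Ext(\calC_{i-1}')\cap\Ext(\calC_i))$ and $w(E^+\cap\Ext(\calC_i)\cap\Ext(\calC_i'))$. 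The reason the overlaps shrink is the flip: each edge cut by $\calC_{i-1}'$ has its weight bumped by $\beta$ in $w_i$, so if $\calC_i$ also cuts many of those edges its weighted cost is large, contradicting near-competitiveness; iterating with different adaptive weights forces the overlaps down geometrically in $k$, analogously to how a single flip sufficed in the warm-up. I would make this precise by setting up a linear recursion / averaging over $i=1,\dots,k$ so that for at least one index $i$ all the relevant quantities are simultaneously $O(\alpha/k + \alpha)$-small.

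Given such an index $i$, apply $\mathrm{Pivot}(\calC_{i-1}',\calC_i,\calC_i')$ and invoke Lemma~\ref{lem:pivot-crux} and Lemma~\ref{lem:pivot-special} exactly as in the warm-up: $\cost(\calC_i'')$ is at most twice the total number of \emph{special} pairs, and the special pairs are bounded by $w(E^-\cap\Int(\calC_x)) + w(E^-\cap\Int(\calC_y)) + w(E^-\cap\Int(\calC_z))$ plus the three pairwise $E^+\cap\Ext\cap\Ext$ terms, with $(\calC_x,\calC_y,\calC_z) = (\opt,\calC_i,\calC_i')$ or $(\calC_{i-1}',\calC_i,\calC_i')$. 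Each of those six terms was just shown to be $\alpha'\cost(\opt)$-small. Since $\opt$ is optimal, $\cost(\calC_i'') \ge \cost(\opt)$, which yields a contradiction once the constant multiplying $\cost(\opt)$ on the right is below $1$; tracking the constants shows this happens precisely when $\alpha < \frac{2}{13}$, after choosing $k$ large and $\varepsilon_0$ (hence $\gamma$) small. The main obstacle I anticipate is the bookkeeping of the geometric/averaging recursion in the second paragraph — getting the exact coefficients so that the break-even point comes out to $\frac{2}{13}$ rather than something weaker requires careful choice of which inequality (\eqref{eq:ls-cost-1} vs.\ \eqref{eq:ls-cost-2}) to apply at each step and how to combine the $\calC_i$ and $\calC_i'$ rounds, and one must be careful that the Pivot lemma counts \emph{pairs} not weights, so it should be applied to the unweighted instance while the competitiveness estimates live in the (mildly) weighted instances; reconciling these, together with absorbing all the $O(\varepsilon)$ and $O(\gamma|E_\adm|)$ slacks into the final $+\varepsilon$, is where the real work lies.
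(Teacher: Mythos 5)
Your overall architecture matches the paper's: approximate local-optimum inequalities stated relative to $\calC^*_{(\calK,E_{\adm})}$ (Lemma~\ref{lem:weighted-local-search}), the flip forcing control of consecutive overlaps (Lemma~\ref{lem:iterated-M-function}), a recursion over $i$ to locate a good triple, and a Pivot of $(\calC_{i-1}',\calC_i,\calC_i')$ analyzed via Lemmata~\ref{lem:pivot-crux} and~\ref{lem:pivot-special}. The gap is precisely in the step you defer as ``the real work'': the recursion does \emph{not} force the overlaps down geometrically, and there is in general no index at which all six pivot inputs are simultaneously $O(\alpha/k+\alpha)$-small. The local-opt inequality for $\calC_i'$ bounds the overlap $|E^+\cap\Ext(\calC_i')\cap\Ext(\calC_{i-1}')|$ only in terms of $2\,|E^+\cap\Ext(\calC^*_{(\calK,E_{\adm})})\cap\Ext(\calC_{i-1}')|$ (the factor $2=1/\beta$ is forced by the flip magnitude), so the recursion has a factor-$2$ \emph{amplification} per step rather than a decay; what saves it is the additive $-1$ coming from the assumed non-competitiveness. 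Concretely, if at the chosen index you use only ``the level-$(i-1)$ quantities are small'' and then pivot, you obtain $\cost(\calC_i'')\le(16\hat{\alpha}+O(\varepsilon))\cost(\calC^*_{(\calK,E_{\adm})})$, which contradicts $\cost(\calC_i'')>(2-\hat{\alpha})\cost(\calC^*_{(\calK,E_{\adm})})$ only for $\hat{\alpha}<2/17$; your plan as stated would therefore prove $2-2/17$, not $2-2/13$.

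The paper closes this with a potential argument you would need to reconstruct: set $b_i:=\bigl(|E^-\cap\Int(\calC_i')|+2|E^+\cap\Ext(\calC^*_{(\calK,E_{\adm})})\cap\Ext(\calC_i')|\bigr)/\cost(\calC^*_{(\calK,E_{\adm})})$, and combine the two local-opt inequalities for $\calC_i$ and $\calC_i'$ with the Pivot bound \emph{and} with the assumption that $\calC_i''$ itself is not competitive; this yields the progress inequality $b_i-b_{i-1}<b_{i-1}+4.5\hat{\alpha}+4\varepsilon-1$ (equation~\eqref{eq:bnd-progress2}). Since $b_0\le 2\hat{\alpha}+2\varepsilon$ and every $b_i\ge 0$, some step $i_0$ must fail to decrease by $\frac{2}{13}-\hat{\alpha}$, and substituting $b_{i_0-1}\le 2\hat{\alpha}+2\varepsilon$ into the progress inequality at $i_0$ gives the purely numerical contradiction $5.5\hat{\alpha}+6\varepsilon>\frac{11}{13}$. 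Two points to take from this: the non-competitiveness of the pivoted solutions $\calC_i''$ is an input to the recursion at \emph{every} index, not something invoked once at a final good index; and the weighted-vs.-unweighted issue you flag is resolved by stating Lemma~\ref{lem:iterated-M-function} with unweighted cardinalities on the right-hand side (weights only ever appear with $w\ge 1$, so they can be lower-bounded by counts), so the Pivot lemma applies directly.
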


Let $\varepsilon_0' > 0$ be sufficiently small such that 
every $0 \leq \varepsilon \leq \varepsilon_0'$ is fine for Theorem~\ref{thm:preprocessed-wrapper}
to work. 
We set $\varepsilon_0$ and $k$ as follows:
\[ \varepsilon_0 = \min\left(\varepsilon_0', \frac{1}{12} \left(\frac{2}{13}-\alpha\right)\right) \quad\mathrm{and}\quad k = 1 + \left\lceil \frac{2}{\frac{2}{13}-\alpha} \right\rceil. \]
Let $0 < \varepsilon \leq \varepsilon_0$ and $0 < \gamma < \varepsilon^{13}/4$ be arbitrary. 

Let $(\calK,E_{\adm})$ the $\varepsilon$-good preclustered instance returned
by Theorem~\ref{thm:preprocessed-wrapper}
and let $\calC^\ast_{(\calK,E_{\adm})}$ be the $\varepsilon$-good clustering
scheme promised by Theorem~\ref{thm:preprocessed-wrapper}.

Recall that by $\cost(\calC)$ we mean the cost of $\calC$ with regards to the uniform
weight $w_0 \equiv 1$. If we want to use a different weight function $w$, we denote it
by $\cost_w(\calC)$.

We set $\hat{\alpha} := (\frac{2}{13}+\alpha)/2$.
We will aim at a solution of cost $(2-\hat{\alpha})\cost(\calC^\ast_{(\calK,E_{\adm})}))$.
By the properties promised by Theorem~\ref{thm:preprocessed-wrapper}
and by our choice of $\varepsilon_0$,
this is enough to obtain a $(2-\alpha)$-approximation.
Hence, for contradiction we assume that the costs of
all clustering schemes $\calC_i$, $\calC_i'$, and $\calC_i''$ generated by the algorithm are larger than $(2-\hat{\alpha})\cost(\calC^\ast_{(\calK,E_{\adm})})$. 
We will reach a contradiction.

We start with an analog of Lemma~\ref{lem:prop-ls}.

\begin{lemma}\label{lem:weighted-local-search}
  Let $\calC$ be a $\gamma$-good local optimum for $G$ and a weight
  function $w$. Then, 
    \begin{align*}
        &w(E^- \cap \Int(\calC^\ast_{(\calK,E_{\adm})}))+2w(E^+ \cap \Ext(\calC^\ast_{(\calK,E_{\adm})})) + \varepsilon \cost(\calC^\ast_{(\calK,E_{\adm})}) \ge\\ 
        & w(E^+\cap\Ext(\calC)\cap\Int(\calC^\ast_{(\calK,E_{\adm})}))+2w(E^+\cap\Ext(\calC)\cap\Ext(\calC^\ast_{(\calK,E_{\adm})}))\\
        &+w(E^-\cap\Int(\calC)\cap\Int(\calC^\ast_{(\calK,E_{\adm})}))+2w(E^-\cap\Int(\calC)\cap\Ext(\calC^\ast_{(\calK,E_{\adm})}))
    \end{align*}
\end{lemma}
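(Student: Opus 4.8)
\textbf{Proof proposal for Lemma~\ref{lem:weighted-local-search}.}

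The plan is to mimic the proof of Lemma~\ref{lem:prop-ls}, but this time summing the local-optimality inequality only over clusters of $\calC^\ast_{(\calK,E_{\adm})}$ and absorbing the $\gamma$-slack in the definition of a $\gamma$-good local optimum into the $\varepsilon\cost(\calC^\ast_{(\calK,E_{\adm})})$ error term. First I would fix a single cluster $C$ of $\calC^\ast_{(\calK,E_{\adm})}$ and write down what $\cost_w(\calC) - \cost_w(\calC + C)$ is in terms of (non)edges incident to $C$: exactly as in the warm-up, the quantity $\cost_w(\calC+C) - \cost_w(\calC)$ equals
\[
w(E^-\cap\Int(C)) + w(E^+\cap\Ext(C)) - w(E^+\cap\Int(C)\cap\Ext(\calC)) - w(E^+\cap\Ext(C)\cap\Ext(\calC)) - w(E^-\cap\Int(C)\cap\Int(\calC)) - w(E^-\cap\Ext(C)\cap\Int(\calC)),
\]
so $\cost_w(\calC)-\cost_w(\calC+C)$ is the negative of that. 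Summing over all $C\in\calC^\ast_{(\calK,E_{\adm})}$, the left side becomes $\sum_{C}(\cost_w(\calC)-\cost_w(\calC+C)) \le 2\gamma|E_\adm|$ by Definition~\ref{def:apx-max}, and on the right each pair of $\Ext(\calC^\ast_{(\calK,E_{\adm})})$ is counted twice (once from each endpoint's cluster) while each pair of $\Int(\calC^\ast_{(\calK,E_{\adm})})$ is counted once, giving precisely the combination of $w(\cdot)$ terms in the statement, up to moving things across the inequality.

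Next I would convert $2\gamma|E_\adm|$ into the claimed $\varepsilon\cost(\calC^\ast_{(\calK,E_{\adm})})$ term: by the second bullet of Theorem~\ref{thm:preprocessed-wrapper} we have $|E_\adm| \le 2\varepsilon^{-12}\cost(\calC^\ast_{(\calK,E_{\adm})})$, hence $2\gamma|E_\adm| \le 4\gamma\varepsilon^{-12}\cost(\calC^\ast_{(\calK,E_{\adm})})$, and since $\gamma < \varepsilon^{13}/4$ this is at most $\varepsilon\cost(\calC^\ast_{(\calK,E_{\adm})})$. Rearranging the summed inequality — collecting the $w(E^-\cap\Int(C))$ and $w(E^+\cap\Ext(C))$ pieces into $w(E^-\cap\Int(\calC^\ast_{(\calK,E_{\adm})}))$ and $2w(E^+\cap\Ext(\calC^\ast_{(\calK,E_{\adm})}))$ respectively (the factor $2$ on the $E^+$ term comes exactly because each external edge lies in $\Ext(C)$ for its two distinct endpoint-clusters, whereas each internal nonedge lies in $\Int(C)$ for only one cluster) — and adding the error term to that side gives the displayed inequality.

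I do not expect a serious obstacle here; the one point to be careful about is the bookkeeping of which pairs get weight $1$ versus weight $2$ after summing over clusters of $\calC^\ast_{(\calK,E_{\adm})}$, i.e.\ matching the warm-up's ``$\Ext$ appears twice, $\Int$ appears once'' phenomenon with the specific grouping in the statement, and confirming that the unmatched nonnegative terms (such as $w(E^+\cap\Ext(C)\cap\Int(\calC))$-type pieces) are dropped in the correct direction. The only genuinely new ingredient beyond Lemma~\ref{lem:prop-ls} is the use of Theorem~\ref{thm:preprocessed-wrapper}'s lower bound $\cost(\calC^\ast_{(\calK,E_{\adm})}) \ge (\varepsilon^{12}/2)|E_\adm|$ to turn the additive $\gamma$-slack into a multiplicative $\varepsilon$-slack against the cost of $\calC^\ast_{(\calK,E_{\adm})}$, which is immediate from the constraint $\gamma<\varepsilon^{13}/4$.
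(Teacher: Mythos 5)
Your proposal matches the paper's proof essentially verbatim: expand $\cost_w(\calC)-\cost_w(\calC+C)$ over pairs incident to $C$, sum over the clusters of $\calC^\ast_{(\calK,E_{\adm})}$ so that $\Ext$-pairs count twice and $\Int$-pairs once, and convert $2\gamma|E_\adm|$ into $\varepsilon\cost(\calC^\ast_{(\calK,E_{\adm})})$ via the lower bound $\cost(\calC^\ast_{(\calK,E_{\adm})})\ge(\varepsilon^{12}/2)|E_\adm|$ and $\gamma<\varepsilon^{13}/4$. The only cosmetic difference is your worry about dropped nonnegative terms — here the summed inequality already yields the statement exactly, with nothing to drop.
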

\begin{proof}
Let $C$ be a cluster of $\calC^\ast_{(\calK,E_{\adm})}$. By the optimality of $\calC$, we have
\[ \sum_{C \in \calC^\ast_{(\calK,E_{\adm})}}  \cost_w(\calC + C) + 2\gamma |E_{\adm}| \geq \sum_{C \in \calC^\ast_{(\calK,E_{\adm})}} \cost_w(\calC). \]
Equivalently,
 \begin{align*}
 & \sum_{C \in \calC^\ast_{(\calK,E_{\adm})}} ( w(E^+ \cap \Ext(C)) + w(E^- \cap \Int(C)) ) + 2\gamma |E_{\adm}| \geq\\
 & \sum_{C \in \calC^\ast_{(\calK,E_{\adm})}} ( w(E^+ \cap \Ext(\calC) \cap \Int(C)) + w(E^+ \cap \Ext(\calC) \cap \Ext(C))\\
  & \qquad\qquad\quad + w(E^- \cap \Int(\calC) \cap \Int(C)) + w(E^- \cap \Int(\calC) \cap \Ext(C)) ).
\end{align*}
The lemma follows from observing that
\begin{align*}
  2\gamma |E_{\adm}|  \leq \frac{4\gamma}{\varepsilon^{12}} \cost(\calC^\ast_{(\calK,E_{\adm})}) 
    \leq \varepsilon \cost(\calC^\ast_{(\calK,E_{\adm})}).
\end{align*}
Here, the first inequality uses the properties of $(\calK,E^{\adm})$ promised
by Theorem~\ref{thm:preprocessed-wrapper} while the last inequality
uses the assumption $\gamma < \varepsilon^{13}/4$.
\end{proof}

We now proceed to the analogs of Lemmata~\ref{lem:ls1} and~\ref{lem:ls2}.
\begin{lemma}\label{lem:iterated-M-function}
Let $\calC_1,\ldots,\calC_\ell$ be clustering schemes
and 
\[w := w_0 + \sum_{i=1}^\ell \beta (E^+ \cap \Ext(\calC_i)). \]
(That is, for every $1 \leq i \leq \ell$, we add a weight of $\beta$ to every edge
 connecting two distinct clusters of $\calC_i$.)
Let $\calC$ be a $\gamma$-good local optimum of $G$ and $w$.
If $\cost(\calC) > (2-\hat{\alpha}) \cost(\calC^\ast_{(\calK,E_{\adm})})$, then 

    \begin{align*}
        \left(\hat{\alpha} + \varepsilon\right) \cost(\calC^\ast_{(\calK,E_{\adm})}) &+ 2\beta \left(\sum_{i=1}^k|E^+ \cap \Ext(\calC^\ast_{(\calK,E_{\adm})})\cap \Ext(\calC_i)|\right) \ge\\
        &\beta\sum_{i=1}^k|E^+ \cap \Ext(\calC)\cap \Ext(\calC_i)|\\
        &+|E^+ \cap \Ext(\calC)\cap\Ext(\calC^\ast_{(\calK,E_{\adm})})| \\
        &+|E^- \cap \Int(\calC)|.
    \end{align*}
\end{lemma}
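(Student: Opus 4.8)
The plan is to mimic the derivation of Lemma~\ref{lem:ls1} (and its weighted counterpart), namely apply the $\gamma$-good local-optimality inequality of Lemma~\ref{lem:weighted-local-search} with the specific weight function $w = w_0 + \sum_{i=1}^\ell \beta(E^+ \cap \Ext(\calC_i))$, and then rewrite all the weighted quantities $w(\cdot)$ in terms of the uniform cost plus the $\beta$-contributions coming from the $\calC_i$'s, finally using the assumed lower bound $\cost(\calC) > (2-\hat\alpha)\cost(\calC^\ast_{(\calK,E_{\adm})})$ to cancel a $\cost(\calC)$ term against $2\cost_w(\calC^\ast_{(\calK,E_{\adm})})$.

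First I would expand the left-hand side of Lemma~\ref{lem:weighted-local-search}. Since $w$ agrees with $w_0$ on nonedges, $w(E^- \cap \Int(\calC^\ast_{(\calK,E_{\adm})})) = \cost^-(\calC^\ast_{(\calK,E_{\adm})})$, and on edges $w = 1 + \beta\sum_i \mathbf{1}[\cdot \in \Ext(\calC_i)]$, so $w(E^+ \cap \Ext(\calC^\ast_{(\calK,E_{\adm})})) = \cost^+(\calC^\ast_{(\calK,E_{\adm})}) + \beta\sum_i |E^+ \cap \Ext(\calC^\ast_{(\calK,E_{\adm})}) \cap \Ext(\calC_i)|$. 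Hence the left side of Lemma~\ref{lem:weighted-local-search} becomes
\[
2\cost(\calC^\ast_{(\calK,E_{\adm})}) - \cost^+(\calC^\ast_{(\calK,E_{\adm})}) + 2\beta\sum_i |E^+ \cap \Ext(\calC^\ast_{(\calK,E_{\adm})}) \cap \Ext(\calC_i)| + \varepsilon\cost(\calC^\ast_{(\calK,E_{\adm})}).
\]
For the right-hand side of Lemma~\ref{lem:weighted-local-search}, I would first note that the four terms there, when evaluated with $w_0$ instead of $w$, sum (after grouping) to $\cost(\calC)$ plus the extra $\Ext(\calC^\ast_{(\calK,E_{\adm})})$-copies, exactly as in the unweighted computation behind \eqref{eq:prop-ls-2}: the $w_0$-part of the right side equals $\cost(\calC) + |E^+ \cap \Ext(\calC) \cap \Ext(\calC^\ast_{(\calK,E_{\adm})})| + |E^- \cap \Int(\calC) \cap \Ext(\calC^\ast_{(\calK,E_{\adm})})|$. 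The $\beta$-part of the right side, coming only from $E^+$ terms weighted by $w$, contributes $\beta\sum_i\big(|E^+ \cap \Ext(\calC) \cap \Int(\calC^\ast_{(\calK,E_{\adm})}) \cap \Ext(\calC_i)| + 2|E^+ \cap \Ext(\calC) \cap \Ext(\calC^\ast_{(\calK,E_{\adm})}) \cap \Ext(\calC_i)|\big)$, which is at least $\beta\sum_i |E^+ \cap \Ext(\calC) \cap \Ext(\calC_i)|$ since $\Int(\calC^\ast_{(\calK,E_{\adm})})$ and $\Ext(\calC^\ast_{(\calK,E_{\adm})})$ partition all pairs and every coefficient is $\ge 1$. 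Combining, rearranging, and using $\cost(\calC) > (2-\hat\alpha)\cost(\calC^\ast_{(\calK,E_{\adm})})$ to replace $2\cost(\calC^\ast_{(\calK,E_{\adm})}) - \cost^+(\calC^\ast_{(\calK,E_{\adm})}) - \cost(\calC) < -\cost^+(\calC^\ast_{(\calK,E_{\adm})}) - \hat\alpha\cost(\calC^\ast_{(\calK,E_{\adm})}) + \cost^+(\calC^\ast_{(\calK,E_{\adm})})\cdot 0 \le \hat\alpha\cost(\calC^\ast_{(\calK,E_{\adm})})$ after sign bookkeeping, I expect to land exactly on the claimed inequality, with $|E^- \cap \Int(\calC)|$ appearing because both $|E^- \cap \Int(\calC) \cap \Int(\calC^\ast_{(\calK,E_{\adm})})|$ and $|E^- \cap \Int(\calC) \cap \Ext(\calC^\ast_{(\calK,E_{\adm})})|$ survive (the latter with coefficient $2 \ge 1$).

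I would also remark on the mismatch in summation ranges: Lemma~\ref{lem:weighted-local-search} is applied with $\ell$ schemes, but the conclusion sums to $k$; this is harmless because in the intended applications $\ell \le k$ and the extra terms for $\ell < i \le k$ are nonnegative on the right and can be freely added on the left (or one simply sets $\calC_i$ for $i>\ell$ to be the one-cluster scheme so those $\Ext(\calC_i)$ are empty) — I will state this explicitly so the indices line up. The only genuinely delicate point, and the step I expect to require the most care, is the sign bookkeeping when isolating the $\cost(\calC)$ term and dropping/retaining the right terms: one must make sure that every term dropped on the right is nonnegative and every inflation of a coefficient from the true value ($1$ or $2$) down to $1$ on the right side goes in the correct direction. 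Everything else is the routine expansion of $w$ into its $w_0$ and $\beta$ pieces together with the $\varepsilon\cost(\calC^\ast_{(\calK,E_{\adm})})$ slack already supplied by Lemma~\ref{lem:weighted-local-search}.
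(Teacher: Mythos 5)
Your overall route is the same as the paper's: apply Lemma~\ref{lem:weighted-local-search} with the weight $w = w_0 + \sum_i \beta(E^+\cap\Ext(\calC_i))$, split every $w(\cdot)$ into its $w_0$-part and its $\beta$-part, and trade $\cost(\calC)$ for $(2-\hat\alpha)\cost(\calC^\ast_{(\calK,E_{\adm})})$. Your expansions of the $\beta$-contributions (both the $2\beta\Sigma^\ast$ on the left and the lower bound $\beta\sum_i|E^+\cap\Ext(\calC)\cap\Ext(\calC_i)|$ on the right) are correct, as is the identity for the $w_0$-part of the right-hand side.

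However, there is a genuine gap in the final bookkeeping. First, a sign slip: the left side of Lemma~\ref{lem:weighted-local-search} is $w(E^-\cap\Int(\calC^\ast_{(\calK,E_{\adm})})) + 2w(E^+\cap\Ext(\calC^\ast_{(\calK,E_{\adm})}))$, whose $w_0$-part is $\cost^-+2\cost^+ = 2\cost(\calC^\ast_{(\calK,E_{\adm})}) - \cost^-(\calC^\ast_{(\calK,E_{\adm})})$, not $2\cost(\calC^\ast_{(\calK,E_{\adm})}) - \cost^+(\calC^\ast_{(\calK,E_{\adm})})$. More importantly, the term you then discard via ``$-\cost^\pm(\calC^\ast_{(\calK,E_{\adm})}) \le 0$'' is exactly the term you cannot afford to lose. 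Trace your own chain: after you replace the entire $\cost(\calC)$ on the right by $(2-\hat\alpha)\cost(\calC^\ast_{(\calK,E_{\adm})})$, the only $E^-$ contribution that survives on the right is the leftover copy $|E^-\cap\Int(\calC)\cap\Ext(\calC^\ast_{(\calK,E_{\adm})})|$ coming from the coefficient $2$; the piece $|E^-\cap\Int(\calC)\cap\Int(\calC^\ast_{(\calK,E_{\adm})})|$ was consumed inside $\cost(\calC)$ and does \emph{not} reappear. So your argument as written only yields the weaker inequality with $|E^-\cap\Int(\calC)\cap\Ext(\calC^\ast_{(\calK,E_{\adm})})|$ in place of $|E^-\cap\Int(\calC)|$, contrary to your closing claim that ``both pieces survive.'' The paper recovers the missing piece precisely by \emph{keeping} the $-\cost^-(\calC^\ast_{(\calK,E_{\adm})})$ term: moved to the right it becomes $+w(E^-\cap\Int(\calC^\ast_{(\calK,E_{\adm})})) \ge |E^-\cap\Int(\calC)\cap\Int(\calC^\ast_{(\calK,E_{\adm})})|$, which together with the surviving $|E^-\cap\Int(\calC)\cap\Ext(\calC^\ast_{(\calK,E_{\adm})})|$ reassembles $|E^-\cap\Int(\calC)|$. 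With that one correction (and the $\cost^-$ versus $\cost^+$ fix, without which the needed term is not even available), your proof closes; your remark on the $\ell$ versus $k$ index mismatch is a fair observation about the statement itself.
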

\begin{proof}
    The assumption $\cost(\calC) > (2-\hat{\alpha})\cost(\calC^\ast_{(\calK,E_{\adm})})$ expands as
    \[ |E^+ \cap \Ext(\calC)| + |E^- \cap \Int(\calC)| \ge (2-\hat{\alpha})\left(|E^+ \cap \Ext(\calC^\ast_{(\calK,E_{\adm})})| + |E^- \cap \Int(\calC^\ast_{(\calK,E_{\adm})})|\right). \]
    Rearranging this, we get 
    \begin{align*} 
    \hat{\alpha} \cost(\calC^\ast_{(\calK,E_{\adm})}) &\ge 2|E^+ \cap \Ext(\calC^\ast_{(\calK,E_{\adm})})| + 2|E^- \cap \Int(\calC^\ast_{(\calK,E_{\adm})})| \\&\quad- |E^+ \cap \Ext(\calC)| - |E^- \cap \Int(\calC)|. \end{align*}
    Applying now the definition of $w(\calC^\ast_{(\calK,E_{\adm})})$, we get
    \begin{align*}
        \hat{\alpha} \cost(\calC^\ast_{(\calK,E_{\adm})}) &\ge 2w(\calC^\ast_{(\calK,E_{\adm})}) - 2\beta\sum_{i=1}^k|E^+ \cap \Ext(\calC^\ast_{(\calK,E_{\adm})})\cap\Ext(C_i)| \\
        & - |E^+ \cap \Ext(\calC)| - |E^- \cap \Int(\calC)|.
    \end{align*}
    Adding the inequality of Lemma~\ref{lem:weighted-local-search} to both sides, we obtain
    \begin{align*}
        &\left(\hat{\alpha} + \varepsilon\right)\cost(\calC^\ast_{(\calK,E_{\adm})}) + 2\beta\sum_{i=1}^k|E^+ \cap \Ext(\calC^\ast_{(\calK,E_{\adm})})\cap\Ext(C_i)| \ge \\
        &\quad w(E^+\cap\Ext(\calC)\cap\Int(\calC^\ast_{(\calK,E_{\adm})}))+2w(E^+\cap\Ext(\calC)\cap\Ext(\calC^\ast_{(\calK,E_{\adm})}))\\
        &\quad +w(E^-\cap\Int(\calC)\cap\Int(\calC^\ast_{(\calK,E_{\adm})}))+2w(E^-\cap\Int(\calC)\cap\Ext(\calC^\ast_{(\calK,E_{\adm})}))  \\
        &\quad +w(E^- \cap \Int(\calC^\ast_{(\calK,E_{\adm})}))- |E^+ \cap \Ext(\calC)| - |E^- \cap \Int(\calC)|.
    \end{align*}
    The lemma now follows from the above and the following three simple observations:
    \begin{align*}
    &w(E^-\cap\Int(\calC)\cap\Int(\calC^\ast_{(\calK,E_{\adm})}))+2w(E^-\cap\Int(\calC)\cap\Ext(\calC^\ast_{(\calK,E_{\adm})})) \\
    &\quad = |E^- \cap \Int(\calC)| + |E^- \cap \Int(\calC) \cap \Ext(\calC^\ast_{(\calK,E_{\adm})})|,\\
    &   w(E^- \cap \Int(\calC^\ast_{(\calK,E_{\adm})})) = |E^- \cap \Int(\calC^\ast_{(\calK,E_{\adm})})| \geq |E^- \cap \Int(\calC) \cap \Int(\calC^\ast_{(\calK,E_{\adm})})|,
    \end{align*}
    and
    \begin{align*}
    & w(E^+\cap\Ext(\calC)\cap\Int(\calC^\ast_{(\calK,E_{\adm})}))+2w(E^+\cap\Ext(\calC)\cap\Ext(\calC^\ast_{(\calK,E_{\adm})}))\\
     &\quad = w(E^+ \cap \Ext(\calC)) + w(E^+ \cap \Ext(\calC) \cap \Ext(\calC^\ast_{(\calK,E_{\adm})}))\\
     &\quad \geq w(E^+ \cap \Ext(\calC)) + |E^+ \cap \Ext(\calC) \cap \Ext(\calC^\ast_{(\calK,E_{\adm})})|\\
     &\quad = |E^+ \cap \Ext(\calC)| 
     + \beta \sum_{i=1}^\ell |E^+ \cap \Ext(\calC) \cap \Ext(\calC_i)|
     + |E^+ \cap \Ext(\calC) \cap \Ext(\calC^\ast_{(\calK,E_{\adm})})|.
       \end{align*}
\end{proof}

Fix $1 \leq i \leq k$.
By Lemmata~\ref{lem:pivot-crux} and~\ref{lem:pivot-special}, we have
\begin{align}
\cost(\calC_i'') &\leq 2\left(|E^- \cap \Int(\calC_{i-1}')| + |E^- \cap \Int(\calC_i)| + |E^- \cap \Int(\calC_i')|\right) \nonumber\\
                 &\quad+ 2\big(
                    |E^+ \cap \Ext(\calC_{i-1}') \cap \Ext(\calC_i)| + 
                    |E^+ \cap \Ext(\calC_{i-1}') \cap \Ext(\calC_i')|\nonumber\\
                 &\quad\qquad +
                    |E^+ \cap \Ext(\calC_i) \cap \Ext(\calC_i')|\big).\label{eq:bnd-pivot}
\end{align} 

Recall $\beta = 0.5$. 
Lemma~\ref{lem:iterated-M-function} applied to $\calC_i$ gives
\begin{align}
& \left(\hat{\alpha}+\varepsilon\right)\cost(\calC^\ast_{(\calK,E_{\adm})}) + |E^+ \cap \Ext(\calC^\ast_{(\calK,E_{\adm})}) \cap \Ext(\calC_{i-1}')| \nonumber\\
& \quad \geq 0.5 |E^+ \cap \Ext(\calC_i) \cap \Ext(\calC_{i-1}')| \nonumber\\
&\quad + |E^+ \cap \Ext(\calC^\ast_{(\calK,E_{\adm})}) \cap \Ext(\calC_i)| + 
           |E^- \cap \Int(\calC_i)|.\label{eq:bnd-Ci}
\end{align}
Lemma~\ref{lem:iterated-M-function} applied to $\calC_i'$ gives
\begin{align}
& \left(\hat{\alpha}+\varepsilon\right)\cost(\calC^\ast_{(\calK,E_{\adm})}) + |E^+ \cap \Ext(\calC^\ast_{(\calK,E_{\adm})}) \cap \Ext(\calC_{i-1}')| + |E^+ \cap \Ext(\calC^\ast_{(\calK,E_{\adm})}) \cap \Ext(\calC_i)| \nonumber\\
& \quad \geq 0.5 |E^+ \cap \Ext(\calC_i') \cap \Ext(\calC_{i-1}')| + 0.5 |E^+ \cap \Ext(\calC_i') \cap \Ext(\calC_i)| \nonumber\\
& \quad + |E^+ \cap \Ext(\calC^\ast_{(\calK,E_{\adm})}) \cap \Ext(\calC_i')| + |E^- \cap \Int(\calC_i')|.\label{eq:bnd-Cii}
\end{align}
To bound the right hand side of~\eqref{eq:bnd-pivot}, we add twice~\eqref{eq:bnd-Ci} with twice~\eqref{eq:bnd-Cii}, obtaining the following.
\begin{align*}
& \left(4\hat{\alpha}+4\varepsilon\right)\cost(\calC^\ast_{(\calK,E_{\adm})}) + 4|E^+ \cap \Ext(\calC^\ast_{(\calK,E_{\adm})}) \cap \Ext(\calC_{i-1}')| \\
& \quad \geq 2|E^- \cap \Int(\calC_i)| + 2|E^- \cap \Int(\calC_i')| \\
& \quad + |E^+ \cap \Ext(\calC_{i-1}') \cap \Ext(\calC_i)| + 
                    |E^+ \cap \Ext(\calC_{i-1}') \cap \Ext(\calC_i')|\\
                 &\quad +
                    |E^+ \cap \Ext(\calC_i) \cap \Ext(\calC_i')| + 2|E^+ \cap \Ext(\calC^\ast_{(\calK,E_{\adm})}) \cap \Ext(\calC_i')|.
\end{align*}
Combining the above with~\eqref{eq:bnd-pivot}, we obtain
\begin{align*}
& \left(4\hat{\alpha}+4\varepsilon\right)\cost(\calC^\ast_{(\calK,E_{\adm})}) + |E^- \cap \Int(\calC_{i-1}')| + 4|E^+ \cap \Ext(\calC^\ast_{(\calK,E_{\adm})}) \cap \Ext(\calC_{i-1}')|\\
& \quad \geq 0.5 \cost(\calC_i'') + |E^- \cap \Int(\calC_i')| + 2|E^+ \cap \Ext(\calC^\ast_{(\calK,E_{\adm})}) \cap  \Ext(\calC_{i}')|.
\end{align*}
Using the assumption $\cost(\calC_i'') > (2-\hat{\alpha})\cost(\calC^\ast_{(\calK,E_{\adm})})$, we get
\begin{align}
& \left(4.5\hat{\alpha}+4\varepsilon - 1\right)\cost(\calC^\ast_{(\calK,E_{\adm})}) + |E^- \cap \Int(\calC_{i-1}')| + 4|E^+ \cap \Ext(\calC^\ast_{(\calK,E_{\adm})}) \cap \Ext(\calC_{i-1}')|\nonumber\\
& \quad > |E^- \cap \Int(\calC_i')| + 2|E^+ \cap \Ext(\calC^\ast_{(\calK,E_{\adm})}) \cap \Ext(\calC_{i}')|.\label{eq:bnd-progress}
\end{align}
The above estimate motivates us to define the following value for $0 \leq i \leq k$:
\[ b_i := \frac{|E^- \cap \Int(\calC_i')| + 2|E^+ \cap \Ext(\calC^\ast_{(\calK,E_{\adm})}) \cap \Ext(\calC_i')|}{\cost(\calC^\ast_{(\calK,E_{\adm})})}. \]
Rewriting~\eqref{eq:bnd-progress}, we obtain
\begin{equation}\label{eq:bnd-progress2}
b_{i-1} + 4.5\hat{\alpha}+4\varepsilon - 1 > b_i - b_{i-1}.
\end{equation}
Applying Lemma~\ref{lem:iterated-M-function} to $\calC_0'$, we have
\[ \left(\hat{\alpha} + \varepsilon\right) \cost(\calC^\ast_{(\calK,E_{\adm})}) \geq |E^+ \cap \Ext(\calC^\ast_{(\calK,E_{\adm})}) \cap \Ext(\calC_0')| + |E^- \cap \Int(\calC_0')|. \]
Hence,
\begin{equation}\label{eq:bnd-b0}
b_0 \leq 2\hat{\alpha} + 2\varepsilon < 1.
\end{equation}
Since every $b_i$ is nonnegative, by the choice of $k$ there exists $i$ such that $b_i > b_{i-1} - (\frac{2}{13}-\hat{\alpha})$. 
Let $1 \leq i_0 \leq k$ be minimum such that
\[ b_{i_0} > b_{i_0-1} - \left(\frac{2}{13}-\hat{\alpha}\right). \]
By~\eqref{eq:bnd-b0} and the definition of $i_0$, for every $0 \leq i < i_0$ it holds that
\[ b_i \leq 2\hat{\alpha} + 2\varepsilon. \]
Hence, by combining~\eqref{eq:bnd-progress2} with the definition of $i_0$, we have
\[ 6.5 \hat{\alpha} + 6\varepsilon - 1 > \hat{\alpha} - \frac{2}{13}. \]
Equivalently,
\[ 5.5 \hat{\alpha} + 6\varepsilon > \frac{11}{13}. \]
Plugging in the definition of $\hat{\alpha}$,
\[ 5.5 \alpha + 12\varepsilon > \frac{11}{13}. \]
This is a contradiction with the assumption $\alpha < \frac{2}{13}$ and the choice of $\varepsilon_0$. 
This finishes the proof of Theorem~\ref{thm:it-ls}.

\newcommand{\nearopt}{\calC^*_{(\calK,E_{\adm})}}

\section{Polynomial-Time Implementation of Local Search}

In this section we provide a polynomial-time implementation of the local search routine
needed in Section~\ref{sec:iter-ls}.
That is, given a \textsc{Correlation Clustering} instance $G$
with a preclustering $(\calK, E_{\adm})$ obtained via Theorem~\ref{thm:preprocessed-wrapper}
(with the implicit $\varepsilon$-good clustering scheme $\nearopt$)
and a weight function $w$, we are to find a $\gamma$-good local optimum for $w$. 

The usage in Section~\ref{sec:iter-ls} takes $\varepsilon > 0$ 
as a sufficiently small accuracy parameter 
and $0 < \gamma < \varepsilon^{13}/4$ as a second parameter.
The weight function $w$ is equal to $1$ on non-edges and takes values $w(e) \in [1,2]$
for edges $e \in E^+$. 
It will be a bit cleaner to consider a slightly more general variant
where $w(e) \in [1,W]$ for a constant parameter $W \geq 1$, but one can keep in mind
that we actually only need the case $W=2$.
In our algorithm, the weight of a pair can be computed in constant time by checking whether they are neighbors and whether they are in the same cluster in some previous local search solution(s).

\begin{lemma}
\label{lem:cost-adm}
Let $w$ be a weight function that equals $1$ on nonedges and takes values in $[1,W]$ on edges.
Consider two clusterings $\calC$ and $\calC'$ and assume that for each cluster $C_i\in\calC'$, 
\begin{equation}\label{eq:c-optimal-cprime}
\cost_w(\calC + C_i)  + \gamma \sum_{u \in C_i} d^{adm}(u) + \Delta_i \ge \cost_w(\calC).    
\end{equation}
 Then $\cost_w(\calC) \le 2\cost_w(\calC') + \varepsilon \cost(\nearopt) + \sum_i\Delta_i$.
\end{lemma}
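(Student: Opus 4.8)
The plan is to reuse, essentially verbatim, the summation argument behind the basic $2$-approximation of local search (the proof of Lemma~\ref{lem:prop-ls}), now carrying along the per-cluster slack $\gamma\sum_{u\in C_i}d^{adm}(u)+\Delta_i$ and converting the accumulated $\gamma$-slack into an $\varepsilon\cost(\nearopt)$ term exactly as at the end of the proof of Lemma~\ref{lem:weighted-local-search}.

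First I would rewrite the hypothesis~\eqref{eq:c-optimal-cprime} as
\[
\cost_w(\calC) - \cost_w(\calC + C_i) \le \gamma \sum_{u\in C_i} d^{adm}(u) + \Delta_i ,
\]
and note that passing from $\calC$ to $\calC + C_i$ changes the status only of pairs incident to $C_i$ (at least one endpoint in $C_i$). Hence the left-hand side equals the $w$-cost that $\calC$ pays on pairs incident to $C_i$, minus $w(E^-\cap\Int(C_i)) + w(E^+\cap\Ext(C_i))$, the latter being precisely the $w$-cost that the freshly created cluster $C_i$ pays on those pairs. This is the same bookkeeping identity as in Lemma~\ref{lem:prop-ls}.

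Next I would sum over all clusters $C_i\in\calC'$. Since $\calC'$ is a partition of $V(G)$, the sets $\Int(C_i)$ partition $\Int(\calC')$, while every edge of $\Ext(\calC')$ lies in exactly two of the $\Ext(C_i)$; therefore $\sum_i\bigl(w(E^-\cap\Int(C_i)) + w(E^+\cap\Ext(C_i))\bigr) = \cost_w^-(\calC') + 2\cost_w^+(\calC') \le 2\cost_w(\calC')$. On the other side, every pair that $\calC$ pays for is incident to at least one $C_i$, so the summed ``$\calC$ pays'' term is at least $\cost_w(\calC)$. Combining,
\[
\cost_w(\calC) \le 2\cost_w(\calC') + \gamma\sum_i\sum_{u\in C_i} d^{adm}(u) + \sum_i\Delta_i .
\]

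To conclude, because $\calC'$ partitions $V(G)$ we have $\sum_i\sum_{u\in C_i} d^{adm}(u) = \sum_{u\in V(G)} d^{adm}(u) = 2|E_{\adm}|$, and then $2\gamma|E_{\adm}| \le \tfrac{4\gamma}{\varepsilon^{12}}\cost(\nearopt) < \varepsilon\cost(\nearopt)$ using $\cost(\nearopt)\ge(\varepsilon^{12}/2)|E_{\adm}|$ from Theorem~\ref{thm:preprocessed-wrapper} together with $\gamma<\varepsilon^{13}/4$. This gives the claimed inequality. There is no genuinely hard step; the only points requiring care are the double-counting bookkeeping (external pairs of $\calC'$ counted twice, internal pairs once, and pairs paid by $\calC$ counted at least once) and observing that the parameter $W$ — and the normalization of $w$ on nonedges — play no role in this particular lemma, only nonnegativity of $w$ is used.
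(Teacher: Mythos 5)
Your proposal is correct and follows essentially the same route as the paper: restate the hypothesis as the bookkeeping identity on pairs incident to $C_i$, sum over the clusters of $\calC'$ (internal pairs counted once, external edges twice, and the pairs $\calC$ pays for at least once), and convert $2\gamma|E_{\adm}|$ into $\varepsilon\cost(\nearopt)$ via Theorem~\ref{thm:preprocessed-wrapper} and $\gamma<\varepsilon^{13}/4$. No gaps.
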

\begin{proof}
Equivalently to \eqref{eq:c-optimal-cprime} we have
 \begin{align*}
 & w(E^+ \cap \Ext(C_i)) + |E^- \cap \Int(C_i)| + \gamma \sum_{u \in C_i} d^{\adm}(u) + \Delta_i \geq\\
 & \quad 
   w(E^+ \cap \Ext(\calC) \cap \Int(C_i)) + w(E^+ \cap \Ext(\calC) \cap \Ext(C_i))\\
  & \quad+ |E^- \cap \Int(\calC) \cap \Int(C_i)| + |E^- \cap \Int(\calC) \cap \Ext(C_i)|.
\end{align*}
The theorem follows from summing the above inequality for every cluster $C_i$ of $\calC'$
and observing that
\begin{align*}
 \sum_{C_i \in \calC'} \gamma \sum_{u \in C} d^{\adm}(u) &= \gamma \sum_{u \in V(G)} d^{\adm}(u)
    = 2\gamma |E^{\adm}| \\
   & \leq \frac{4\gamma}{\varepsilon^{12}} \cost(\nearopt) 
    \leq \varepsilon \cost(\nearopt).
\end{align*}
Here, the penultimate inequality uses the properties of $(\calK,E^{\adm})$ promised
by Theorem~\ref{thm:preprocessed-wrapper} while the last inequality
uses the assumption $\gamma < \varepsilon^{13}/4$.
\end{proof}

We will need the following weight-adjusted variants of previously introduced graph notation.
\begin{definition}
    Let $d_w(v) = \sum_{u \mid (u, v) \in E^+}w(u, v)$ be the weighted degree of $v$ under $w$. 
    Similarly, we define
    \[
        d_w(v, S) = \sum_{u \mid u \in S \land (u, v) \in E^+}w(u, v).
    \]
    where $S$ is a multiset of vertices. 
\end{definition}

\begin{observation}
    The following statements follow directly from the properties of $(\calK,E_{\adm})$.
    \begin{itemize}
        \item $\forall v \in V, d(v) \leq d_w(v) \leq W d(v)$. 
        \item $d^{\adm}(v) \leq 2\varepsilon^{-3}d_w(v)$. 
        \item If $(u, v)$ is an admissible pair, then $\frac{1}{2W}\varepsilon d_w(v) \leq d_w(u) \leq 2W\varepsilon^{-1}d_w(v)$. 
    \end{itemize}    
\end{observation}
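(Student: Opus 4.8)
All three statements are immediate consequences of the definition of $d_w$ together with the $\varepsilon$-goodness guarantees of Definition~\ref{def:precluster}, so the plan is just to unwind the definitions in the right order while tracking the constants.

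For the first bullet, I would expand $d_w(v) = \sum_{u\,:\,(u,v)\in E^+} w(u,v)$ and recall that $w$ equals $1$ on nonedges and takes values in $[1,W]$ on edges. There are exactly $d(v)$ summands, each lying in $[1,W]$, so $d(v) \le d_w(v) \le W\,d(v)$ follows at once. For the second bullet, I would invoke property~(1) of Definition~\ref{def:precluster}, namely $d^{\adm}(v) \le 2\varepsilon^{-3} d(v)$, and then replace $d(v)$ by the larger quantity $d_w(v)$ using the lower bound from the first bullet, which gives $d^{\adm}(v) \le 2\varepsilon^{-3} d_w(v)$.

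For the third bullet, the point to notice is that $uv \in E_\adm$ is an unordered pair, so property~(2) of Definition~\ref{def:precluster} can be applied with $u$ and $v$ in either role: $d(u) \le 2\varepsilon^{-1} d(v)$ and also $d(v) \le 2\varepsilon^{-1} d(u)$, i.e.\ $d(u) \ge \tfrac{\varepsilon}{2} d(v)$. Sandwiching with the first bullet on both sides then finishes it: $d_w(u) \le W d(u) \le 2W\varepsilon^{-1} d(v) \le 2W\varepsilon^{-1} d_w(v)$, using $d(v) \le d_w(v)$; and $d_w(u) \ge d(u) \ge \tfrac{\varepsilon}{2} d(v) \ge \tfrac{\varepsilon}{2W} d_w(v)$, where the last step uses $d_w(v) \le W d(v)$.

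There is no genuine obstacle here; the only things to watch are that the admissible-pair degree inequality must be used symmetrically to obtain a two-sided comparison of $d(u)$ and $d(v)$, and that the factor-$W$ slack between $d$ and $d_w$ is applied on the correct side of each inequality, and not twice. One could alternatively extract the two-sided degree comparison from the degree-similar clause of the Preclustering algorithm, but invoking Definition~\ref{def:precluster}(2) directly keeps the constants consistent with the rest of the section.
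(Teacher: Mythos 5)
Your proof is correct and is exactly the intended argument: the paper states this as an observation "following directly from the properties of $(\calK,E_{\adm})$" without writing out a proof, and your unwinding — $w\in[1,W]$ on edges for the first bullet, Definition~\ref{def:precluster}(1) plus $d(v)\le d_w(v)$ for the second, and the symmetric application of Definition~\ref{def:precluster}(2) sandwiched by the first bullet for the third — is precisely that direct derivation with the constants handled correctly.
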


We define the neighborhood function $N(v)$ as follows, which is slightly different from $N_v$, the admissible neighborhood. 
\[
N(v) = \begin{cases}
    N_v \setminus \bigcup_{K \in \calK} K & \mathrm{if\ }v\mathrm{\ does\ not\ belong\ to\ any\ atom,}\\
    K \cup \left(\bigcap_{u \in K} N_u \right) & \mathrm{if\ }v\mathrm{\ belongs\ to\ an\ atom\ }K.
\end{cases}
\]
We define
\[
    K(v) = \begin{cases}
            \{v\} & \mathrm{if\ }v\mathrm{\ does\ not\ belong\ to\ any\ atom}, \\
            K' & \mathrm{if\ }v\mathrm{\ belongs\ to\ an\ atom\ }K'.
    \end{cases}
\]
\[
    D(v) = N(v) \setminus K(v) ~. 
\]

\paragraph{Estimates of set sizes.}
We will need a few estimates of sizes of various neighborhood-like sets. 

\begin{lemma}
    \label{lem:size-of-neighborhood}
    For all $v \in V$, $|N(v)| \leq 6\varepsilon^{-3}d(v)$. 
\end{lemma}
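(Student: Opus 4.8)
The plan is to bound $|N(v)|$ by splitting into the two cases in the definition of $N(v)$ and in each case using the bounds $d^{\adm}(v) \le 2\varepsilon^{-3} d(v)$ from Definition~\ref{def:precluster} together with the degree-similarity condition for admissible pairs.

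\textbf{Case 1: $v$ is not in any atom.} Here $N(v) = N_v \setminus \bigcup_{K \in \calK} K$, where $N_v$ is the admissible neighborhood of $v$. So $|N(v)| \le |N_v| = d^{\adm}(v) \le 2\varepsilon^{-3} d(v) \le 6\varepsilon^{-3} d(v)$, and we are done immediately.

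\textbf{Case 2: $v$ belongs to an atom $K$.} Now $N(v) = K \cup \left(\bigcap_{u \in K} N_u\right)$. First I would bound $|K|$. By the $\varepsilon$-goodness property (bullet~3 of Definition~\ref{def:precluster}), every vertex in $K$ is adjacent to at least a $(1 - O(\varepsilon))$ fraction of $K$, so $d(v) \ge (1-O(\varepsilon))(|K|-1)$, giving $|K| \le (1+O(\varepsilon)) d(v) + 1 \le 2 d(v)$ for $\varepsilon$ small (and noting $|K| \ge 2$ so $d(v) \ge 1$). Next, to bound $|\bigcap_{u \in K} N_u|$, I would fix any single vertex $u_0 \in K$ and observe $\bigcap_{u \in K} N_u \subseteq N_{u_0}$, so $|\bigcap_{u \in K} N_u| \le d^{\adm}(u_0) \le 2\varepsilon^{-3} d(u_0)$. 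The last thing needed is to relate $d(u_0)$ to $d(v)$: since $u_0, v \in K$ are both adjacent to a $(1-O(\varepsilon))$ fraction of $K$ and have $O(\varepsilon|K|)$ neighbors outside $K$, we get $d(u_0) \le (1+O(\varepsilon))|K| \le (1+O(\varepsilon)) \cdot 2d(v) \le 4 d(v)$, say, for small $\varepsilon$. Combining, $|N(v)| \le |K| + |\bigcap_{u\in K} N_u| \le 2d(v) + 2\varepsilon^{-3} \cdot 4 d(v) \le 6 \varepsilon^{-3} d(v)$, where the crude constants absorb into the $6\varepsilon^{-3}$ since $2\varepsilon^{-3}$ is by far the dominant term.

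The only mild obstacle is keeping the $O(\varepsilon)$ slacks from bullet~3 of Definition~\ref{def:precluster} under control so that all the ``$\le 2d(v)$''-type bounds hold for the chosen range of $\varepsilon$; this is routine bookkeeping and one simply needs $\varepsilon$ small enough that $(1+O(\varepsilon)) \le 2$ (and that $d(v) \ge 1$, which holds whenever $|K| \ge 2$). The headroom between the true constant ($\approx 2 + 8\varepsilon^{-3}$) and the claimed $6\varepsilon^{-3}$ is comfortable, so there is no delicate constant-chasing required — the point is just that $|N(v)| = O(\varepsilon^{-3} d(v))$ with the stated explicit constant.
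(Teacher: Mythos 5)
Your overall strategy matches the paper's (bound $|K|$ via bullet~3 of Definition~\ref{def:precluster}, bound the intersection of admissible neighborhoods by a single $N_{u_0}$, add the two), but the final arithmetic in Case~2 is wrong as written: you arrive at $|N(v)| \le 2d(v) + 2\varepsilon^{-3}\cdot 4 d(v) = 2d(v) + 8\varepsilon^{-3} d(v)$ and then assert this is at most $6\varepsilon^{-3} d(v)$. It is not — $8\varepsilon^{-3} > 6\varepsilon^{-3}$ for every $\varepsilon$, so no amount of "absorbing crude constants" rescues the claimed bound. The statement you call comfortable headroom ("the true constant $\approx 2+8\varepsilon^{-3}$ versus the claimed $6\varepsilon^{-3}$") is exactly backwards.

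The source of the loss is your choice of an \emph{arbitrary} $u_0 \in K$, which forces you to pay the degree-comparison factor $d(u_0) \le 4d(v)$. Since $v$ itself belongs to $K$, you should simply take $u_0 = v$: then $\bigcap_{u\in K} N_u \subseteq N_v$ gives $\bigl|\bigcap_{u\in K} N_u\bigr| \le d^{\adm}(v) + 1 \le 2\varepsilon^{-3} d(v) + 1$ with no degree translation at all. Combined with $|K| \le 2d(v)+1$, this yields $|N(v)| \le 2\varepsilon^{-3} d(v) + 2d(v) + 2 \le 6\varepsilon^{-3} d(v)$ for small $\varepsilon$ (using $d(v)\ge 1$ when $|K|\ge 2$), which is precisely the paper's argument. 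So the gap is a one-line fix, but as submitted the proof does not establish the stated constant.
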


\begin{proof}
    Since $|N_v| = d^{adm}(v) + 1 \leq 2\varepsilon^{-3}d(v) + 1$, we only need to upper bound $|K|$ when $v$ belongs to atom $K$. 
    According to the \cref{def:precluster}, for sufficient small $\epsilon$, at least half of vertices in $K$ are neighbors of $v$, so $|K| \leq 2d(v) + 1$. 
    Therefore $|N(v)| \leq |N_v| + |K| \leq 6\varepsilon^{-3}d(v)$. 
\end{proof}

\begin{lemma}
    \label{lem:neighbor-degree-similarity}
    For all $r \in V, v \in N(r)$, $|N(r)| \leq 12\varepsilon^{-4}d(v)$. 
\end{lemma}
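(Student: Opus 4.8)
The plan is to derive the bound from \cref{lem:size-of-neighborhood} together with a degree-similarity estimate between $r$ and $v$. Since \cref{lem:size-of-neighborhood} already gives $|N(r)| \le 6\varepsilon^{-3} d(r)$, it suffices to show that $v \in N(r)$ implies $d(r) \le 2\varepsilon^{-1} d(v)$; multiplying the two inequalities then yields $|N(r)| \le 12\varepsilon^{-4} d(v)$, as desired.

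To prove $d(r) \le 2\varepsilon^{-1} d(v)$, I would distinguish cases according to whether $r$ lies in an atom, unwinding the definition of $N(r)$ in each case. If $r$ is in no atom, then $v \in N(r) = N_r \setminus \bigcup_{K \in \calK} K$ means $v \in N_r$, so either $v = r$ (in which case the claim is trivial) or $rv \in E_\adm$, and then the second bullet of \cref{def:precluster} gives exactly $d(r) \le 2\varepsilon^{-1} d(v)$. If $r$ is in an atom $K$, then $v \in N(r) = K \cup \bigcap_{u \in K} N_u$, and I split further. If $v \in \bigcap_{u \in K} N_u$, then in particular $v \in N_r$; since $r \in \bigcup \calK$ and admissible pairs must have an endpoint outside $\bigcup \calK$, either $v = r$ or $rv \in E_\adm$, and we conclude again via the second bullet of \cref{def:precluster}.

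The only remaining and genuinely new case is $v, r \in K$ for a common atom $K$. Here I would use the third bullet of \cref{def:precluster} exactly as in the proof of \cref{lem:size-of-neighborhood}: for $\varepsilon$ sufficiently small, every vertex of $K$ is adjacent to at least half of $K$ and to only $O(\varepsilon |K|)$ vertices outside $K$, so $|K|/2 \le d(r), d(v) \le 2|K|$, which gives $d(r) \le 4 d(v) \le 2\varepsilon^{-1} d(v)$ whenever $\varepsilon \le 1/2$. Combining the three cases with \cref{lem:size-of-neighborhood} finishes the proof. There is no real obstacle here; the only point requiring a little care — again mirroring \cref{lem:size-of-neighborhood} — is that the $O(\varepsilon)$ slack in the third bullet of \cref{def:precluster} must be absorbed by taking $\varepsilon$ small enough, which is exactly why the hypothesis that $\varepsilon$ is a small accuracy parameter is invoked.
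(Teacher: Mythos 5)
Your proof is correct and follows essentially the same route as the paper: split on whether $v$ is admissible to $r$ or shares an atom with $r$, and combine degree similarity with the bound $|N(r)|\le 6\varepsilon^{-3}d(r)$ from \cref{lem:size-of-neighborhood}. The only (harmless) difference is in the atom case, where the paper simply observes $N(r)=N(v)$ and applies \cref{lem:size-of-neighborhood} directly to $v$, whereas you re-derive a degree comparison $d(r)\le 4d(v)$ from the third bullet of \cref{def:precluster}; both work.
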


\begin{proof}
    Since $v \in N(r)$, either $v$ and $r$ are in the same cluster, or $v$ is admissible to $r$. 
    In the former case, $N(r) = N(v)$, according to lemma~\ref{lem:size-of-neighborhood}, $|N(r)| = |N(v)| \leq 6\varepsilon^{-3}d(v)$. 
    In the latter case, we have degree similarity between $r$ and $v$. According to lemma~\ref{lem:size-of-neighborhood}, $|N(r)| \leq 6\varepsilon^{-3}d(r) \leq 12\varepsilon^{-4}d(v)$. 
\end{proof}

All clusters in the clustering scheme maintained by our algorithm will be \emph{almost good clusters} in the following sense. 

\begin{definition}[Almost good clusters]
    We say a cluster $C$ is almost good if there exists a vertex $r\in V$ (not necessarily in $C$) such that $C\subseteq N(r)$.
\end{definition}

\begin{lemma}
    \label{lem:almost-good-size-bound}
    For any vertex $v$ and any almost good cluster $C$ containing $v$, $|C| \leq 12\varepsilon^{-4}d(v)$. 
\end{lemma}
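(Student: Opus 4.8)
The plan is to reduce Lemma~\ref{lem:almost-good-size-bound} directly to the two size estimates already established. By definition of an almost good cluster, there exists a vertex $r \in V$ (not necessarily in $C$) such that $C \subseteq N(r)$. In particular, the vertex $v \in C$ satisfies $v \in C \subseteq N(r)$, so $v \in N(r)$. This is exactly the hypothesis of Lemma~\ref{lem:neighbor-degree-similarity}, which gives $|N(r)| \le 12\varepsilon^{-4} d(v)$. Since $C \subseteq N(r)$, we get $|C| \le |N(r)| \le 12\varepsilon^{-4} d(v)$, as desired.

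Concretely, the only step requiring a tiny bit of care is checking that the vertex $r$ witnessing almost goodness of $C$ is a legitimate input to Lemma~\ref{lem:neighbor-degree-similarity}: that lemma is stated for all $r \in V$ and all $v \in N(r)$, and both conditions hold here ($r \in V$ by the definition of almost good cluster, and $v \in N(r)$ because $v \in C \subseteq N(r)$). There is no case analysis to perform at this level, since Lemma~\ref{lem:neighbor-degree-similarity} already absorbed the two cases (same atom versus admissible pair) in its own proof.

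I do not anticipate any real obstacle: this is a one-line consequence of Lemma~\ref{lem:neighbor-degree-similarity} together with monotonicity of cardinality under set inclusion. If anything, the ``hard part'' was already done in proving Lemma~\ref{lem:neighbor-degree-similarity} (and before it Lemma~\ref{lem:size-of-neighborhood}), where the degree-similarity property of admissible pairs from the $\varepsilon$-good preclustering and the near-completeness of atoms were used to relate $|N(r)|$ to $d(v)$ for $v \in N(r)$; here we merely quote it.
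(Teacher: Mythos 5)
Your proof is correct and matches the paper's own argument exactly: both invoke the witness $r$ from the definition of an almost good cluster, note $v \in C \subseteq N(r)$, and apply Lemma~\ref{lem:neighbor-degree-similarity} together with $|C| \le |N(r)|$.
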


\begin{proof}
    Since $C$ is almost good, there exists $r \in V$ such that $C \subseteq N(r)$. 
    According to lemma~\ref{lem:neighbor-degree-similarity}, $|C| \leq |N(r)| \leq 12\varepsilon^{-4}d(v)$. 
\end{proof}

Since every cluster considered by our local search algorithm will be an almost good cluster, 
      we have the following corollary.

\begin{corollary}
    \label{cor:bounded-minus-cost}
    Let $\calC$ be a clustering maintained by the local search algorithm, then for any $C \in \calC, v \in C$, $|C| \leq 12\varepsilon^{-4}d(v)$. 
\end{corollary}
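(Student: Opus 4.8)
The plan is to derive this corollary as a one-line consequence of Lemma~\ref{lem:almost-good-size-bound}, once we pin down the one nontrivial ingredient: that every cluster the local search ever keeps is an \emph{almost good} cluster in the sense of the definition above. So the first step is to establish this as an invariant of the implementation. At initialization the clustering is all singletons (or the atoms of $\calK$ padded with singletons); a singleton $\{v\}$ satisfies $\{v\} \subseteq N(v)$, and an atom $K$ satisfies $K \subseteq N(u)$ for any $u \in K$ directly by the definition of $N(u)$, so the initial clustering consists of almost good clusters.

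For the inductive step, recall that a local search move takes the current clustering $\calC$ to $\calC + C$ for some candidate cluster $C$. The implementation will only ever propose candidates $C$ that are built around a center vertex $r$ and satisfy $C \subseteq N(r)$ --- this is precisely how the candidate sets are enumerated in the efficient implementation. Hence the freshly created cluster $C$ is almost good. Every old cluster $C' \in \calC$ either is left unchanged or only loses some vertices to $C$; since $C' \setminus C \subseteq C'$ and $C' \subseteq N(r')$ held before the move for the witness $r'$, it still holds afterwards. Thus almost-goodness of all clusters is preserved throughout the execution of the algorithm.

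Given the invariant, the second step is immediate: fix $C \in \calC$ and $v \in C$. Since $C$ is almost good, Lemma~\ref{lem:almost-good-size-bound} applied to $v$ and $C$ yields $|C| \le 12\varepsilon^{-4} d(v)$, which is exactly the claim.

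I expect the only (minor) obstacle to be cleanly stating and verifying the invariant, namely that the candidate-generation step of the local search genuinely restricts to subsets of some $N(r)$ and that vertex removals preserve this containment. Both are design choices baked into how candidate clusters are enumerated rather than deep facts, so once that enumeration is described the argument is routine bookkeeping.
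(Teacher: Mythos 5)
Your proposal is correct and follows exactly the paper's route: the paper derives the corollary from Lemma~\ref{lem:almost-good-size-bound} together with the (asserted) invariant that every cluster the local search maintains is almost good, which is what you verify. You merely spell out the invariant check (initial atoms/singletons, new clusters $S' \subseteq N(r)$ from GenerateCluster, and preservation under vertex removal) that the paper leaves implicit.
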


The last estimate we need is the following. 

\begin{lemma} 
\label{lem:D-times-N-4.2}
    For any vertex $v$ and any $(\varepsilon, \varepsilon/2)$-good cluster $C$ containing $v$, 
    \[
        \frac{1}{576}\varepsilon^{8}|D(v)|\cdot |N(v)| \le \sum_{u \in C} d^{adm}(u) ~.
    \]
\end{lemma}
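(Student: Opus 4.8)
The plan is to lower-bound $\sum_{u \in C} d^{\adm}(u)$ by producing many admissible pairs among the vertices of $C$, using the structural guarantees of an $(\varepsilon,\varepsilon/2)$-good cluster together with the degree and set-size estimates already proved. First I would observe that since $C$ is $(\varepsilon,\varepsilon/2)$-good, for every $u \in C$ with $|C|>1$ we have $|C| \ge (\varepsilon/2) d(u)$, i.e.\ $d(u) \le (2/\varepsilon)|C|$. Applying Lemma~\ref{lem:size-of-neighborhood} (resp.\ Corollary~\ref{cor:bounded-minus-cost} / Lemma~\ref{lem:almost-good-size-bound}) to the fixed vertex $v \in C$ bounds $|N(v)|$ and $|D(v)|$ from above in terms of $d(v)$, and hence in terms of $|C|$: both are $O(\varepsilon^{-4}|C|)$. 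So it suffices to show $\sum_{u\in C} d^{\adm}(u) = \Omega(\varepsilon^{O(1)} |C|^2)$, since then $|D(v)|\cdot|N(v)| = O(\varepsilon^{-8}|C|^2)$ is absorbed with the claimed $\varepsilon^8/576$ constant.

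Next I would exhibit the quadratic number of admissible pairs inside $C$. By the second bullet of the good-cluster definition, no pair of distinct vertices of $C$ is non-admissible; so each such pair is either atomic or admissible. Split $C$ into its atomic part $C \cap V_\calK$ (a union of whole atoms, by the first bullet) and the rest. For a vertex $u$ in the non-atomic part, every other vertex of $C$ is either in the same atom as $u$ (impossible, $u$ is non-atomic) or forms an admissible pair with $u$, so $d^{\adm}(u) \ge |C| - |K(u)| - 1 = |C|-2$ if $u$ is non-atomic — wait, more carefully: $u$ contributes an admissible pair to every $w\in C$ with $uw$ not atomic, which is all of $C\setminus(\{u\})$ minus possibly vertices sharing an atom with $u$; if $u$ is non-atomic this is $|C|-1$. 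For $u$ in an atom $K\subseteq C$, the pairs to $C\setminus K$ are admissible, giving $d^{\adm}(u) \ge |C|-|K|$. The only way this fails to be $\Omega(|C|)$ per vertex is if some single atom $K$ occupies almost all of $C$. I would handle that case separately: if $|K| \ge |C|/2$ for an atom $K\subseteq C$, then using $d^{\adm}$ of the vertices outside $K$ (each at least $|K| \ge |C|/2$ many admissible partners, namely all of $K$), the sum is still $\ge |C\setminus K|\cdot|K|$; and if additionally $|C\setminus K|$ is tiny this forces $|C|$ close to $|K| \le 2d(v)+1$ with $v\in K$, so $|C|\le 3d(v)$, and then $|N(v)|,|D(v)|$ are bounded directly with room to spare because $D(v)$ excludes $K(v)=K$ which is the bulk of $C$.

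The cleanest uniform argument, which I would actually write, is: let $t = \max_{K\in\calK, K\subseteq C} |K|$ (or $t=1$ if $C$ has no atom), and note $t \le 2d(v)+1 \le 3|C|$ is not what we want — rather, pick any $u_0\in C$; then $d^{\adm}(u_0)\ge |C|-|K(u_0)|$, and summing, $\sum_{u\in C}d^{\adm}(u) \ge \sum_{u\in C}(|C|-|K(u)|) = |C|^2 - \sum_{u\in C}|K(u)| = |C|^2 - \sum_{K\subseteq C}|K|^2 \ge |C|^2 - t|C| \ge |C|(|C|-t)$. So if $t \le |C|/2$ we get $\ge |C|^2/2$ and we are done. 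If $t > |C|/2$, fix the large atom $K$ with $|K|=t$ and a vertex $v_K\in K$; then $|K| \le 2d(v_K)+1$, and also $d(v)\ge(\varepsilon/2)^{-1}$-type bound... — here I would instead use that $|D(v)| = |N(v)\setminus K(v)|$ and when $v\in K$, $K(v)=K$, and by Lemma~\ref{lem:size-of-neighborhood} applied to $v$, $|N(v)\setminus K| \le |N_v| \le 2\varepsilon^{-3}d(v)+1$; combined with $d(v) \le (2/\varepsilon)|C|$ and $|C|-|K| < |C|/2$, one still gets $|D(v)|\cdot|N(v)| = O(\varepsilon^{-6}|C|\cdot|C\setminus K|)$ after noting $D(v)\subseteq (C\text{-side stuff})\cup\dots$ — this is the fiddly part.

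\textbf{Main obstacle.} The real difficulty is the degenerate case where $C$ is essentially one big atom plus a handful of extra vertices: then the number of admissible pairs inside $C$ is only $\Theta(|C|\cdot|C\setminus K|)$, which can be much smaller than $|C|^2$, so the bound $\sum d^{\adm}(u) \ge \varepsilon^{O(1)}|D(v)||N(v)|$ can only hold because in that regime $|D(v)|$ is correspondingly small — one must show $|D(v)| = O(\varepsilon^{-c}|C\setminus K|)$ when $v$ lies in the dominant atom, using that $D(v)$ excludes $K(v)$ and the atom-neighborhood definition of $N(v)$. Getting the $\varepsilon$-powers to line up with the stated $\varepsilon^8/576$ is then just bookkeeping, but identifying that $D(v)$ shrinks exactly when the admissible-pair count shrinks is the crux of the argument.
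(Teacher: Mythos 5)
Your non-degenerate case is sound and matches the paper's second case: when no atom occupies more than (say) half of $C$, every non-atomic pair inside $C$ is admissible, so $\sum_{u\in C} d^{\adm}(u) \ge |C|\,(|C|-t) \ge |C|^2/2$, and combining $|C|\ge \tfrac{\varepsilon}{2}d(v)$ with Lemma~\ref{lem:size-of-neighborhood} gives $|N(v)|\cdot|D(v)| \le |N(v)|^2 \le 144\varepsilon^{-8}|C|^2$, which closes the bound with room to spare. The gap is in the dominant-atom case, and precisely in the claim you yourself flag as the crux: that $|D(v)| = O(\varepsilon^{-c}|C\setminus K|)$ when $v$ lies in the dominant atom $K$. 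That claim is false. Take $C=K$ to be a single atom: then $|C\setminus K|=0$, yet $D(v) = \bigl(\bigcap_{u\in K} N_u\bigr)\setminus K$ can have size $\Theta(\varepsilon^{-3}|K|)$, because the common admissible neighborhood of an atom need not intersect $C$ at all. So no upper bound on $|D(v)|$ in terms of $|C\setminus K|$ exists, and the degenerate case cannot be repaired along the lines you propose.

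The lemma survives in that regime for a different reason, which is the one ingredient your proposal is missing: throughout, you only count admissible pairs \emph{inside} $C$, whereas $d^{\adm}(u)$ counts admissible partners anywhere in the graph. By the definition of $N(v)$ for atom vertices, $D(v) \subseteq N_u$ for every $u\in K(v)$, hence $d^{\adm}(u) \ge |D(v)|$ and $\sum_{u\in K(v)} d^{\adm}(u) \ge |K(v)|\cdot|D(v)|$. The paper's proof is exactly a dichotomy on $|K(v)|$ versus $\alpha|N(v)|$ with $\alpha=\varepsilon^8/24$: if $|K(v)|\ge\alpha|N(v)|$, the displayed inequality already yields $\alpha|N(v)|\cdot|D(v)|$; otherwise $|C|-|K(v)| \ge (\varepsilon^4/12-\alpha)|N(v)| \ge \tfrac{\varepsilon^4}{24}|N(v)|$ and the within-$C$ count $(|C|-|K(v)|)(|C|-1)$ that you computed suffices. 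Replacing your degenerate-case argument by this observation gives the stated constant $\varepsilon^8/576$.
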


\begin{proof}
    Since $C$ is a $(\varepsilon, \varepsilon/2)$-good cluster, we have $|C| \geq \frac{1}{2}\varepsilon d(v) \geq \frac{1}{12}\varepsilon^4 |N(v)|$. 
    The second inequality follows from lemma~\ref{lem:size-of-neighborhood}. 
    
    If $|K(v)| \geq \alpha |N(v)|$, then 
    \[
        \sum_{u \in C}d^{\adm}(u) \geq \sum_{u \in K(v)}d^{\adm}(u) \geq \alpha|N(v)|\cdot |D(v)| ~.
    \]
    If $|K(v)| \leq \alpha |N(v)|$, then 
    \begin{align*}
        \sum_{u \in C}d^{\adm}(u) & \geq \sum_{u \in C \setminus K(v)}d^{\adm}(u) \\
        & \geq (|C| - |K(v)|)(|C| - 1) \\
        & \geq \frac{1}{2}|C|(|C| - |K(v)|) \\ 
        & \geq \frac{1}{24}\varepsilon^4\left(\frac{1}{12}\varepsilon^4 - \alpha\right)|N(v)|^2
    \end{align*}
    Let $\alpha = \frac{1}{24}\varepsilon^8$, the lemma always holds. 
\end{proof}

\paragraph{Costs and estimated costs.}
As discussed in the introduction, we try to find a cluster that improves our local search solution
with the help of sampling. We will sample a number of vertices from the sought cluster
and then, for every other vertex $v$, we try to guess whether $v$ is in the sought cluster 
by looking at how many vertices from the sample are adjacent to $v$. 
To this end, we will need the following definitions.

\begin{definition}
Consider a clustering $\calC$ of the graph and let $v$ be a vertex, $C(v) \in \calC$ be its cluster and $K$ be an arbitrary set of vertices. 
Under weight $w$, we define $\rmCostStays_w(K,v)$ to be the total weight of edges $(u,v)$ that are violated in $\calC + (K\setminus \{v\})$ and $\rmCostMoves_w(K,v)$ to be the total weight of edges $(u,v)$ that are violated in $\calC+(K\cup\{v\})$. 
By definition, 
\begin{enumerate}
    \item $\rmCostStays_w(K,v) = d_w(v) - d_w(v, C(v)) + d_w(v, C(v) \cap K) + |C(v)| - |C(v) \cap K| - d(v, C(v)) + d(v, C(v) \cap K) - 1$. %
    \item $\rmCostMoves_w(K,v) = d_w(v) - d_w(v, K) + |K| - d(v,K) - 1$. 
\end{enumerate}
\end{definition}

\begin{lemma}
\label{lem:moves_estimate}
Let $\eta_0 > 1$. 
Consider a clustering $\calC$ of the graph and let $v$ be a vertex, $C(v)$ be its cluster and $K$ be an arbitrary set of vertices of size $s$. 
Furthermore, let $S = \{u_1, \dots, u_{\eta_0}\}$ be a sequence of uniform sample of $K$. 
Consider the clustering $\calC+K$
We define the following random variables
\begin{itemize}
    \item $\rmEstCostStays_w(S,s,v)$ := $d_w(v) - d_w(v, C(v)) - 1 + \frac{s}{\eta_0}\sum_{i = 1}^{\eta_0}[u_i \in C(v)](d_w(v, \{u_i\}) + d(v, \{u_i\}) - 1)$. 
    \item $\rmEstCostMoves_w(S,s,v)$ := $d_w(v) + s - 1 - \frac{s}{\eta_0}\sum_{i = 1}^{\eta_0}(d_w(v, \{u_i\}) + d(v, \{u_i\}))$. 
\end{itemize}
where $[\text{boolean expression}]$ is the indicator function. 

We have that
\begin{enumerate}
    \item ExpCostStays($s,v$) := $\mathbb{E}_S[\text{EstCostStays($S,s,v$)}] = $ CostStays, and  
    \item ExpCostMoves($s,v$) := $\mathbb{E}_S[\text{EstCostMoves($S,s,v$)}] = $ CostMoves. 
\end{enumerate}

\end{lemma}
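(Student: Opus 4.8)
The plan is to verify the two claimed identities by a direct expectation computation, exploiting the fact that $S = \{u_1,\dots,u_{\eta_0}\}$ is a sequence of uniform (independent) samples from $K$ and using linearity of expectation. The key observation is that for a fixed vertex $v$, the only random quantities appearing in $\rmEstCostStays_w(S,s,v)$ and $\rmEstCostMoves_w(S,s,v)$ are the sample-dependent sums, so it suffices to compute the expectation of a single summand $[u_i \in C(v)](d_w(v,\{u_i\}) + d(v,\{u_i\}) - 1)$ (respectively $d_w(v,\{u_i\}) + d(v,\{u_i\})$) over a uniformly random $u_i \in K$, and then multiply by $\eta_0$ and by $s/\eta_0 = |K|/\eta_0$.

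For the \textbf{CostMoves} identity, I would first expand $\mathbb{E}_S[\rmEstCostMoves_w(S,s,v)] = d_w(v) + s - 1 - \frac{s}{\eta_0}\sum_{i=1}^{\eta_0}\mathbb{E}[d_w(v,\{u_i\}) + d(v,\{u_i\})]$. Since each $u_i$ is uniform over $K$, $\mathbb{E}[d_w(v,\{u_i\})] = \frac{1}{|K|}\sum_{u \in K} d_w(v,\{u\}) = \frac{1}{|K|} d_w(v,K)$, and similarly $\mathbb{E}[d(v,\{u_i\})] = \frac{1}{|K|} d(v,K)$. Here I am using that $d_w(v,\{u\})$ is just $w(u,v)$ if $uv \in E^+$ and $0$ otherwise, so summing over $u \in K$ recovers $d_w(v,K)$ — and this holds as a multiset identity, which is compatible with $S$ being a sequence (with repetition). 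Plugging in and using $s = |K|$, the sum telescopes to $d_w(v) + s - 1 - \frac{s}{|K|}(d_w(v,K) + d(v,K)) = d_w(v) - d_w(v,K) + |K| - d(v,K) - 1$, which is exactly $\rmCostMoves_w(K,v)$ from its defining formula.

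For the \textbf{CostStays} identity, the computation is analogous but the indicator $[u_i \in C(v)]$ restricts the relevant sum to $C(v) \cap K$: I would compute $\mathbb{E}\big[[u_i \in C(v)](d_w(v,\{u_i\}) + d(v,\{u_i\}) - 1)\big] = \frac{1}{|K|}\sum_{u \in C(v) \cap K}(d_w(v,\{u\}) + d(v,\{u\}) - 1) = \frac{1}{|K|}\big(d_w(v, C(v)\cap K) + d(v, C(v)\cap K) - |C(v)\cap K|\big)$. Multiplying by $\eta_0 \cdot \frac{s}{\eta_0} = s = |K|$ and adding $d_w(v) - d_w(v,C(v)) - 1$, we get $d_w(v) - d_w(v,C(v)) - 1 + d_w(v,C(v)\cap K) + d(v,C(v)\cap K) - |C(v)\cap K|$. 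Matching this term-by-term against the defining formula for $\rmCostStays_w(K,v)$ requires reconciling $- |C(v)\cap K|$ and $+ d(v,C(v)\cap K)$ in my expression with the terms $+|C(v)| - |C(v)\cap K| - d(v,C(v)) + d(v,C(v)\cap K)$ there; the discrepancy is the $v$-independent (within the cluster) correction $+|C(v)| - d(v,C(v))$, which should come out correctly once one is careful about whether $v \in K$ and about the role of $K \setminus \{v\}$ versus $K$ in the definition of $\rmCostStays$.

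The main obstacle I anticipate is precisely this bookkeeping in the CostStays case: the definition of $\rmCostStays_w(K,v)$ refers to violated edges in $\calC + (K \setminus \{v\})$ (so $v$ itself is excluded from the new cluster), and one must check that the estimator's summand — which can include the term $i$ with $u_i = v$ when $v \in K$ — still produces the right expectation, or alternatively argue that the $\pm 1$ offsets and the $|C(v)| - d(v,C(v))$ terms (counting non-neighbors of $v$ inside its own cluster, which become satisfied non-edges once $v$ leaves) are exactly accounted for. This is a routine but error-prone algebraic identity; I would handle it by writing $\rmCostStays_w(K,v)$ as (cost of edges from $v$ to vertices outside $C(v)\setminus K$) plus (cost of non-edges from $v$ inside $C(v)\setminus K$) directly from the meaning of "$(u,v)$ violated in $\calC + (K\setminus\{v\})$," rather than manipulating the given closed-form expression, and then match to the expectation computed above.
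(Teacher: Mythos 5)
The paper offers no written proof of this lemma (it is asserted as following ``by definition''), so the intended argument is exactly the direct expectation computation you carry out: linearity plus $\mathbb{E}[f(u_i)] = \frac{1}{|K|}\sum_{u\in K}f(u)$ for each uniform sample. Your verification of item 2 is correct and complete: the expectation of the random sum is $\frac{s}{|K|}(d_w(v,K)+d(v,K))$, and with $s=|K|$ this reproduces $\rmCostMoves_w(K,v)$ exactly.

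For item 1, however, the gap you flag is real and your proposed fix will not close it. Your computation gives
$\mathbb{E}[\rmEstCostStays_w(S,s,v)] = d_w(v) - d_w(v,C(v)) - 1 + d_w(v,C(v)\cap K) + d(v,C(v)\cap K) - |C(v)\cap K|$,
which differs from the stated closed form of $\rmCostStays_w(K,v)$ by exactly $|C(v)| - d(v,C(v))$. You suggest this will ``come out correctly once one is careful about whether $v\in K$'' and about $K$ versus $K\setminus\{v\}$; but those distinctions only perturb the identity by $\pm 1$, whereas the missing quantity $|C(v)|-d(v,C(v))$ can be as large as $|C(v)|$. Re-deriving $\rmCostStays$ semantically, as you propose, does not help either: the cost of $v$ staying genuinely includes the non-edge cost $|C(v)\setminus K|-1-d(v,C(v)\setminus K)$ paid inside $v$'s residual cluster, and the deterministic part $d_w(v)-d_w(v,C(v))-1$ of the estimator accounts only for the $E^+$ cost of edges leaving $C(v)$, not for these internal non-edges. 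The correct conclusion is that the estimator as printed is biased by the deterministic amount $|C(v)|-d(v,C(v))$; the identity in item 1 holds only if the deterministic part of $\rmEstCostStays_w$ is read as $d_w(v)-d_w(v,C(v))+|C(v)|-d(v,C(v))-1$ (the full external-plus-internal cost $v$ pays in $\calC$), after which the random sum correctly subtracts off the contribution of $C(v)\cap K$. A complete proof must either establish the identity for this corrected estimator or explicitly record the additive offset; leaving it as ``should come out correctly'' is not sufficient.
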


The next two lemmata say that looking at the estimated costs indeed approximates the real cost well,
    even if we know the actual size of the cluster only approximately.

\begin{lemma}
\label{lem:concentration-weighted}
Let $\eta_0 = \eta^5 > 0$. Consider the setting of Lemma~\ref{lem:moves_estimate}, a vertex $v$ in cluster $C(v)$, an arbitrary cluster
$K$ of size $s$ and a sequence $S$ of random uniform sample of $K$ of length $\eta_0$.
Then, with probability at least  $1 - 4\exp\left(-\frac{1}{2}\eta\right)$ we have that the following
two inequalities hold:
\begin{equation}
    \label{eq:coststay-concentration}
    \rmEstCostStays_w(S,s,v) \in \rmExpCostStays_w(s,v) \pm \frac{1}{\eta^2}Ws
\end{equation}
\begin{equation}
    \label{eq:costmove-concentration}
    \rmEstCostMoves_w(S,s,v) \in \rmExpCostMoves_w(s,v) \pm \frac{1}{\eta^2}Ws
\end{equation}
\end{lemma}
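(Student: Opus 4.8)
The plan is to prove the two concentration bounds \eqref{eq:coststay-concentration} and \eqref{eq:costmove-concentration} separately, using a Hoeffding/Chernoff-type bound on the bounded random variables that appear in the sampling estimators of Lemma~\ref{lem:moves_estimate}. By Lemma~\ref{lem:moves_estimate} we already know that $\rmEstCostStays_w(S,s,v)$ and $\rmEstCostMoves_w(S,s,v)$ are unbiased estimators of $\rmExpCostStays_w(s,v)$ and $\rmExpCostMoves_w(s,v)$ respectively, so the only thing to establish is concentration around the mean.

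\textbf{Setup of the bounded summands.} For the ``moves'' estimator, write $\rmEstCostMoves_w(S,s,v) = d_w(v)+s-1 - \frac{s}{\eta_0}\sum_{i=1}^{\eta_0} X_i$ where $X_i := d_w(v,\{u_i\}) + d(v,\{u_i\})$. Each $X_i$ is a function of the single independent uniform sample $u_i \in K$, so the $X_i$ are i.i.d.; moreover $0 \le X_i \le w(u_i,v) + 1 \le W+1$, hence each $X_i$ lies in an interval of length $W+1 \le 2W$. Similarly for the ``stays'' estimator, $\rmEstCostStays_w(S,s,v) = d_w(v)-d_w(v,C(v))-1 + \frac{s}{\eta_0}\sum_{i=1}^{\eta_0} Y_i$ where $Y_i := [u_i \in C(v)]\,(d_w(v,\{u_i\}) + d(v,\{u_i\}) - 1)$; again the $Y_i$ are i.i.d.\ functions of $u_i$, and $Y_i \in [-1, W]$ when $u_i \in C(v)$ and $Y_i = 0$ otherwise, so $Y_i$ takes values in an interval of length at most $W+1 \le 2W$.

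\textbf{Applying the concentration bound.} In both cases the estimator is an affine function of $\bar{Z} := \frac{1}{\eta_0}\sum_{i=1}^{\eta_0} Z_i$ (with $Z_i = X_i$ or $Y_i$), scaled by $s$: specifically it equals (deterministic term) $\pm s\bar Z$. So $|\rmEstCost_w - \rmExpCost_w| = s\,|\bar Z - \mathbb{E}\bar Z|$. By Hoeffding's inequality for i.i.d.\ variables in an interval of length $\le 2W$,
\[
\Pr\!\left[\,|\bar Z - \mathbb{E}\bar Z| > \tfrac{1}{\eta^2}W\,\right] \le 2\exp\!\left(-\frac{2\eta_0 (W/\eta^2)^2}{(2W)^2}\right) = 2\exp\!\left(-\frac{\eta_0}{2\eta^4}\right) = 2\exp\!\left(-\frac{\eta^5}{2\eta^4}\right) = 2\exp\!\left(-\tfrac{1}{2}\eta\right),
\]
using $\eta_0 = \eta^5$. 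Multiplying through by $s$ gives that with probability at least $1 - 2\exp(-\tfrac12\eta)$ we have $|\rmEstCost_w - \rmExpCost_w| \le \frac{1}{\eta^2}Ws$, which is exactly one of the two desired bounds. A union bound over the two estimators yields the failure probability $4\exp(-\tfrac12\eta)$ claimed in the statement.

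\textbf{Main obstacle.} The calculations are routine; the one point requiring care is verifying that the summands really do have range bounded by $O(W)$ uniformly --- in particular for the ``stays'' estimator one must observe that when $u_i \notin C(v)$ the indicator kills the term, and when $u_i \in C(v)$ the quantity $d_w(v,\{u_i\}) + d(v,\{u_i\}) - 1$ is between $-1$ (when $u_i$ is a non-neighbor of $v$) and $W$ (when $u_i$ is a neighbor with $w(u_i,v)=W$), so the range is $W+1$ regardless. Once the boundedness is pinned down, the rest is a direct application of Hoeffding with the bookkeeping $\eta_0 = \eta^5$ chosen precisely so that $\eta_0/\eta^4 = \eta$.
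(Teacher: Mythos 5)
Your proposal is correct and follows essentially the same route as the paper: both decompose each estimator into an affine function of a sum of i.i.d.\ summands bounded in an interval of length $W+1$, apply Hoeffding's inequality with $\eta_0=\eta^5$ to get failure probability $2\exp(-\eta/2)$ per estimator, and union-bound over the two. The only cosmetic difference is that you bound the range by $2W$ where the paper keeps $W+1$; both yield the stated bound since $W\ge 1$.
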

\begin{proof}
    Let $X_i = \frac{s}{\eta_0}[u_i \in C(v)](d_w(v, \{u_i\}) + d(v, \{u_i\}) - 1)$ be an random variable, we have $X_i \in \left[-\frac{s}{\eta_0}, \frac{s}{\eta_0}W\right]$. 
    \Cref{eq:coststay-concentration} holds if and only if 
    \[
        \left|\sum_{i = 1}^{\eta_0}(X_i - E[X_i])\right| \leq \frac{1}{\eta^2}Ws ~.
    \]
    According to Hoeffding's inequality, we have
    \begin{align*}
        \Pr\left[\left|\sum_{i = 1}^{\eta_0}(X_i - E[X_i])\right| \geq \frac{1}{\eta^2}Ws\right] & \leq 2\exp\left(-\frac{2\left(\frac{1}{\eta^2}Ws\right)^2}{\eta_0\left(\frac{s}{\eta_0}(W + 1)^2\right)}\right) \\
        & \leq 2\exp\left(-\frac{2\eta_0W^2}{\eta^4(W+1)^2}\right) \\
        & \leq 2\exp\left(-\frac{1}{2}\eta\right)
    \end{align*}
    So \cref{eq:coststay-concentration} holds with probability at least $1 - 2\exp\left(-\frac{1}{2}\eta\right)$. 

    Similarly, let $Y_i = \frac{s}{\eta_0}(d_w(v, \{u_i\}) + d(v, \{u_i\}))$. \Cref{eq:costmove-concentration} holds if and only if 
    \[
        \left|\sum_{i = 1}^{\eta_0}(Y_i - E[Y_i])\right| \leq \frac{1}{\eta^2}Ws ~.
    \]
    According to Hoeffding's inequality, we have
    \begin{align*}
        \Pr\left[\left|\sum_{i = 1}^{\eta_0}(Y_i - E[Y_i])\right| \geq \frac{1}{\eta^2}Ws\right] & \leq 2\exp\left(-\frac{2\left(\frac{1}{\eta^2}Ws\right)^2}{\eta_0\left(\frac{s}{\eta_0}(W + 1)^2\right)}\right) \\ & \leq 2\exp\left(-\frac{2\eta_0W^2}{\eta^4(W+1)^2}\right) \\
        & \leq 2\exp\left(-\frac{1}{2}\eta\right)
    \end{align*}
    So \cref{eq:costmove-concentration} holds with probability at least $1 - 2\exp\left(-\frac{1}{2}\eta\right)$. 

    By union bound, both \cref{eq:costmove-concentration} and \cref{eq:coststay-concentration} hold with probability at least $1 - 4\exp\left(-\frac{1}{2}\eta\right)$. 
\end{proof}

\begin{lemma}
\label{lem:concentration-weighted-noisy}
Let $\eta_0 = \eta^5 > 0$. Consider the setting of Lemma~\ref{lem:moves_estimate}, a vertex $v$ in cluster $C(v)$, an arbitrary cluster
$K$ of size $s$ and a sequence $S$ of random uniform sample of $K$ of length $\eta_0$. 
Let $\tilde{s} \in (1 \pm \epsilon')s$. 
Then, with probability at least  $1 - 4\exp\left(-\frac{1}{2}\eta\right)$ we have that the following
two inequalities hold:
\begin{equation}
    \label{eq:coststay-concentration-error}
    \rmEstCostStays_w(S,\tilde{s},v) \in \rmExpCostStays_w(s,v) \pm \left(\frac{1}{\eta^2}Ws + 2\epsilon'Ws\right)
\end{equation}
\begin{equation}
    \label{eq:costmove-concentration-error}
    \rmEstCostMoves_w(S,\tilde{s},v) \in \rmExpCostMoves_w(s,v) \pm \left(\frac{1}{\eta^2}Ws + 2\epsilon'Ws\right)
\end{equation}
\end{lemma}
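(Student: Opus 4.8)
The plan is to reduce to Lemma~\ref{lem:concentration-weighted}, which already handles the case $\tilde{s}=s$, and then bound the purely deterministic error incurred by perturbing the size parameter from $s$ to $\tilde{s}\in(1\pm\epsilon')s$. Concretely, I would condition on the event $\mathcal{E}$ that both \eqref{eq:coststay-concentration} and \eqref{eq:costmove-concentration} hold for the exact size $s$; by Lemma~\ref{lem:concentration-weighted}, $\Pr[\mathcal{E}]\ge 1-4\exp(-\tfrac12\eta)$. On $\mathcal{E}$ it then suffices to show the deterministic estimates $|\rmEstCostStays_w(S,\tilde{s},v)-\rmEstCostStays_w(S,s,v)|\le\epsilon'Ws$ and $|\rmEstCostMoves_w(S,\tilde{s},v)-\rmEstCostMoves_w(S,s,v)|\le\epsilon'Ws$, since then the triangle inequality yields \eqref{eq:coststay-concentration-error} and \eqref{eq:costmove-concentration-error} with room to spare — the stated slack is $2\epsilon'Ws$, which comfortably absorbs the $\epsilon'Ws$ I will actually need.

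For the ``stays'' estimator, the size parameter enters only as the multiplicative factor in front of the sample average, so
\[
\rmEstCostStays_w(S,\tilde{s},v)-\rmEstCostStays_w(S,s,v)=\frac{\tilde{s}-s}{\eta_0}\sum_{i=1}^{\eta_0}[u_i\in C(v)]\bigl(d_w(v,\{u_i\})+d(v,\{u_i\})-1\bigr).
\]
I would then check that each summand lies in $[-1,W]$: it is $0$ when $u_i\notin C(v)$; when $u_i\in C(v)$ and $(v,u_i)\in E^+$ it equals $d_w(v,\{u_i\})+1-1\in[1,W]$ using $w(v,u_i)\in[1,W]$; and when $u_i\in C(v)$ but $(v,u_i)\notin E^+$ (in particular when $u_i=v$) it equals $-1$. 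Hence the sample average lies in $[-1,W]$, has absolute value at most $W$ since $W\ge 1$, and the whole difference is at most $|\tilde{s}-s|\cdot W\le\epsilon'sW$.

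For the ``moves'' estimator, the size parameter enters as $+\tilde{s}-\frac{\tilde{s}}{\eta_0}\sum_i(\cdot)$, so
\[
\rmEstCostMoves_w(S,\tilde{s},v)-\rmEstCostMoves_w(S,s,v)=(\tilde{s}-s)\left(1-\frac{1}{\eta_0}\sum_{i=1}^{\eta_0}\bigl(d_w(v,\{u_i\})+d(v,\{u_i\})\bigr)\right).
\]
Here each term $d_w(v,\{u_i\})+d(v,\{u_i\})$ is $0$ when $(v,u_i)\notin E^+$ and lies in $[2,W+1]$ otherwise, so the average lies in $[0,W+1]$, the bracketed quantity lies in $[-W,1]$, and its absolute value is at most $W$. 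Thus the difference is again at most $|\tilde{s}-s|\cdot W\le\epsilon'sW$. Combining the two deterministic bounds with the event $\mathcal{E}$ via the triangle inequality establishes \eqref{eq:coststay-concentration-error} and \eqref{eq:costmove-concentration-error} with probability at least $1-4\exp(-\tfrac12\eta)$.

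There is essentially no hard step: the argument is a one-line decomposition plus a range estimate. The only place to be careful is the exact range of the per-sample terms — in particular the corner cases $u_i=v$ and $u_i\notin N(v)$ — and observing that $W\ge 1$ is precisely what collapses $\max(1,W)$ and $\max(1,W+1-1)$ style bounds into a clean factor of $W$; the factor-$2$ slack in the statement is there only to avoid tracking these constants tightly.
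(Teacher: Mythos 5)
Your proposal is correct and follows essentially the same route as the paper: reduce to Lemma~\ref{lem:concentration-weighted} and bound the deterministic perturbation from replacing $s$ by $\tilde{s}$, which is at most $\epsilon' W s$ per estimator and hence absorbed by the $2\epsilon' W s$ slack. If anything you are slightly more careful than the paper on the ``moves'' estimator, where you correctly keep the additive $(\tilde{s}-s)$ contribution from the $+s$ term before bounding the bracketed factor by $W$.
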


\begin{proof}
    \begin{align*}
        |\rmEstCostStays_w(S,s,v) - \rmEstCostStays_w(S,\tilde{s},v)| & = \left|\frac{s - \tilde{s}}{\eta_0}\sum_{i = 1}^{\eta_0}[u_i \in C(v)](d_w(v, \{u_i\}) + d(v, \{u_i\}) - 1)\right| \\
        & \leq \frac{|s - \tilde{s}|}{\eta_0}\sum_{i = 1}^{\eta_0}|d_w(v, \{u_i\}) + d(v, \{u_i\}) - 1| \\
        & \leq \frac{\epsilon's}{\eta_0}\eta_0W \\
        & < 2\epsilon'Ws
    \end{align*}
    So \cref{eq:coststay-concentration} implies \cref{eq:coststay-concentration-error}. 
    Similarly,
    \begin{align*}
        |\rmEstCostMoves_w(S,s,v) - \rmEstCostMoves_w(S,\tilde{s},v)| & = \left|\frac{s - \tilde{s}}{\eta_0}\sum_{i = 1}^{\eta_0}(d_w(v, \{u_i\}) + d(v, \{u_i\}))\right| \\
        & \leq \frac{|s - \tilde{s}|}{\eta_0}\sum_{i = 1}^{\eta_0}|d_w(v, \{u_i\}) + d(v, \{u_i\})| \\
        & \leq \frac{\epsilon's}{\eta_0}\eta_0(W+1) \\
        & < 2\epsilon'Ws
    \end{align*}
    So \cref{eq:costmove-concentration} implies \cref{eq:costmove-concentration-error}. 
    Therefore this lemma holds from lemma~\ref{lem:concentration-weighted}. 
\end{proof}

\begin{corollary}
    \label{cor:concentration}
    In the setting of lemma~\ref{lem:concentration-weighted-noisy}, let $\epsilon' = \frac{1}{\eta^2}$, with probability at least $1 - 4\exp\left(-\frac{1}{2}\eta\right)$ we have that the following
two inequalities hold:
\begin{equation}
    \label{eq:cor-coststay-concentration-error}
    \rmEstCostStays_w(S,\tilde{s},v) \in \rmExpCostStays_w(s,v) \pm \frac{3}{\eta^2}Ws
\end{equation}
\begin{equation}
    \label{eq:cor-costmove-concentration-error}
    \rmEstCostMoves_w(S,\tilde{s},v) \in \rmExpCostMoves_w(s,v) \pm \frac{3}{\eta^2}Ws
\end{equation}
\end{corollary}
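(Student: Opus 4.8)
The plan is to observe that Corollary~\ref{cor:concentration} is nothing more than the specialization $\epsilon' = \tfrac{1}{\eta^2}$ of Lemma~\ref{lem:concentration-weighted-noisy}, so essentially no new work is needed beyond substituting into the already-proved error bound. Concretely, I would first invoke Lemma~\ref{lem:concentration-weighted-noisy} verbatim in the stated setting (same clustering $\calC$, vertex $v$, cluster $K$ of size $s$, sample $S$ of length $\eta_0 = \eta^5$, and an approximate size $\tilde{s} \in (1\pm\epsilon')s$). That lemma guarantees, with probability at least $1 - 4\exp(-\tfrac12\eta)$, that both $\rmEstCostStays_w(S,\tilde{s},v) \in \rmExpCostStays_w(s,v) \pm \bigl(\tfrac{1}{\eta^2}Ws + 2\epsilon'Ws\bigr)$ and the analogous inequality for $\rmEstCostMoves$ hold simultaneously.

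The only remaining step is the arithmetic: plugging in $\epsilon' = \tfrac{1}{\eta^2}$ turns the error term $\tfrac{1}{\eta^2}Ws + 2\epsilon'Ws$ into $\tfrac{1}{\eta^2}Ws + \tfrac{2}{\eta^2}Ws = \tfrac{3}{\eta^2}Ws$, which gives exactly~\eqref{eq:cor-coststay-concentration-error} and~\eqref{eq:cor-costmove-concentration-error}; the success probability $1 - 4\exp(-\tfrac12\eta)$ carries over unchanged since it does not depend on $\epsilon'$. There is no genuine obstacle here — the statement is an immediate corollary whose sole content is fixing the free parameter $\epsilon'$ so that the two additive error contributions are of the same order; one should just make sure that the hypothesis $\tilde{s}\in(1\pm\epsilon')s$ of the lemma is consistent with whatever precision $\tfrac{1}{\eta^2}$ on the cluster-size estimate will later be available in the algorithm (this is a matter of bookkeeping, not of proof).
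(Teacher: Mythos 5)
Your proposal is correct and is exactly how the paper obtains this corollary: it is the immediate specialization $\epsilon' = \frac{1}{\eta^2}$ of Lemma~\ref{lem:concentration-weighted-noisy}, with the error term $\frac{1}{\eta^2}Ws + 2\epsilon'Ws$ collapsing to $\frac{3}{\eta^2}Ws$ and the failure probability unchanged. No further comment is needed.
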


In the remainder of this section, we fix $\epsilon' = \frac{1}{\eta^2}$ for simplicity. 

\paragraph{The algorithm.}
The algorithm maintains a tentative solution $\calC$ and iteratively tries to find
an improving cluster.
The algorithm is parameterized by an integer parameter $\eta > 0$, which we will fix later.
In one step, the algorithm invokes the following \textbf{GenerateCluster} routine
for any choice of $r \in V(G)$,  
multisets $(S^i)_{i=1}^\eta$ of vertices in $N(r)$ of size $\eta^5$ each,
and integers $(s^i)_{i=1}^\eta \in [|V|]^\eta$.
If any of the run finds a cluster $S'$ such that $\cost_w(\calC + S') < \cost_w(\calC)$,
it picks $S'$ that minimizes $\cost_w(\calC + S')$, replaces $\calC := \calC+S'$ and restarts.
If no such $S'$ is found, the algorithm terminates and returns the current solution $\calC$
as the final output.
\begin{tcolorbox}[beforeafter skip=10pt]
    \textbf{Polynomial-Time-Local-Search($\eta$)}
    \begin{itemize}[itemsep=0pt, topsep=4pt]
    \item Compute an atomic pre-clustering $(\calK, E^{\adm})$ using the Precluster algorithm.
    \item $\calC \gets \{K \mid K \in \calK\} \cup \{\{v\} \mid v \in V \setminus (\cup_{K \in \calK}K)\}$.
    \item Do
    \begin{itemize}[itemsep=0pt, topsep=4pt]
        \item For each vertex $r$, for each collection of  $\eta$ (not necessarily disjoint) multisets $\mathcal{S} = S^1,\ldots,S^\eta$ each of $\eta_0=\eta^{5}$ vertices in $N(r)$, and each collection of $\eta$ integers $\vec{s} = s^1,\ldots,s^\eta \in [|V|]$, \\
        $\calC(\mathcal{S},\vec{s}) \gets  \calC + $ GenerateCluster($\calC, r, S^1,\ldots,S^\eta, s^1,\ldots,s^\eta$).
        \item Let $\mathcal{S}^*, \vec{s}^*$ be the pair such that the cost of $\calC(\mathcal{S}^*, \vec{s}^*)$ is minimized.
        \item If the $\cost(\calC + \mathcal{S}^*) < \cost(\calC)$:
        \begin{itemize}
            \item has\_improved $\gets$ true;
            \item $\calC \gets \calC(\mathcal{S}^*, \vec{s}^*)$;
        \end{itemize}
        \item Else has\_improved $\gets$ false
        \end{itemize}
        \item While has\_improved
        \item Output $\calC$
    \end{itemize}
\end{tcolorbox}

\begin{tcolorbox}[beforeafter skip=10pt]
    \textbf{GenerateCluster($\calC, r, S^1,\ldots,S^\eta, s^1,\ldots, s^\eta$)}    
\begin{itemize}[itemsep=0pt, topsep=4pt]
    \item $S' \gets K(r)$.
    \item Let $D_r^1, \ldots, D_r^{\eta}$ be an arbitrary partition of the
    vertices of $D(r)$ into equal-size parts,
    \item For $i=1,\ldots,\eta$:
    \begin{itemize}
    \item For each vertex $v \in D_r^i$, 
    \begin{itemize}[itemsep=0pt, topsep=4pt]
    \item Let $C(v)$ be the current cluster of $v$. 
    \item If EstCostStays($S^i,s_i,v$) $>$ EstCostMoves($S^i, s_i,v$) + $6\eta^{-1} W|N(r)|$,\\ then $S' \gets S' \cup \{v\}$.
    \end{itemize}
    \end{itemize}
    \item Return $S'$.
\end{itemize}
\end{tcolorbox}

Note that in \cref{alg:iterative-local-search}, we only use $\{1, 1.5, 2\}$  in the weight, and therefore we only have polynomial many possible cost for a clustering scheme. 
Therefore, Polynomial-Time-Local-Search runs in polynomial iterations and takes polynomial time, as $\eta, \epsilon$ be some constants. 
In the next sections, we will remove this assumption and consider arbitrary bounded weight function. 
And the following lemma shows that for sufficiently large integer $\eta$, the above algorithm
indeed finds a $\gamma$-good local optimum.
Note that for the polynomial-time implementation, we would only need a weaker statement saying that there exists a choice of $(S^i)_{i=1}^\eta$
and $(\tilde{s}^i)_{i=1}^\eta$ for which the statement holds, as we iterate
over all possible choices. However, the fact that a vast majority of choices
leads generating an improving cluster $S'$ is crucial for subsequent running
time improvements, where we will only sample sets $S^i$ and integers $\tilde{s}^i$.

\begin{lemma}
\label{lem:generate-cluster-4.2}
Let $0 < \gamma < \varepsilon^{13}/4$ and $\eta > 4608W \varepsilon^{-8} \gamma^{-1}$ be a sufficiently large constant so that
$\exp(\eta) > 400\eta\varepsilon^{-6}$.

Consider a clustering $\calC$ maintained by the local search algorithm and assume that there exists a vertex $r$, and a $(\varepsilon, \varepsilon/2)$-good cluster $C$ with $K(r) \subseteq C \subseteq N(r)$
such that $|C| > 1$ and 
\[
    \cost_w(\calC + C)  + \gamma \sum_{u \in C} d^{\adm}(u) \le \cost_w(\calC) ~.
\]
Then there exists a collection of sets $C^*_1, \dots, C^*_{\eta}$ of vertices
with the following properties.
First, $K(r) \subseteq C^*_i \subseteq N(r)$, $|C^*_i| \geq \frac{1}{112896W} \gamma\varepsilon^{14}|N(r)|$. Second, let $\mathcal{S} = S^1,\ldots,S^\eta$, where $S^i$ is an uniform sample of $C^*_i$ of size $\eta^5$, let $\vec{s} = (\tilde{s}^1 ,\ldots, \tilde{s}^\eta) \in [|V|]^{\eta^5}$ such that $\tilde{s} \in (1 \pm \epsilon')|C^*_i|$. Then with probability at least $1 - 2\eta\exp(-\eta)$, 
the cluster $S'$ output by GenerateCluster($\calC, r, K(r), \mathcal{S}, \vec{s})$ satisfies
\[\cost_w(\calC+S') \leq \cost_w(\calC + C) +  \frac{\gamma}{2}\sum_{u \in C}d^{\adm}(u).\]
\end{lemma}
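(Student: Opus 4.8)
\emph{Proof plan.}
If $D(r)=\emptyset$ then $N(r)=K(r)$, so $C\subseteq N(r)=K(r)\subseteq C$ gives $C=K(r)=S'$ and the claim holds trivially (take $C^*_i=K(r)$); so assume $D(r)\neq\emptyset$. Since $r\in K(r)\subseteq C$, Lemma~\ref{lem:D-times-N-4.2} applied with $v=r$ gives
\[
  \sum_{u\in C}d^{\adm}(u)\ \ge\ \tfrac{1}{576}\varepsilon^{8}\,|D(r)|\cdot|N(r)|,
\]
which is the ``budget'' all error terms will be charged against. The plan is to define $C^*_i$ as (a padding of) the tentative cluster at the start of batch $i$ together with the still-unexamined part of the ideal cluster $C$, and then bound $\cost_w(\calC+S')-\cost_w(\calC+C)$ by telescoping over the vertices of $D(r)$ in the order \textbf{GenerateCluster} examines them.

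Formally, run \textbf{GenerateCluster} with a uniform sample $S^i$ of $C^*_i$ and some $\tilde s^i\in(1\pm\eta^{-2})|C^*_i|$; write $S'_{(i)}$ for the value of $S'$ after batch $i$ (so $S'_{(0)}=K(r)$); put $\tilde C^*_i:=S'_{(i-1)}\cup\big(C\cap(D_r^i\cup\dots\cup D_r^\eta)\big)$, and let $C^*_i$ be $\tilde C^*_i$ together with arbitrary extra vertices of $N(r)$ so that $|C^*_i|=\max\big(|\tilde C^*_i|,\lceil\tfrac{\gamma\varepsilon^{14}}{112896W}|N(r)|\rceil\big)$. Then $K(r)\subseteq C^*_i\subseteq N(r)$ and $|C^*_i|\ge\tfrac{\gamma\varepsilon^{14}}{112896W}|N(r)|$, and this is well defined since $C^*_i$ depends only on $S^1,\dots,S^{i-1}$, which are drawn before batch $i$. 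Set $A^{(i)}:=S'_{(i)}\cup\big(C\cap(D_r^{i+1}\cup\dots\cup D_r^\eta)\big)$, so $A^{(0)}=C$ and $A^{(\eta)}=S'$. Using the identity $\cost_w(\calC+(T\cup\{v\}))-\cost_w(\calC+T)=\rmCostMoves_w(T\cup\{v\},v)-\rmCostStays_w(T\cup\{v\},v)$ for $v\notin T$ (the two clusterings differ only in whether $v$ is in the new cluster), the difference $\cost_w(\calC+S')-\cost_w(\calC+C)=\sum_{i=1}^{\eta}\big(\cost_w(\calC+A^{(i)})-\cost_w(\calC+A^{(i-1)})\big)$ telescopes further, within batch $i$, into one increment per $v\in D_r^i$, each equal to $\pm\big(\rmCostMoves_w(H,v)-\rmCostStays_w(H,v)\big)$ for the hybrid cluster $H$ present when $v$ is examined. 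That $H$ differs from $C^*_i$ only on already-examined vertices of $D_r^i$ and on the padding, so $|H\triangle C^*_i|\le|D_r^i|+\tfrac{\gamma\varepsilon^{14}}{112896W}|N(r)|$, and each of $\rmCostStays_w(\cdot,v)$, $\rmCostMoves_w(\cdot,v)$ changes by at most $3W$ per element of the symmetric difference.

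Call $S^i$ \emph{good for $v$} if the conclusion of Corollary~\ref{cor:concentration} holds (with parameter $\eta$, $\epsilon'=\eta^{-2}$), namely $\rmEstCostStays_w(S^i,\tilde s^i,v)\in\rmCostStays_w(C^*_i,v)\pm\tfrac{3}{\eta^2}W|C^*_i|$ and likewise for Moves; here we use $\rmExpCostStays_w(|C^*_i|,v)=\rmCostStays_w(C^*_i,v)$ from Lemma~\ref{lem:moves_estimate} and $|C^*_i|\le|N(r)|$. Since \textbf{GenerateCluster} adds $v$ exactly when $\rmEstCostStays_w>\rmEstCostMoves_w+6\eta^{-1}W|N(r)|$, a short calculation shows that on a step where $S^i$ is good for $v$ the increment for $v$ is at most $O(\eta^{-1}W|N(r)|)+3W|H\triangle C^*_i|$, however the decision turns out (if $v$ is excluded the test caps $\rmCostStays_w(C^*_i,v)-\rmCostMoves_w(C^*_i,v)$ near the threshold up to the concentration error; if $v$ is added it forces $\rmCostMoves_w(C^*_i,v)-\rmCostStays_w(C^*_i,v)$ below $-6\eta^{-1}W|N(r)|$ up to that error). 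Summing over the at most $|D(r)|$ such steps and invoking the budget above, a computation with $\eta>4608W\varepsilon^{-8}\gamma^{-1}$ (and $\varepsilon$ sufficiently small, to absorb the $\tfrac{\gamma\varepsilon^{14}}{112896W}|N(r)|$ padding term) bounds their total by $\tfrac{\gamma}{4}\sum_{u\in C}d^{\adm}(u)$. For a step where $S^i$ is not good for $v$ we only have the crude bound: the increment is at most $\rmCostMoves_w(H,v)+\rmCostStays_w(H,v)=\mathrm{poly}(W,\varepsilon^{-1})\,|N(r)|$ (using that $v$ is admissible to $r$, whence $d_w(v)=O(W^2\varepsilon^{-2}|N(r)|)$ by the degree-similarity of admissible pairs together with $|C|\le|N(r)|$, and $|C(v)|=O(W^2\varepsilon^{-6}|N(r)|)$ by Lemma~\ref{lem:almost-good-size-bound}). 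By Corollary~\ref{cor:concentration}, $S^i$ is bad for a fixed $v$ with probability at most $4\exp(-\eta/2)$, and $S^1,\dots,S^\eta$ are independent; combining this with the slack $\exp(\eta)>400\eta\varepsilon^{-6}$ and a union bound over the $\eta$ batches, one shows that with probability at least $1-2\eta\exp(-\eta)$ the bad-sample steps contribute in total at most $\tfrac{\gamma}{4}\sum_{u\in C}d^{\adm}(u)$ as well. Adding the two contributions gives $\cost_w(\calC+S')\le\cost_w(\calC+C)+\tfrac{\gamma}{2}\sum_{u\in C}d^{\adm}(u)$.

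\emph{Main obstacle.} The constructions and the good-sample accounting are essentially bookkeeping; the delicate step is the bad-sample probability. Within one batch all decisions use the \emph{same} sample $S^i$, so the bad-for-$v$ events are strongly correlated and one cannot directly Chernoff-bound the number of bad vertices. One must either charge the whole block $D_r^i$ (of size $|D(r)|/\eta$) whenever $S^i$ is bad — affordable only because $\eta$ is taken so large relative to $W,\varepsilon^{-1},\gamma^{-1}$ — or, more efficiently, replace the per-vertex Corollary~\ref{cor:concentration} by an aggregate concentration of $\sum_{v\in D_r^i}\rmEstCostStays_w(S^i,\tilde s^i,v)$ about its mean, which concentrates as a single quantity with $\eta_0=\eta^5$ degrees of freedom. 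Making this come out to exactly the failure probability $2\eta\exp(-\eta)$, while keeping every drift, threshold, and sampling term within the $6\eta^{-1}W|N(r)|$ slack built into the algorithm with the constants stated in the hypothesis, is where the actual effort goes.
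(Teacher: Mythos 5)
Your decomposition differs from the paper's: you interpolate between the fixed target $C$ and the output $S'$ via hybrids $A^{(i)}=S'_{(i)}\cup(C\cap\bigcup_{j>i}D_r^j)$, whereas the paper sets $C^*_i$ to be $S'^{i}$ together with the \emph{re-optimized} best completion over the remaining blocks, so that its comparison set $Q_i\triangle C^*_i\subseteq D_r^i$ and the optimality of $C^*_i$ yields the one-sided inequality $\rmCostMoves(C^*_i,v)\ge\rmCostStays(C^*_i,v)$ for every $v\notin C^*_i$ (this is what drives its missing/core/non-core classification). Your good-sample accounting is structurally sound and gives the same $O(W\eta^{-1}|D(r)|\cdot|N(r)|)$ order as the paper's Claim~\ref{claim:invariant-4.2}, modulo constant factors (with the threshold $6\eta^{-1}W|N(r)|$ taken at face value, your per-vertex charge summed over the disagreement set forces roughly $\eta\gtrsim 3\cdot 4608W\varepsilon^{-8}\gamma^{-1}$ rather than the stated bound, and the padding you add to reach the size lower bound costs a further $O(\gamma\varepsilon^{14}|N(r)|)$ per vertex — both absorbable but not with the exact constants claimed). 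Also note that the paper does not obtain the lower bound $|C^*_i|\ge\frac{1}{112896W}\gamma\varepsilon^{14}|N(r)|$ by padding; it \emph{proves} it from the improvement hypothesis via Claim~\ref{claim:single-improvement-4.2}, which keeps the sampled set equal to the set actually being estimated.

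The genuine gap is the bad-sample accounting, which you yourself flag as the delicate step but then resolve with two remedies, neither of which works. Remedy (a), charging all of $D_r^i$ whenever $S^i$ is bad for \emph{some} $v\in D_r^i$: the probability of that event is only bounded by $\min(1,\,4|D_r^i|\exp(-\eta/2))$, and since $|D_r^i|=\eta^{-1}|D(r)|$ is unbounded in terms of the constant $\eta$, this probability is $1$ on large instances; you would then charge $|D_r^i|\cdot\Theta(W\varepsilon^{-6}|N(r)|)$ in every batch, for a total of $\Theta(W\varepsilon^{-6}|D(r)|\cdot|N(r)|)$, exceeding the budget $\frac{\gamma\varepsilon^{8}}{2304}|D(r)|\cdot|N(r)|$ by a factor of order $W\varepsilon^{-14}\gamma^{-1}$ — no choice of $\eta$ fixes this. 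Remedy (b), concentration of $\sum_{v\in D_r^i}\rmEstCostStays_w(S^i,\tilde s^i,v)$ about its mean, does not control the algorithm, because the decisions are individual threshold tests: the sum can be on target while many individual estimates are far off in compensating directions. The correct tool, which the paper uses, is linearity of expectation plus Markov's inequality on the \emph{count} of misclassified vertices: each fixed $v$ is misclassified with probability at most $4\exp(-\eta/2)$ by Corollary~\ref{cor:concentration}, so the expected number per batch is at most $4\exp(-\eta/2)\eta^{-1}|D(r)|$ regardless of the correlations induced by the shared sample $S^i$, whence by Markov the count exceeds $4\exp(-\eta)\eta^{-1}|D(r)|$ with probability at most $\exp(-\eta/2)$; multiplying by the crude per-vertex cost $50W\varepsilon^{-6}|N(r)|$ and invoking $\exp(\eta)>400\eta\varepsilon^{-6}$ makes this contribution $O(W\eta^{-2}|D(r)|\cdot|N(r)|)$ per batch. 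Without this (or an equivalent) step, your claimed failure probability $1-2\eta\exp(-\eta)$ and the bound on the bad-step contribution are not established.
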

\begin{proof}
Fix $r$ as in the lemma statement and let 
\[ C^* = \text{argmin}_{K(r) \subseteq Q \subseteq N(r)} \cost_w(\calC+Q).\]
That is,  $C^\ast$ is the most
improving cluster in $N(r)$ containing $K(r)$. 
Since $C$ is $(\varepsilon, \varepsilon/2)$-good cluster containing $r$, and that each $(\varepsilon, \varepsilon/2)$-good cluster contains at most one atom, we have that $C \subseteq N(r)$. 
Thus, we have that $\cost_w(\calC+C^*) \le \cost_w(\calC+C) \le \cost_w(\calC)
- \gamma \sum_{u \in C} d^{\adm}(u)$, so it is enough to show that GenerateCluster outputs a cluster of cost at most $\cost(\calC+C^*) + \frac{\gamma}{2} \sum_{u \in C} d^{\adm}(u)$.

Since $C$ is a $(\varepsilon,\varepsilon/2)$-good cluster containing $K(r)$, $|C| \geq \varepsilon d(v)$ for all $v \in K(r)$, therefore $|N(r)| \geq |C| \geq \frac{1}{2}\varepsilon d(v)$. 

\begin{claim}
    \label{claim:degree-4.2}
    For any $v \in N(r)$,
    \[ d(v) \leq 4\varepsilon^{-2}|N(r)|\quad\mathrm{and}\quad d^{\adm}(v) \leq 8\varepsilon^{-5}|N(r)|.\]
\end{claim}
\begin{proof}
We first prove the first inequality.

If $r \in K(r)$, then $|N(r)| \geq |C| \geq \frac{1}{2}\varepsilon d(v)$; otherwise, $v$ is admissible to $r$, therefore $d(v) \leq 2\varepsilon^{-1}d(r) \leq 4\varepsilon^{-2}|N(r)|$. 

The second inequality follows from the first and the condition $d^{\adm}(v) \leq 2\varepsilon^{-3}d(v)$.
\end{proof}

Let $s^* = |C^*|$. Next, we will prove a lower bound on $s^*$. 
To this end, we show a quite brute-force estimate that adding one new cluster to a clustering scheme cannot improve the cost
by too much.

\begin{claim}\label{claim:single-improvement-4.2}
    Let $\calC$ be a clustering maintained by the local search algorithm, $C \subseteq |N(r)|$ is any cluster, then
    \[ \cost_w(\calC) - 49W\varepsilon^{-6}|C|\cdot |N(r)| \leq \cost_w(\calC + C). \]
\end{claim}
\begin{proof}
    We split the cost improvement into two terms. The first term is the improvement on $E^+$, which is at most $W \cdot \frac{1}{2}|C|(|C| - 1) \leq W \cdot |C| \cdot |N(r)|$. 
    The second term is the improvement on $E^-$, which is at most the maximum amount of cost we paid before. 
    Let $v$ be an arbitrary vertex in $C$, $C_i$ be the cluster in $\calC$ containing $v$, the minus cost we pay in $\calC$ at $v$ is at most $|C_i| - 1$. 
    According to Corollary~\ref{cor:bounded-minus-cost}, $|C_i| \leq 12\varepsilon^{-4}d(v)$. 
    According to claim~\ref{claim:degree-4.2}, for all $v \in N(r)$, $d(v) \leq 4\varepsilon^{-2}|N(r)|$. 
    So we have the total improvement on $E^-$ is at most $12\varepsilon^{-4}|C| \cdot (4\varepsilon^{-2}|N(r)|) = 48\varepsilon^{-6}|C|\cdot |N(r)|$. 
    Together with the improvement in the first part, the whole claim holds. 
\end{proof}

Observe that by Lemma~\ref{lem:D-times-N-4.2}, we have
\[ \sum_{u \in C} d^{\adm}(u) \geq \frac{1}{576}\varepsilon^8|D(r)|\cdot|N(r)|\]
Hence, 
\[\cost_w(\calC + C^*) \leq \cost_w(\calC) - \gamma \sum_{u \in C}d^{\adm}(u) \leq \cost(\calC) - \frac{1}{576}\gamma \varepsilon^8 |D(r)|\cdot|N(r)|.\]
Together with Claim~\ref{claim:single-improvement-4.2} for $C^*$,
it follows that
\begin{equation}\label{eq:s-ast-bound-4.2}
s^* = |C^*| \geq \frac{1}{28224W} \gamma \varepsilon^{14} |D(r)|.
\end{equation}

Let $v$ be a vertex in $D(r)$, according to the optimality of $C^*$, we have
\begin{enumerate}
    \item if $v$ in $C^*$, then $\cost_w(\calC+(C^*\setminus \{v\})) \ge \cost_w(\calC+C^*)$; and
    \item if $v$ is not in $C^*$ then $\cost_w(\calC+(C^* \cup \{v\})) \ge 
    \cost_w(\calC+C^*)$.
\end{enumerate}

To analyze the algorithm, let $S'^{1},\ldots,S'^{\eta}$ be the set of elements in set $S'$ 
at the beginning of the $i$th iteration of the for loop. 
We define 
\[ C^*_i = S'^{i} \cup \mathrm{argmin} \left\{ \cost_w(\calC + (S'^{i} \cup T))~\Big|~T \subseteq \bigcup_{j = i}^{\eta}D_r^i\right\}.\]
Let $S^i$ be a uniform sample of the cluster $C^*_i$ and $s_i = |C^*_i|$. 
We want to show the following claim.

\begin{claim}
\label{claim:invariant-4.2}
With probability at least $1 - 2\exp(-\frac{1}{2}\eta)$,
\[ \cost_w(\calC + C^*_{i+1}) - \cost_w(\calC+C^*_i) \le 4W \eta^{-2} |D(r)| \cdot |N(r)|. \] 
\end{claim}

Assuming the claim is true, with probability at least $1 - 2\eta \exp(-\frac{1}{2}\eta)$, this claim holds for all $i = 1, \dots, \eta$. 
Then the lemma follows by observing that $S' = C^*_{\eta+1}$ and $C^*_1 = C^*$. 
By telescoping and Lemma~\ref{lem:D-times-N-4.2}, we have
\begin{align*}
    \cost_w(\calC+S') & = \cost_w(\calC+C^*) + \sum_{i = 1}^{\eta}(\cost_w(\calC + C^*_{i+1}) - \cost_w(\calC + C^*_i)) \\
    & \le \cost_w(\calC + C^*) + 4W \eta^{-1} \cdot |D(r)|\cdot |N(r)| \\ 
    & \le \cost_w(\calC+C^*) + 4W \eta^{-1} \cdot 576 \varepsilon^{-8} \sum_{u \in C}d^{\adm}(u) \\
    & \le \cost_w(\calC + C)  + \frac{\gamma}{2} \sum_{u \in C}d^{\adm}(u)
\end{align*}
The last inequality follows from the assumption $\eta > 4608W\varepsilon^{-8}\gamma^{-1}$. 

The only thing we miss is to show a lower bound on $|C^*_i|$.
By telescoping, we have
\begin{align*}
    \cost_w(\calC + C^*_i) & = \cost_w(\calC + C^*) + \sum_{j = 1}^{i - 1}(\cost_w(\calC + C^*_{j+1}) - \cost_w(\calC + C^*_i)) \\
    & \leq \cost_w(\calC + C^*) + (i - 1) \cdot 4W\eta^{-2}|D(r)|\cdot|N(r)| \\
    & \leq \cost_w(\calC) - \frac{1}{576}\gamma \varepsilon^8|D(r)|\cdot|N(r)| + 4W\eta^{-1}|D(r)|\cdot|N(r)| \\
    & \leq \cost_w(\calC) - \frac{1}{1152}\gamma \varepsilon^8|D(r)|\cdot|N(r)|
\end{align*}
The last inequality holds from the assumption $\eta > 4608W\varepsilon^{-8}\gamma^{-1}$ and $\varepsilon < 1$. 
Together with claim~\ref{claim:single-improvement-4.2} for $C^*_i$, it follows that $|C^*_i| \geq \frac{1}{56448W} \gamma\varepsilon^{14}|D(r)|$. 
Since $K(r) \subseteq C^*_i$ and $|N(r)| = |D(r)| + |K(r)|$, we have
\begin{equation}\label{eq:size-of-sets-lowerbound-4.2}
|C^*_i| \geq \frac{1}{112896W} \gamma\varepsilon^{14}|N(r)|.
\end{equation}

\begin{proof}[Proof of Claim~\ref{claim:invariant-4.2}]
Let 
\[ Q_i = S'^{i+1} \cup \left( C_i^\ast \cap \bigcup_{j=i+1}^\eta D_r^j\right).\]
By the definition of $C^*_{i+1}$, we have 
\[\cost_w(\calC + C^*_{i+1}) \leq \cost_w(\calC + Q_i),\] so it is sufficient to prove that 
\[\cost_w(\calC + Q_i) - \cost_w(\calC + C^*_i) \leq 4W \eta^{-2} |D(r)| \cdot |N(r)|.\]

We observe that 
\[ \cost_w(\calC + Q_i) - \cost_w(\calC + C^*_i) \leq  W|Q_i \Delta C^*_i|^2 + \sum_{v \in Q_i \Delta C^*_i}\left(\cost_w(\calC + Q_i, v) - \cost_w(\calC + C^*_i, v)\right)~,\]
where $\Delta$ denote the symmetric difference of two sets. 
Fix a vertex $v \in Q_i \Delta C^*_i$ and let $C \in \calC$ be the cluster containing $v$.
We have 
\[\cost_w(\calC + Q_i, v) - \cost_w(\calC + C^*_i, v) \leq W(|C| + |Q_i| + |C^*_i|)~,\]
where $\cost_w(\calC, v)$ is the total cost paid by clustering scheme $\calC$ for all edges or non-edges incident to $v$. 
According to Corollary~\ref{cor:bounded-minus-cost}, $|C| \leq 12\varepsilon^{-4}d(v) \leq 48\varepsilon^{-6}|N(r)|$. 
Since $Q_i, C^*_i$ are subsets of $N(r)$, $|Q_i| \leq |N(r)|$, $|C^*_i| \leq |N(r)|$, therefore 
\[\cost_w(\calC + Q_i, v) - \cost_w(\calC + C^*_i, v) \leq W(|C| + |Q_i| + |C^*_i|) \leq 50\varepsilon^{-6}W|N(r)|.\]

A vertex $v \in Q_i \Delta C^*_i$ can be one of the following three types. 
\begin{itemize}
    \item $v$ is a \emph{missing} node if $v \in Q_i$ and $v \notin C^*_i$. 
    \item $v$ is a \emph{core} node if $v \in C^*_i, v \notin Q_i$ and 
    \[ \rmCostMoves(C^*_i,v) + 12\eta^{-2} W|N(r)| < \rmCostStays(C^*_i,v). \]
    \item $v$ is a \emph{non-core} node if $v \in C^*_i, v \notin Q_i$ and \[ \rmCostMoves(C^*_i,v) + 12\eta^{-2} W|N(r)| \geq \rmCostStays(C^*_i,v). \]
\end{itemize}

In the following part, we will split the cost difference into vertices of the three types, and give a bound separately. 

\textbf{Missing nodes.} If $v \notin C^*_i$, by definition of $C^*_i$, $\rmCostMoves(C^*_i,v) \geq \rmCostStays(C^*_i,v)$. 
According to Corollary~\ref{cor:concentration}, with probability at most 
$4\exp\left(-\frac{1}{2}\eta\right)$, we have
\[\rmEstCostMoves(S^i,\tilde{s}^i,v) + 6\eta^{-2}W|N(r)| < \rmEstCostStays(S^i,\tilde{s}^i,v).\]
Therefore $v \in Q_i$ with probability at most $4\exp\left(-\frac{1}{2}\eta\right)$,
and the expected size of $Q_i \setminus C^*_i$ is
bounded by $4\exp\left(-\frac{1}{2}\eta\right)\eta^{-1}|D(r)|$. 
Hence, there are more than $4\exp\left(-\eta\right)\eta^{-1}|D(r)|$
missing nodes with probability at most $\exp\left(-\frac{1}{2}\eta\right)$. 

\textbf{Core nodes.} If $v \in C^*_i$ is a core node, by definition of core node, 
\[ \rmCostMoves(C^*_i,v) + 12\eta^{-2} W|N(r)| < \rmCostStays(C^*_i,v). \]
According to corollary~\ref{cor:concentration}, we have
\[\rmEstCostMoves(S^i,\tilde{s}^i,v) + 6\eta^{-2} W|N(r)| \geq \rmEstCostStays(S^i,\tilde{s}^i,v)\]
with probability at most $4\exp\left(-\frac{1}{2}\eta\right)$.
Therefore $v \notin Q_i$ with probability at most $4\exp\left(-\frac{1}{2}\eta\right)$. 
Similarly as before, there are more than $4\exp\left(-\eta\right)\eta^{-1}|D(r)|$
core nodes with probability at most $\exp\left(-\frac{1}{2}\eta\right)$.

\textbf{Non-core nodes.} If $v \in C^*_i$ is a non-core node, 
then 
\begin{align*}
 &\cost_w(\calC + Q_i, v) - \cost_w(\calC + C^*_i, v) \\
 &\qquad \leq \rmCostStays(C^*_i, v) - \rmCostMoves(C^*_i, v) + W|Q_i \Delta C^*_i| \\
 &\qquad \leq 12\eta^{-2}W|N(r)| + W\eta^{-1}|D(r)| \\
 & \qquad \leq 2\eta^{-1}W|N(r)|. 
\end{align*}
In the inequalities, we use the fact that
$Q_i \Delta C_i^\ast \subseteq D_i^r$,  $|D_i^r| = \eta^{-1} |D(r)|$ and $\eta < 1/12, W > 1$. 

Summing up the total cost difference over all vertices of the three types, we have
\begin{align*}
    &\cost_w(\calC + Q_i) - \cost_w(\calC + C^*_i) \\
    &\qquad \leq W|Q_i \Delta C_i^\ast|^2 + \sum_{v \in Q_i \Delta C^*_i}\left(\cost_w(\calC + Q_i, v) - \cost_w(\calC + C^*_i, v)\right) \\
    &\qquad \leq W\eta^{-2} |D(r)|^2 + 2 \cdot 4\exp(-\eta)\eta^{-1}|D(r)| \cdot 50\varepsilon^{-6}W|N(r)|
    + \eta^{-1} |D(r)| \cdot 2\eta^{-1}W |N(r)| \\
    &\qquad \leq 4\eta^{-2} W|D(r)| \cdot |N(r)|.
\end{align*}
The last inequality holds since $\exp(\eta) > 400\eta\varepsilon^{-6}$. 
This finishes the proof of the claim.
\end{proof}
This finishes the proof of the lemma.
\end{proof} 

\begin{corollary}
    Under the setting of \cref{lem:generate-cluster-4.2}, 
    \textbf{Polynomial-Time-Local-Search} will output a $\gamma$-good local optimum $\calC$ with probability at least $1 - 2\eta\exp(-\eta)$. 
\end{corollary}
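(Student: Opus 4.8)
The plan is to show that the clustering $\calC$ that \textbf{Polynomial-Time-Local-Search} holds when it halts satisfies Definition~\ref{def:apx-max}; that it halts after polynomially many steps and in polynomial total time was already argued just above (every cost lies in $\tfrac12\Z\cap[0,O(n^{2})]$ and strictly decreases at each executed step, and each step enumerates $\mathrm{poly}(n)$ tuples $(r,\mathcal{S},\vec{s})$ when $\eta,\varepsilon$ are constant). I would prove correctness by contradiction. Suppose the output $\calC$ is \emph{not} a $\gamma$-good local optimum, i.e.\ $\sum_{C\in\nearopt}(\cost_w(\calC)-\cost_w(\calC+C))>2\gamma|E_{\adm}|$. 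Since $\nearopt$ is a partition of $V(G)$ and $\sum_{u\in V(G)}d^{\adm}(u)=2|E_{\adm}|$, the right-hand side equals $\gamma\sum_{C\in\nearopt}\sum_{u\in C}d^{\adm}(u)$, so averaging over the clusters of $\nearopt$ produces a single cluster $C$ with $\cost_w(\calC+C)+\gamma\sum_{u\in C}d^{\adm}(u)<\cost_w(\calC)$; in particular $\cost_w(\calC+C)<\cost_w(\calC)$. Because $\nearopt$ is $\varepsilon$-good, this $C$ is $(\varepsilon,\varepsilon)$-good and hence $(\varepsilon,\varepsilon/2)$-good.

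Consider first the case $|C|>1$. I would pick $r\in C$ to lie inside the unique atom that $C$ meets, if there is one, and $r\in C$ arbitrary otherwise. Since a good cluster contains every atom it intersects and intersects at most one atom, this gives $K(r)\subseteq C\subseteq N(r)$, which is exactly the inclusion invoked inside the proof of Lemma~\ref{lem:generate-cluster-4.2}. Thus the hypotheses of Lemma~\ref{lem:generate-cluster-4.2} hold for this $r$ and $C$, and the lemma yields sets $C^{*}_1,\dots,C^{*}_\eta\subseteq N(r)$ such that, whenever $S^{i}$ is a uniform size-$\eta^{5}$ sample of $C^{*}_i$ and the integers $\tilde{s}^{i}$ satisfy $\tilde{s}^{i}\in(1\pm\epsilon')|C^{*}_i|$, then with probability at least $1-2\eta\exp(-\eta)$ the cluster $S'$ returned by \textbf{GenerateCluster}$(\calC,r,K(r),\mathcal{S},\vec{s})$ satisfies $\cost_w(\calC+S')\le\cost_w(\calC+C)+\tfrac{\gamma}{2}\sum_{u\in C}d^{\adm}(u)<\cost_w(\calC)$. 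As $\eta$ is chosen so large that $2\eta\exp(-\eta)<1$, this event has positive probability, so there actually exist multisets $S^{1},\dots,S^{\eta}$ of vertices of $N(r)$ and integers $\tilde{s}^{1},\dots,\tilde{s}^{\eta}\in[|V|]$ realizing it (for the integers one may simply take $\tilde{s}^{i}:=|C^{*}_i|$). But the main loop of \textbf{Polynomial-Time-Local-Search} iterates over exactly every vertex $r$, every $\eta$-tuple of size-$\eta^{5}$ multisets of $N(r)$, and every $\eta$-tuple of integers in $[|V|]$; so it would have generated a candidate $\calC+S'$ strictly cheaper than $\calC$ and would not have terminated with $\calC$ — a contradiction.

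The remaining case $|C|=1$ is degenerate: then $C=\{v\}$ with $v\notin V_\calK$ (a good cluster cannot break an atom, so a singleton good cluster avoids all atoms) and $\cost_w(\calC+\{v\})<\cost_w(\calC)$. This case is not covered by Lemma~\ref{lem:generate-cluster-4.2}, which assumes $|C|>1$; I would handle it by a short direct argument, checking that \textbf{GenerateCluster} run at $r=v$ with a suitably degenerate choice of sample and size parameters returns a cluster no costlier than $\{v\}$, once more contradicting termination. Together, the two cases force $\calC$ to be a $\gamma$-good local optimum. The probability $1-2\eta\exp(-\eta)$ appearing in the statement is inherited verbatim from Lemma~\ref{lem:generate-cluster-4.2} but is conservative: the algorithm enumerates sample tuples rather than drawing them, so its output is in fact deterministic once the preclustering of Theorem~\ref{thm:preprocessed-wrapper} is fixed.

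I expect the main obstacle to be the ``positive probability $\Rightarrow$ existence, and the loop enumerates the witness'' passage in the case $|C|>1$: one must match the random objects of Lemma~\ref{lem:generate-cluster-4.2} (multisets of $C^{*}_i\subseteq N(r)$ and sizes in $(1\pm\epsilon')|C^{*}_i|$) with the objects the main loop ranges over (multisets of $N(r)$ and sizes in $[|V|]$), so that realizing the good event is literally the loop hitting one enumerated tuple. The $|C|=1$ case also needs its own small verification, exactly because Lemma~\ref{lem:generate-cluster-4.2} excludes it.
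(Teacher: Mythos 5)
Your proposal is correct and follows essentially the same route as the paper's own proof: assume the output is not a $\gamma$-good local optimum, use the identity $\sum_{C}\sum_{u\in C}d^{\adm}(u)=2|E_{\adm}|$ and pigeonhole to extract a single cluster $C$ of $\nearopt$ with $\cost_w(\calC+C)+\gamma\sum_{u\in C}d^{\adm}(u)<\cost_w(\calC)$, and then invoke Lemma~\ref{lem:generate-cluster-4.2} to contradict termination. You are in fact somewhat more careful than the paper, which neither spells out the ``positive probability plus exhaustive enumeration yields a witness tuple'' step nor addresses the $|C|=1$ case excluded by the hypotheses of Lemma~\ref{lem:generate-cluster-4.2}.
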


\begin{proof}
    By contradiction, suppose the output $\calC$ from \textbf{Polynomial-Time-Local-Search} is not a $\gamma$-good local optimum, let $\calC^*_{(\calK, E_{\adm})}$ be the (unknown) $\epsilon$-good clustering scheme in $(\calK, E_{\adm})$ from \cref{thm:preprocessed-wrapper}, by definition
    \[
        \sum_{C\in\calC^*_{(\calK,E_{\adm})}} ( \cost_w(\calC) - \cost_w(\calC + C) ) > 2 \gamma |E_{\adm}| ~.    
    \]
    Since $\sum_{C\in\calC^*_{(\calK,E_{\adm})}}\sum_{u \in C}d^{\adm}(u) = 2|E_{\adm}|$, we have
    \[
        \sum_{C\in\calC^*_{(\calK,E_{\adm})}} ( \cost_w(\calC) - \cost_w(\calC + C) - \gamma\sum_{u \in C}d^{\adm}(u)) > 0 ~.
    \]
    According to pigeonhole principle, there exists a cluster $C^*_i \in \calC^*_{(\calK, E_{\adm})}$ such that
    \[
        \cost_w(\calC) - \cost_w(\calC + C^*_i) - \gamma\sum_{u \in C^*_i}d^{\adm}(u) > 0
    \]
    According to \cref{lem:generate-cluster-4.2}, in this case, with probability at least $1 - 2\eta\exp(-\eta)$, in the last iteration, the cluster $S'$ output by GenerateCluster($\calC, r, K(r), \mathcal{S}, \vec{s})$ in the last iteration satisfies
    \[
    \cost_w(\calC+S') \leq \cost_w(\calC + C^*_i) + \frac{\gamma}{2}\sum_{u \in C^*_i}d^{\adm}(u) < \cost_w(\calC) - \frac{\gamma}{2}\sum_{u \in C^*_i}d^{\adm}(u) ~,
    \]
    so the iteration will continue, which lead to a contradiction. 
\end{proof}

\section{Faster Implementation of Local Search}

In this section, we show a faster implementation of the local search algorithm, by using sampling to find new improving clusters instead of enumerating a lot. This is not the end, but both the algorithm and the analysis are important preparations for our following improvements. We define $d^{adm}(C)=\sum_{u\in C}d^{adm}(u)$. We will consider the following local search algorithm, with parameters $\eta,\gamma$ and $s$. 
\begin{tcolorbox}[beforeafter skip=10pt]
    \label{alg:near-linear}
    \textbf{Faster-Local-Search($\eta,\gamma,s$)}
    \begin{itemize}[itemsep=0pt, topsep=4pt]
    \item Compute an atomic pre-clustering $(\calK, E^{\adm})$ using the Precluster algorithm.
    \item $\calC \gets \{K \mid K \in \calK\} \cup \{\{v\} \mid v \in V \setminus (\cup_{K \in \calK}K)\}$.
    \item Do many rounds of the following until $\calC$ does not change in $\Theta(n\log n)$ consecutive rounds:
        \begin{itemize}[itemsep=0pt, topsep=4pt]
        \item Sample a vertex $r'$ with probability $\frac1n$ and try to improve the clustering by the singleton $\{r'\}$.
        \item Sample a vertex $r$ with probability $\frac{1}{n\cdot d(r)}$, or instantly finish this round with probability $1-\sum_{r\in V}\frac{1}{n\cdot d(r)}$.
        \item Uniformly sample $\eta$ subsets $T^1,\dots,T^\eta$ of size $s$ from $N(r)$.
        \item For each collection of $\eta$ (not necessarily disjoint) multisets $\mathcal{S} = S^1,\ldots,S^\eta$ each of $\eta^{5}$ vertices in $T^i$ respectively, and each collection of $\eta$ integers $\vec{s} = s^1,\ldots,s^\eta \in \{\epsilon d(r) (1+\epsilon')^k/2 | 0\le k\le \log_{1+\epsilon'}(4/\epsilon)\}$, \\
        $\calC(\mathcal{S},\vec{s}) \gets  \calC + $ GenerateCluster($\calC, r, S^1,\ldots,S^\eta, s^1,\ldots,s^\eta$). 
        \item Let $\mathcal{S}^*, \vec{s}^*$ be the pair such that the cost of $\calC(\mathcal{S}^*, \vec{s}^*)$ is minimized.
        \item If the cost of $\calC(\mathcal{S}^*, \vec{s}^*)$ is at least $\gamma|E_{\adm}|/n$ less than the cost of $\calC$:
            \begin{itemize}
                \item $\calC \gets \calC(\mathcal{S}^*, \vec{s}^*)$;
            \end{itemize}
        \end{itemize}
    \item Output $\calC$
    \end{itemize}
\end{tcolorbox}

\paragraph{The size of the sampled set $S$.}

For GenerateCluster to work, we need $S^i$ to be a uniform sample of $C_i^*$. When $|T^i\cap C^*_i|\ge\eta^5$ for all $i$, we have at least one valid sample $\mathcal{S}$. Since each $T^i$ is uniformly sampled from a superset of $C_i^*$, the samples we get are uniform.

\begin{lemma}\label{lem:sampled-enough}
    If $s>10^6\eta^6\epsilon^{-27}$, then $|S\cap C^*_i|\ge\eta^5$ for all $i$ with probability at least $\frac12$.
\end{lemma}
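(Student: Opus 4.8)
The plan is to fix an index $i\in\{1,\dots,\eta\}$, reveal all the randomness that determines $C^*_i$, and then show that the fresh sample $T^i$ meets $C^*_i$ in at least $\eta^5$ vertices except with probability exponentially small in $\eta$; a union bound over the $\eta$ values of $i$ then gives the bound $\tfrac12$ with a large margin. Concretely, $C^*_i$ (defined as in Lemma~\ref{lem:generate-cluster-4.2}) is a deterministic function of $\calC$, the pivot vertex $r$, and the partial cluster $S'$ at the start of the $i$-th iteration of GenerateCluster, and the latter is built purely from the samples $T^1,\dots,T^{i-1}$ and the sub-samples extracted from them. Hence $C^*_i$ is independent of $T^i$; conditioning on $C^*_i$, the set $T^i$ is still a uniform size-$s$ subset of $N(r)$, so $X_i:=|T^i\cap C^*_i|$ is hypergeometric with mean $\mu_i=s\,|C^*_i|/|N(r)|$.

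The key estimate is that $\mu_i$ is much larger than $\eta^5$. By inequality~\eqref{eq:size-of-sets-lowerbound-4.2} from the proof of Lemma~\ref{lem:generate-cluster-4.2}, $|C^*_i|\ge\frac{\gamma\varepsilon^{14}}{112896W}\,|N(r)|$, and since $\gamma$ is of order $\varepsilon^{13}$ in the regime of interest this is $\Omega(\varepsilon^{27})\,|N(r)|$. Substituting $s>10^6\eta^6\varepsilon^{-27}$ gives $\mu_i=\Omega(\eta^6)\gg\eta^5$, comfortably at least $8\eta^5$ with the constants in the statement. As a hypergeometric variable is at least as concentrated as a binomial with the same mean, a multiplicative Chernoff bound gives $\Pr[X_i<\eta^5]\le\Pr[X_i\le\tfrac12\mu_i]\le\exp(-\mu_i/8)$, which is $\exp(-\Omega(\eta^5))$; a union bound over $i=1,\dots,\eta$ then bounds the probability that some $T^i$ fails by $\eta\exp(-\Omega(\eta^5))<\tfrac12$.

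I expect the only delicate point to be the independence claim of the first paragraph: $C^*_i$ is defined recursively, through the sub-samples $S^1,\dots,S^{i-1}$ of $T^1,\dots,T^{i-1}$, so to make the conditioning legitimate I would process $i=1,2,\dots,\eta$ in order and, at stage $i$, condition on the entire history up to (but not including) the draw of $T^i$: then $C^*_i$ is measurable with respect to that history, so conditionally $X_i$ is the stated hypergeometric and the tail bound applies uniformly. Everything after that -- the Chernoff estimate and the union bound -- is routine. (The companion fact that a uniform $\eta^5$-subset of $C^*_i$ then actually appears among the enumerated sets $S^i\subseteq T^i$, which is what lets the analysis of GenerateCluster be invoked, is taken care of by the surrounding discussion and is not part of this lemma.)
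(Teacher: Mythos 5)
Your proposal is correct and follows essentially the same route as the paper: the paper likewise invokes the lower bound $|C^*_i|=\Omega(\varepsilon^{27})|N(r)|$ from Lemma~\ref{lem:generate-cluster-4.2} and then applies a Chernoff bound for each $i$ followed by a union bound over $i$. Your write-up is in fact more careful than the paper's one-line argument, since you make explicit the conditioning that justifies treating $|T^i\cap C^*_i|$ as hypergeometric even though $C^*_i$ is defined recursively from the earlier samples.
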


\begin{proof}
    By Lemma \ref{lem:generate-cluster-4.2}, $|C^*_i|\ge \frac{\epsilon^{27}}{451584}|N(r)|$. It follows by a Chernoff bound for each $i$ and then a union bound over $i$.
\end{proof}

\paragraph{The number of rounds.} We know that the cost of our clustering will be improved by at least $\gamma|E_{\adm}|/n$ per $\Theta(n\log n)$ rounds. On the other hand, the following lemma provides an upper bound of the initial cost.

\begin{lemma}
\label{lem:initial-cost}
    The cost of our initial solution which consists of atoms and singletons is at most $\cost(\opt)+\frac{4}{\epsilon}|E_{adm}|$.
\end{lemma}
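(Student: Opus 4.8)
The plan is to bound separately the ``$+$'' cost and the ``$-$'' cost of the initial clustering $\calC_0 = \{K : K \in \calK\} \cup \{\{v\} : v \in V \setminus V_\calK\}$. First I would handle the ``$+$'' cost, i.e.\ the edges of $G$ that go between clusters of $\calC_0$. Such an edge $uv$ is either an atomic edge that is cut (impossible, since each atom is a whole cluster of $\calC_0$, so an atomic pair is never cut), or at least one of its endpoints lies outside $V_\calK$, or it runs between two distinct atoms. For the last case, recall from Definition~\ref{def:precluster} (bullet 3) that each vertex $v$ in an atom $K$ has at most $O(\varepsilon|K|)$ neighbors outside $K$, and $|K| \le 2d(v)+1$ (shown in the proof of Lemma~\ref{lem:size-of-neighborhood}), so the number of inter-atom edges is $O(\varepsilon \sum_v d(v)) = O(\varepsilon m)$; but more cleanly, every such edge $uv$ has $d(u) \le 2\varepsilon^{-1} d(v)$-type degree-similarity failing or it would be admissible, so inter-atom edges are actually non-admissible pairs in $E^+$. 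The cleanest route: a cut edge of $\calC_0$ is either an admissible pair, or a non-admissible pair between distinct atoms. The first kind numbers at most $|E_\adm|$. For the second kind, I would charge it to the optimum via $\cost(\calC^\ast_{(\calK,E_\adm)})$: the implicit $\varepsilon$-good clustering $\calC^\ast$ cannot break atoms and cannot join non-admissible pairs, so every inter-atom edge is also cut by $\calC^\ast$, hence there are at most $\cost(\calC^\ast_{(\calK,E_\adm)}) \le $ a small multiple of $|E_\adm|$ of them by Theorem~\ref{thm:preprocessed-wrapper}. Actually the cleanest bound uses just the preclustering guarantees directly, so I would look for the statement giving inter-atom $E^+$ pairs $\le$ (something like) $2|E_\adm|$.

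Next I would bound the ``$-$'' cost, i.e.\ the non-edges inside clusters of $\calC_0$. A singleton cluster contributes zero ``$-$'' cost, so only atoms matter: we need $\sum_{K \in \calK} |E^- \cap \binom{K}{2}|$. Fix an atom $K$ and a vertex $v \in K$; by Definition~\ref{def:precluster} bullet 3, $v$ is non-adjacent to at most $O(\varepsilon|K|)$ vertices of $K$, so $|E^- \cap \binom{K}{2}| \le O(\varepsilon |K|^2)$, and using $|K| \le 2d(v)+1 = O(d(v))$ for any fixed $v\in K$ we get $|E^- \cap \binom{K}{2}| = O(\varepsilon |K| d(v))$; summing the degrees $d(v)$ over $v \in K$ and relating $\sum_{v\in K} d(v)$ to $\sum_{v \in K} d^\adm(v)$ via the $\varepsilon$-goodness would give a bound of the form $O(\varepsilon) \cdot |E_\adm|$ if one is careful — but the constant in the lemma is $4/\varepsilon$, not $O(\varepsilon)$, which suggests the intended argument is coarser and routes the ``$-$'' cost through the $\frac{\varepsilon}{1+\varepsilon} d(v) \le |K|$ lower bound on cluster sizes discussed in the Preclustering paragraph of Section~2. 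Concretely: each $v$ in a non-singleton cluster of any near-optimal $\calC^\ast$ satisfies $|C(v)| \ge \frac{\varepsilon}{1+\varepsilon} d(v)$, and the atoms of $\calK$ are exactly (sub)clusters of such a solution, so inside an atom $K$ the number of non-edges is at most $|K|^2 \le (2d(v)+1)|K|$, and summing $2d(v)+1 \le 3 d(v)$ over $v \in K$, then over atoms, and bounding $\sum_{v \in V_\calK} d(v)$ by $\varepsilon^{-1}$ times the number of admissible pairs incident to neighborhoods — this is the step I expect to need the most care to land exactly at $\frac{4}{\varepsilon}|E_\adm|$.

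The main obstacle, then, is getting the constants to work out to the stated $\frac{4}{\varepsilon}|E_\adm|$ rather than merely $O(\varepsilon^{-O(1)})|E_\adm|$: this requires picking, for each atom, the right representative vertex (e.g.\ one of minimum degree in the atom, which is within a factor $2$ of every other degree in the atom by bullet 3 and the $|K| \le 2d(v)+1$ estimate) and then charging non-edges of the atom to admissible pairs incident to that atom using $d^\adm(u) \ge$ (a constant) $\cdot \min\{d(u),d(v)\}$ for admissible $uv$, or more directly using that a vertex $v$ outside $V_\calK$ adjacent to $K$ contributes $\Theta(\varepsilon |K|)$ to $d^\adm(v)$. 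I would first write the ``$+$'' bound (which I expect to come out to at most $|E_\adm|$ plus a term absorbed into $\cost(\calC^\ast)$, hence well within budget), then spend the bulk of the effort on the ``$-$'' bound, tracking constants explicitly through $|K| \le 2d(v)+1$, the $O(\varepsilon)$ non-neighbor bound, and $\sum_{u} d^\adm(u) = 2|E_\adm|$, and finally combine the two into $\cost(\calC_0) \le \cost(\opt) + \frac{4}{\varepsilon}|E_\adm|$, where the $\cost(\opt)$ slack on the right absorbs whatever inter-atom / intra-atom mass is already paid by every clustering scheme.
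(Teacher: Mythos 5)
There is a genuine gap, and it sits exactly where you predicted you would need the most care: the ``$-$'' cost. Your main plan is to bound the non-edges inside atoms by a multiple of $|E_\adm|$ (via bullet 3 of Definition~\ref{def:precluster}, $|K|\le 2d(v)+1$, and $\sum_u d^{\adm}(u)=2|E_\adm|$). This cannot work: pairs inside an atom are atomic, not admissible, and a vertex of an atom may have $d^{\adm}(v)=0$. Concretely, if $G$ is a disjoint union of near-cliques each of which becomes an atom, then $E_\adm$ may be empty while $\sum_{K\in\calK}|E^-\cap\binom{K}{2}|>0$, so no bound of the form $O(\varepsilon^{-O(1)})\cdot|E_\adm|$ exists for this quantity. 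The correct one-line move, which you gesture at only in your closing sentence, is that $\opt$ in this section is the optimal \emph{good} clustering, so it cannot break atoms; hence every non-edge inside an atom also lies inside a cluster of $\opt$, and the entire ``$-$'' cost of the initial solution is absorbed by $\cost^-(\opt)$ with no reference to $E_\adm$ at all. A secondary slip on the ``$+$'' side: your taxonomy of cut edges omits non-admissible edges with an endpoint outside $V_\calK$ (non-admissibility does not require both endpoints to lie in atoms), and you invoke Theorem~\ref{thm:preprocessed-wrapper} backwards --- it \emph{lower}-bounds $\cost(\calC^*_{(\calK,E_{\adm})})$ by $(\varepsilon^{12}/2)|E_\adm|$, not the reverse. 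Both are repairable: every non-admissible cut edge (of either kind) is also cut by the good clustering $\opt$, and the admissible cut edges number at most $|E_\adm|$.

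For comparison, the paper's proof does not split by admissible versus non-admissible at all. It observes that the ``$-$'' cost, the inter-atom cut edges, and the cut edges incident to $\opt$-singletons are all paid by $\opt$, and that the leftover is the set of edges incident to vertices $v$ that belong to no atom and are non-singleton in $\opt$; for such $v$ the good cluster of $\opt$ containing it forces $d^{\adm}(v)+1\ge\epsilon d(v)$, whence the leftover is at most $\sum_v 2d^{\adm}(v)/\epsilon = \frac{4}{\epsilon}|E_\adm|$. Your decomposition, once repaired as above, would actually give the sharper bound $\cost(\opt)+|E_\adm\cap E^+|$; but as written, the ``$-$'' half of your argument fails.
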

    
\begin{proof}
    First, the minus cost paid by our initial solution comes from atoms, so it is also paid by $\opt$. The plus cost we paid induced by edges between atoms, or incident to a singleton in $\opt$, is again also paid by $\opt$. So our extra cost is the number of edges incident to vertices which does not belong to any atom and is not singleton in $\opt$. For any such vertex $v$, since there is a good cluster containing it, $d^{adm}(v)+1\ge\epsilon d(v)$. Hence the additional cost is at most $\sum_{v}\frac{d^{adm}(v)+1}{\epsilon}\le \sum_{v}\frac{2d^{adm}(v)}{\epsilon}\le\frac{4}{\epsilon}|E_{adm}|$.
\end{proof}

So, taking e.g. $\gamma=\epsilon^{20}$, the algorithm stops in $O_{\epsilon}(n^2\log n)$ rounds.

\paragraph{Time complexity.}

The number of rounds is $O_{\epsilon}(n^2\log n)$. In each round, each vertex $r$ is sampled with probability $\frac{1}{n\cdot d(r)}$. If $r$ is sampled, in each time calling GenerateCluster, we need to go through its neighbors and for each neighbor $u$, it takes $O(d(u))$ time to estimate the costs of moving and staying, so it takes $O_{\epsilon}(d(r)^2)$ time in total. There are only constant many $\mathcal{S}$'s and $\vec{s}$'s. So the total time complexity is $O_{\epsilon}(n^2\log n \sum_r \frac{1}{n\cdot d(r)} d(r)^2)=O_{\epsilon}(mn\log n)$.

\paragraph{Correctness.}
\label{sec:number-of-rounds}

In each round, we sample each vertex $r$ with probability $\frac{1}{n\cdot d(r)}$. Suppose the optimal good clustering is $\opt=\{C_1,\dots,C_k\}$. For each cluster $C_i$, if it contains an atom $K$ such that $\frac{|K|}{|C_i|}<\frac{\epsilon^{21}}{576}$, we split the cluster into $K$ and $C_i\setminus K$. Let $\opt'$ be the new clustering we get. By the following lemmas, it suffices to compare our solution with $\opt'$ instead of $\opt$.

\begin{lemma}
    $\opt'$ is $(\epsilon,\frac{\epsilon}{2})$-good.
\end{lemma}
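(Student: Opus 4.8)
The plan is to verify, cluster by cluster, the three defining properties of an $(\epsilon,\tfrac{\epsilon}{2})$-good clustering scheme. Every cluster of $\opt'$ is of one of three kinds: an untouched cluster $C_i$ of $\opt$, a split-off atom $K$, or a remainder $C_i\setminus K$. Untouched clusters need no argument at all, since $\opt$ is $\epsilon$-good and hence in particular $(\epsilon,\tfrac{\epsilon}{2})$-good; so I would concentrate entirely on the split-off atoms and the remainders.

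The ``atomic pair'' and ``non-admissible pair'' conditions come essentially for free. Atoms in $\calK$ are pairwise disjoint and a good cluster of $\opt$ contains at most one atom (two atoms span a non-admissible pair), so each $C_i$ contains exactly one atom or none, and when $C_i$ is split the set $K$ is precisely the set of vertices of $C_i$ lying in atoms. Hence $K$ keeps every atomic pair together, and $C_i\setminus K$ contains no vertex of any atom, so both satisfy the first condition; and since splitting only deletes pairs and $C_i$ contained no non-admissible pair, neither do $K$ and $C_i\setminus K$.

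The only real work is the size condition: every cluster $C$ with $|C|>1$ must satisfy $|C|\ge\tfrac{\epsilon}{2}d(v)$ for all $v\in C$. For a split-off atom $K$ I would invoke the third bullet of Definition~\ref{def:precluster}: each $v\in K$ has $d(v,K)\le|K|$ and only $O(\epsilon|K|)$ neighbours outside $K$, so $d(v)\le(1+O(\epsilon))|K|\le 2|K|$ for small $\epsilon$, giving $|K|\ge\tfrac12 d(v)\ge\tfrac{\epsilon}{2}d(v)$ (and atoms are non-singletons, so $|K|>1$ automatically). For a remainder $C_i\setminus K$, the split was triggered only when $|K|<\tfrac{\epsilon^{21}}{576}|C_i|$, hence $|C_i\setminus K|=|C_i|-|K|\ge(1-\tfrac{\epsilon^{21}}{576})|C_i|\ge\tfrac12|C_i|$; since every $v\in C_i\setminus K$ lies in $C_i$ and $|C_i|>1$ (it contains an atom of size at least $2$), $\epsilon$-goodness of $\opt$ gives $|C_i|\ge\epsilon d(v)$, so $|C_i\setminus K|\ge\tfrac{\epsilon}{2}d(v)$; remainders that happen to be singletons need no check. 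I do not expect any genuine obstacle here — the argument is pure bookkeeping — the one point to be attentive to being that each $C_i$ contains at most one atom and that the split-off set is exactly that atom, which is both where the density bound of Definition~\ref{def:precluster} is applied and what guarantees the remainder is atom-free.
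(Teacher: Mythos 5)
Your proposal is correct and follows essentially the same route as the paper: untouched clusters inherit $(\epsilon,\epsilon)$-goodness, the split-off atom $K$ is good because atoms are dense (the paper simply asserts this, which you justify via the third bullet of Definition~\ref{def:precluster}), and the remainder satisfies the size condition because $|C_i\setminus K|\ge\frac{|C_i|}{2}$. You merely spell out the bookkeeping that the paper's three-line proof leaves implicit.
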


\begin{proof}
    Consider any cluster $C_i\in\opt$. If $C_i$ is not split, then it's $(\epsilon,\epsilon)$-good and hence $(\epsilon,\frac{\epsilon}{2})$-good. Otherwise it is split into an atom $K$ and $C_i\setminus K$, where $K$ is $(\epsilon,\frac{\epsilon}{2})$-good since it's an atom, and $C_i\setminus K$ is $(\epsilon,\frac{\epsilon}{2})$-good because $|C_i\setminus K|\ge \frac{|C_i|}{2}$.
\end{proof}

\begin{lemma}
\label{lem:cost-opt'}
    $\cost(\opt') \le (1+\epsilon)\cost(\opt)$.
\end{lemma}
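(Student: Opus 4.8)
The plan is to track how $\cost$ changes as $\opt$ is turned into $\opt'$ cluster by cluster. Recall $\opt'$ is obtained from $\opt$ by replacing every cluster $C_i \in \opt$ that contains an atom $K_i$ with $|K_i|/|C_i| < \epsilon^{21}/576$ by the two clusters $K_i$ and $C_i \setminus K_i$, keeping all other clusters. For such a split, the only pairs whose cost contribution changes are those with exactly one endpoint in $K_i$ and the other in $C_i \setminus K_i$: an $E^+$ edge there switches from internal to cut ($+1$), while an $E^-$ pair there switches from internal to cut ($-1$). Hence
\[ \cost(\opt') - \cost(\opt) \;\le\; \sum_{i\text{ split}}\sum_{v \in K_i} d(v, C_i \setminus K_i), \]
so it suffices to bound the right-hand side by $\epsilon\cost(\opt)$.

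The key inputs are the degree bounds guaranteed by $(\calK,E_\adm)$ being $\epsilon$-good, namely the third bullet of Definition~\ref{def:precluster}: every $v$ in an atom $K$ has at most $O(\epsilon|K|)$ neighbors outside $K$. In particular $d(v,C_i\setminus K_i) = O(\epsilon|K_i|)$, and also $d(v) \le |K_i| - 1 + O(\epsilon|K_i|) \le 2|K_i|$ once $\epsilon$ is small. The first estimate gives, for each split cluster, $\sum_{v\in K_i} d(v, C_i\setminus K_i) = O(\epsilon|K_i|^2)$, and plugging in the splitting threshold $|K_i| < (\epsilon^{21}/576)|C_i|$ turns this into $O(\epsilon^{22})\,|K_i|\,|C_i|$.

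It remains to charge $|K_i|\,|C_i|$ against cost $\opt$ already incurs. Using $d(v) \le 2|K_i|$ and $|K_i| \ll |C_i|$, each $v \in K_i$ has at least $|C_i\setminus K_i| - d(v) \ge |C_i| - 3|K_i| \ge |C_i|/2$ non-neighbors inside $C_i\setminus K_i$; all of these are pairs of $E^- \cap \Int(\opt)$, and summing over $v\in K_i$ (with no double counting, since the two endpoints of each such pair lie on opposite sides of the split) yields at least $|K_i|\,|C_i|/2$ pairs of $E^-\cap\Int(\opt)$ inside $C_i$. As the split clusters are pairwise disjoint, $\sum_{i\text{ split}}|K_i|\,|C_i| \le 2\cost^-(\opt) \le 2\cost(\opt)$. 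Combining with the previous paragraph,
\[ \cost(\opt') - \cost(\opt) \;\le\; O(\epsilon^{22})\sum_{i\text{ split}}|K_i|\,|C_i| \;\le\; O(\epsilon^{22})\cost(\opt) \;\le\; \epsilon\cost(\opt) \]
for $\epsilon$ sufficiently small, which is exactly the claim.

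The whole argument is a single charging pass, so I expect no real obstacle; the one point to keep honest is that the constants hidden in the $O(\cdot)$ of Definition~\ref{def:precluster}(3) are fixed absolute constants, so that $O(\epsilon^{22})$ really does fall below $\epsilon$ once $\epsilon$ is small enough. This is also where the tiny threshold $\epsilon^{21}/576$ earns its keep: it forces the $K_i$–$(C_i\setminus K_i)$ edge set to be far smaller than the $E^-$-cost $\opt$ already pays inside $C_i$.
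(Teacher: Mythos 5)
Your proof is correct, but it charges the splitting cost differently from the paper. The paper bounds the increase from splitting $C_i$ into $K$ and $C_i\setminus K$ crudely by the total number of crossing pairs $|K|\cdot|C_i\setminus K|$, then converts this to $\frac{\epsilon^{21}}{576}|N(K)|\cdot|D(K)|$ via the splitting threshold, invokes Lemma~\ref{lem:D-times-N-4.2} to charge it to $\epsilon^{13}\sum_{u\in C_i}d^{\adm}(u)$, and finally sums to $|E_{\adm}|$ and applies the lower bound $\cost(\calC^*_{(\calK,E_\adm)})\ge(\epsilon^{12}/2)|E_\adm|$ from Theorem~\ref{thm:preprocessed-wrapper}. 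You instead observe that only crossing \emph{edges} increase the cost, use Definition~\ref{def:precluster}(3) to bound these by $O(\epsilon|K_i|^2)=O(\epsilon^{22})|K_i||C_i|$, and then charge $|K_i||C_i|$ directly to the non-edges of $E^-\cap\Int(\opt)$ crossing the split, of which there are at least $|K_i||C_i|/2$ by the same degree bounds. Your route is more elementary and self-contained --- it bypasses Lemma~\ref{lem:D-times-N-4.2} and the $|E_\adm|$-versus-cost relation entirely, and it charges against a quantity ($\cost^-(\opt)$) that is unambiguously part of $\cost(\opt)$, whereas the paper's charge lands on $\cost(\calC^*_{(\calK,E_\adm)})$ and strictly speaking needs the extra observation that this is within $(1+\epsilon)$ of $\cost(\opt)$. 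The price is that your argument leans on the unspecified absolute constants in Definition~\ref{def:precluster}(3), which you correctly flag; the paper's charging via $d^{\adm}$ is the one reused throughout the rest of the analysis, so it keeps the bookkeeping uniform. In fact your accounting shows the split crossing has far more non-edges than edges, so the splits essentially never increase the cost for small $\epsilon$ --- a slightly stronger conclusion than the lemma requires.
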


\begin{proof}
    Consider any cluster $C_i\in\opt$. If we split $C_i$ into $K$ and $C_i\setminus K$, we can only increase the cost by at most $|K|\cdot|C_i\setminus K| \le \frac{\epsilon^{21}}{576}|C_i|\cdot|C_i\setminus K| \le \frac{\epsilon^{21}}{576}|N(K)|\cdot|D(K)|$ since $C_i\subseteq N(K)$. Then by Lemma \ref{lem:D-times-N-4.2}, this is at most $\frac{\epsilon^{13}}{4}d^{adm}(C_i)$. Summing over all clusters, the lemma follows from Theorem \ref{thm:preprocessed-wrapper}.
\end{proof}

Consider any cluster $C_i'\in \opt'$ of size larger than $1$. If it contains an atom $K$, we will hit $K$ with probability at least $\frac{\epsilon|K|}{n|C_i'|} \ge \frac{\epsilon^{22}}{576n}$; otherwise we will hit some vertex in $C_i'$ with probability at least $\frac{\epsilon}{n}$. For $|C_i'|=1$, we will hit it with probability $\frac1n$.

Define $\Delta_i=\max\{\cost(\mathcal{C})-\cost(\mathcal{C}+C_i')-\gamma d^{adm}(C_i'),0\}$ for $0<\gamma<\frac{\epsilon^{13}}{4}$, and $\Delta=\sum_i\Delta_i$. If we hit $C_i'$, by Lemma \ref{lem:generate-cluster-4.2}, we can improve the cost by $\Delta_i$ with constant probability. So, if some $\Delta_i$'s are at least $\gamma|E_{adm}|/n$, then in the following $\Theta_{\epsilon}(n\log n)$ rounds, with high probability we can observe one of them and improve $\calC$. Hence with high probability we won't stop with some $\Delta_i\ge\gamma|E_{adm}|/n$ at any round $t$. By an union bound over (a polynomial number of) rounds, we know that with high probability, we will get a clustering with $\Delta\le\gamma|E_{adm}|$.

\begin{lemma}
    When $\Delta\le\gamma|E_{adm}|$, the clustering we have is a $\frac32\gamma$-good local optimum (with respect to $\opt'$).
\end{lemma}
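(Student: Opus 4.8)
The plan is to unwind the definition of ``$\tfrac32\gamma$-good local optimum with respect to $\opt'$'': by analogy with Definition~\ref{def:apx-max} (replacing the unknown reference clustering $\calC^\ast_{(\calK,E_{\adm})}$ by $\opt'$) this means exactly
\[ \sum_{C_i' \in \opt'} \big( \cost(\calC) - \cost(\calC + C_i') \big) \le 2 \cdot \tfrac32\gamma\, |E_{\adm}| = 3\gamma\,|E_{\adm}|, \]
and I would obtain this inequality directly from the definition of the $\Delta_i$'s and the hypothesis $\Delta \le \gamma|E_{\adm}|$.

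Concretely, the first step is to note that for each cluster $C_i' \in \opt'$ the definition $\Delta_i = \max\{\cost(\calC) - \cost(\calC + C_i') - \gamma\, d^{\adm}(C_i'),\,0\}$ gives at once
\[ \cost(\calC) - \cost(\calC + C_i') \le \Delta_i + \gamma\, d^{\adm}(C_i'). \]
Summing over all clusters of $\opt'$ and using $\Delta = \sum_i \Delta_i$ yields
\[ \sum_{C_i' \in \opt'} \big( \cost(\calC) - \cost(\calC + C_i') \big) \le \Delta + \gamma \sum_{C_i' \in \opt'} d^{\adm}(C_i'). \]
The second step is to use that $\opt'$ is a partition of $V(G)$ --- it is built from the partition $\opt$ by splitting some atoms off, which preserves being a partition --- so $\sum_{C_i' \in \opt'} d^{\adm}(C_i') = \sum_{u \in V(G)} d^{\adm}(u) = 2|E_{\adm}|$ by the identity recorded after Definition~\ref{definition:prepro}. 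Plugging in $\Delta \le \gamma|E_{\adm}|$ then gives $\sum_{C_i' \in \opt'}(\cost(\calC) - \cost(\calC + C_i')) \le 3\gamma|E_{\adm}|$, which is the claim. I would also remark that, for the parameter regime used here ($\gamma = \varepsilon^{20}$ and small $\varepsilon$), $\tfrac32\gamma$ still satisfies $\tfrac32\gamma < \varepsilon^{13}/4$, so the output genuinely is a good local optimum in the sense the downstream argument needs.

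There is essentially no hard obstacle: the argument is pure bookkeeping. The only points that need care are (i) checking that it is legitimate to state a good-local-optimum guarantee relative to $\opt'$ rather than the abstract $\calC^\ast_{(\calK,E_{\adm})}$, which is justified because $\opt'$ is $(\varepsilon,\varepsilon/2)$-good and $(1+\varepsilon)$-competitive by the two preceding lemmas, and (ii) tracking where the constant $\tfrac32$ comes from --- it is precisely the split of the telescoped sum into the $\Delta$-part, bounded by $\gamma|E_{\adm}|$, and the $\gamma\sum d^{\adm}$-part, equal to $2\gamma|E_{\adm}|$, combined with the factor $2$ appearing in Definition~\ref{def:apx-max}.
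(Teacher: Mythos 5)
Your proposal is correct and matches the paper's own proof: both unwind the definition of $\Delta_i$ to get $\sum_i(\cost(\calC)-\cost(\calC+C_i'))\le\Delta+2\gamma|E_{\adm}|\le 3\gamma|E_{\adm}|$, which is exactly the $\tfrac32\gamma$-good condition. The additional remarks on the legitimacy of using $\opt'$ and on the parameter regime are fine but not needed for the lemma itself.
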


\begin{proof}
    By the definition of $\Delta$, $\sum_i(\cost(\mathcal{C})-\cost(\mathcal{C}+C_i'))\le\sum_i(\Delta_i+\gamma d^{adm}(C_i'))=\Delta+2\gamma|E_{adm}|$. The lemma simply follows.
\end{proof}

We summarize our conclusion in the following theorem.

\begin{theorem}\label{thm:faster-gamma-optimal}
    With high probability, the clustering $\mathcal{C}$ returned by our faster implementation is a $\frac32\gamma$-good local optimum.
\end{theorem}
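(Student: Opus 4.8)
The plan is to assemble the lemmas of this section into one argument about the run of \textbf{Faster-Local-Search}. Fix $\gamma=\varepsilon^{20}$, take $\eta$ large enough for Lemma~\ref{lem:generate-cluster-4.2} and $s>10^{6}\eta^{6}\varepsilon^{-27}$ as in Lemma~\ref{lem:sampled-enough}, and let $\opt'=\{C_1',\dots,C_k'\}$ be the refinement of $\opt$ obtained by splitting off from its cluster every atom $K$ with $|K|/|C_i|<\varepsilon^{21}/576$; by the two lemmas above $\opt'$ is $(\varepsilon,\varepsilon/2)$-good and $\cost(\opt')\le(1+\varepsilon)\cost(\opt)$. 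Writing $\Delta_i=\max\{\cost(\calC)-\cost(\calC+C_i')-\gamma\,d^{\adm}(C_i'),\,0\}$ and $\Delta=\sum_i\Delta_i$, the last lemma of the section reduces the theorem to showing that, with high probability, the returned clustering $\calC$ satisfies $\Delta\le\gamma|E_{\adm}|$.

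The first ingredient is that the algorithm runs for only polynomially many rounds: by Lemma~\ref{lem:initial-cost} the initial cost is $\cost(\opt)+O(\varepsilon^{-1}|E_{\adm}|)$, every update lowers $\cost(\calC)$ by at least $\gamma|E_{\adm}|/n$, and by Theorem~\ref{thm:preprocessed-wrapper} this forces at most $O_\varepsilon(n^{2}\log n)$ rounds, as already noted above. The second, and main, ingredient is a per-round progress bound: \emph{if at the start of some round there is an index $i$ with $\Delta_i\ge\gamma|E_{\adm}|/n$, then that round updates $\calC$ with probability $\Omega_\varepsilon(1/n)$.} To prove this I would chain three events. First, the sampled pivot $r$ lands inside $C_i'$ — at a vertex of an atom $K\subseteq C_i'$ with probability at least $\varepsilon|K|/(n|C_i'|)\ge\varepsilon^{22}/(576n)$ (the atom was not split off), or at an atom-free vertex of $C_i'$ with probability at least $\varepsilon/n$ (since $|C_i'|\ge\varepsilon\,d(v)$ for $v\in C_i'$); conditioned on this we have $K(r)\subseteq C_i'\subseteq N(r)$ by $(\varepsilon,\varepsilon/2)$-goodness, and $\Delta_i>0$ gives $\cost_w(\calC+C_i')+\gamma\,d^{\adm}(C_i')\le\cost_w(\calC)$, so the hypotheses of Lemma~\ref{lem:generate-cluster-4.2} hold with $C=C_i'$. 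Second, by Lemma~\ref{lem:sampled-enough} the subsampled sets $T^1,\dots,T^\eta$ each contain at least $\eta^5$ vertices of the corresponding target cluster with probability at least $\tfrac12$, so among the enumerated pairs $(\mathcal{S},\vec{s})$ there is one that is a valid uniform subsample together with a correctly rounded size estimate. Third, for that pair Lemma~\ref{lem:generate-cluster-4.2} returns, with probability at least $1-2\eta\exp(-\eta)$, a cluster $S'$ with $\cost_w(\calC+S')\le\cost_w(\calC+C_i')+\tfrac{\gamma}{2}d^{\adm}(C_i')=\cost_w(\calC)-\Delta_i-\tfrac{\gamma}{2}d^{\adm}(C_i')\le\cost_w(\calC)-\gamma|E_{\adm}|/n$; since the algorithm keeps the cost-minimizing pair, this clears the acceptance threshold and $\calC$ is updated. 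Multiplying the three constant-order probabilities gives $\Omega_\varepsilon(1/n)$. (A singleton cluster $C_i'$ is handled in the same way through the dedicated $\{r'\}$-improvement step.)

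To finish, observe that the algorithm can terminate at a round $t$ only if $\calC$ — and hence every $\Delta_i$ — stayed fixed over the preceding $\Theta(n\log n)$ rounds. If some $\Delta_i\ge\gamma|E_{\adm}|/n$ held throughout that window, the per-round bound makes each of its $\Theta(n\log n)$ rounds update $\calC$ independently with probability $\Omega_\varepsilon(1/n)$, so the probability none does is $(1-\Omega_\varepsilon(1/n))^{\Theta(n\log n)}=n^{-\Omega_\varepsilon(1)}$. A union bound over the polynomially many candidate termination rounds then shows that, with high probability, at termination $\Delta_i<\gamma|E_{\adm}|/n$ for every $i$, so $\Delta<k\cdot\gamma|E_{\adm}|/n\le\gamma|E_{\adm}|$ because $k\le n$. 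The last lemma of the section then yields that $\calC$ is a $\tfrac32\gamma$-good local optimum, as claimed. I expect the main obstacle to be the per-round progress bound: one must make the three probabilistic sub-events compose cleanly — in particular that Lemma~\ref{lem:sampled-enough} guarantees a usable subsample \emph{among the enumerated choices} and that minimizing over all $(\mathcal{S},\vec{s})$ can only help — and, crucially, verify that the improvement furnished by Lemma~\ref{lem:generate-cluster-4.2} exceeds the algorithm's acceptance threshold $\gamma|E_{\adm}|/n$, rather than merely being positive. The round count and the concluding union bound are routine.
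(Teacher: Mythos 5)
Your proposal is correct and follows essentially the same route as the paper: split $\opt$ into the $(\varepsilon,\varepsilon/2)$-good refinement $\opt'$, define $\Delta_i$ and $\Delta$, show via the hitting probability, Lemma~\ref{lem:sampled-enough}, and Lemma~\ref{lem:generate-cluster-4.2} that any round with some $\Delta_i\ge\gamma|E_{\adm}|/n$ updates $\calC$ with probability $\Omega_\varepsilon(1/n)$, and conclude by a union bound over the polynomially many rounds followed by the final lemma of the section. You in fact spell out two steps the paper leaves implicit (that the improvement from Lemma~\ref{lem:generate-cluster-4.2} clears the acceptance threshold $\gamma|E_{\adm}|/n$, and that $\Delta<k\cdot\gamma|E_{\adm}|/n\le\gamma|E_{\adm}|$ via $k\le n$), which is a welcome tightening rather than a deviation.
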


\paragraph{Further improvements.}

In the remaining sections, we show how to implement our local search algorithm in MPC, sublinear and streaming models. In each round, our algorithm can be viewed as two steps:
\begin{itemize}
    \item Randomly sample at most one pivot vertex;
    \item Try to improve the current clustering by sampling a constant number of (possibly dependent) clusters containing the pivot.
\end{itemize}

In Section \ref{sec:MPC}, we show how to sample multiple pivots in one round, reducing the number of rounds to a constant, while a similar result with Lemma \ref{lem:generate-cluster-4.2} still holds.

In Sections \ref{sec:sublinear}, we show another way of sampling clusters for each pivot and estimating the costs, which work in the sublinear and streaming models.

\section{MPC implementation of local search}
\label{sec:MPC}
In this section we will show how to implement the local search within constant number of rounds in the MPC model. Note that we are not aware of a way to implement the pivot step in the algorithm in \cref{alg:iterative-local-search} efficiently under the MPC model, but we can implement the algorithm in \cref{alg:local-search-flip} and achieves a $2-1/8+\epsilon$ approximation as long as we can implement the local search efficiently. 

\begin{tcolorbox}[beforeafter skip=10pt]
    \textbf{MPC-Local-Search($\eta,\gamma,\gamma', s$)}
    \begin{itemize}[itemsep=0pt, topsep=4pt]
    \item Compute an atomic pre-clustering using the Precluster algorithm.
    \item $\calC \gets \{K \mid K \in \calK\} \cup \{\{v\} \mid v \in V \setminus (\cup_{K \in \calK}K)\}$.
    \item While $\calC$ improves: %
        \begin{itemize}[itemsep=0pt, topsep=0pt]
            \item Do the following $\Theta_\epsilon(\log n)$ times in parallel:
            \begin{itemize}
                \item For each vertex, try to improve the clustering by the singleton.
                \item Select each vertex $r$ with probability $\frac{\epsilon^{4}\gamma'}{24d(r)}$. For each selected vertex:
                \begin{itemize}
                    \item Mark all its neighbors.
                    \item Check the proportion of its neighbors that are marked by other selected vertices. Stop if this proportion is at least $\gamma'$.
                \end{itemize}
                \item Uniformly sample $\eta$ subsets $T^1,\dots,T^\eta$ of size $s$ from $N(r)$.
                \item For each collection of $\eta$ (not necessarily disjoint) multisets $\mathcal{S} = S^1,\ldots,S^\eta$ each of at most $\eta^{5}$ vertices in $T^i$ respectively, and each collection of $\eta$ integers $\vec{s} = s^1,\ldots,s^\eta \in \{\epsilon d(r) (1+\epsilon')^k/2 \mid 0\le k\le \log_{1+\epsilon'}(4/\epsilon)\}$, \\
                $f(\mathcal{S},\vec{s}) \gets $ GenerateCluster($\calC, r, S^1,\ldots,S^\eta, s^1,\ldots,s^\eta$). 
                \item Find the $(\mathcal{S}^*,\vec{s}^*)$ that maximises the improvement $f(\mathcal{S},\vec{s})$.
                \item If the improvement is at least $\gamma|E_{\adm}|$:  
                \begin{itemize}
                    \item Update $\calC$ by all the new found clusters using $(\mathcal{S}^*,\vec{s}^*)$ in the same way as we generate the clusters.
                \end{itemize}
            \end{itemize}
            \item Choose the best of the $\Theta(\log n)$ improvements to improve $\calC$
        \end{itemize}
    \item Output $\calC$
    \end{itemize}
\end{tcolorbox}

Now GenerateCluster only returns the improvement.

\begin{tcolorbox}[beforeafter skip=10pt]
    \textbf{GenerateCluster($\calC, r, S^1,\ldots,S^\eta, s^1,\ldots, s^\eta$)}    
    \begin{itemize}[itemsep=0pt, topsep=4pt]
    \item Let $D_r^1, \ldots, D_r^{\eta}$ be an arbitrary partition of the
    vertices of $D(r)$ into equal-size parts,
    \item For $i=1,\ldots,\eta$, send $(S^i,s^i)$ to every vertex $v\in D_r^i$.
    \item For each vertex $v \in D_r^i$ which is not double-marked:
        \begin{itemize}[itemsep=0pt, topsep=0pt]
        \item Let $C(v)$ be the current cluster of $v$. 
        \item Compute EstCostStays($S^i,s_i,v$) and EstCostMoves($S^i, s_i,v$) parallel from the neighbors of $v$.
        \item If EstCostStays($S^i,s_i,v$) $>$ EstCostMoves($S^i, s_i,v$) + $6\eta^{-1} W|N(r)|$,\\ decide moving to the new cluster.
        \end{itemize}
    \item For each vertex $v \in D_r^i$ which decides to move:
        \begin{itemize}[itemsep=0pt, topsep=0pt]
        \item Compute the individual improvement parallel from the neighbors of $v$.
        \end{itemize}
    \item Collect the individual improvements and return.
    \end{itemize}
\end{tcolorbox}

Previously when sampling, we have sampled each vertex $r$ with probability of $1/nd(r)$. We will instead sample each vertex $r$ with probability of $\frac{\epsilon^{4}\gamma'}{24d(r)}$. The idea is that this ensures we hit each cluster with constant probability, while still not having too much overlap with other vertices that was selected.

\begin{lemma}\label{lem:mpc-cluster-lost}
    Let $G$ be a graph with an $\epsilon$-good preclustered instance $(\mathcal{K},E_{adm})$. Let each vertex be selected with probability $\epsilon^4\gamma'/24d(r)$. Let $X_u$ for each $u\in V(G)$ be the number of neighbours of $u$ in $D(u)$, that has a neighbour in $E_{adm}$ that was selected. Then for all $u\in V(G)$
    $$\Pr[X_u\ge \gamma'|D(u)|] \le \frac{1}{2}$$
\end{lemma}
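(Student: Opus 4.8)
The plan is to bound $\mathbf{E}[X_u]$ by a small constant fraction of $|D(u)|$ and then apply Markov's inequality. Fix $u \in V(G)$. For each vertex $w \in D(u)$, let $Y_w$ be the indicator that some vertex $v$ with $vw \in E_{\adm}$ is selected; then $X_u = \sum_{w \in D(u)} Y_w$. By the union bound over the admissible neighbors of $w$,
\[
\mathbf{E}[Y_w] \le \sum_{v \,:\, vw \in E_{\adm}} \frac{\epsilon^4 \gamma'}{24 d(v)}.
\]
Now I would use the degree-similarity guarantee of an $\varepsilon$-good preclustered instance: for $vw \in E_{\adm}$ we have $d(w) \le 2\varepsilon^{-1} d(v)$, i.e.\ $1/d(v) \le 2\varepsilon^{-1}/d(w)$. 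Hence each admissible neighbor $v$ of $w$ contributes at most $\frac{\epsilon^4\gamma'}{24} \cdot \frac{2}{\varepsilon d(w)} = \frac{\epsilon^3 \gamma'}{12 d(w)}$, and since $d^{\adm}(w) \le d(w)$ there are at most $d(w)$ such $v$. Therefore $\mathbf{E}[Y_w] \le \frac{\epsilon^3 \gamma'}{12}$, which for $\epsilon < 1$ is at most $\gamma'/12 < 1$. Summing over $w \in D(u)$ gives $\mathbf{E}[X_u] \le \frac{\epsilon^3\gamma'}{12} |D(u)| \le \frac{\gamma'}{12}|D(u)|$.

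Finally, Markov's inequality yields
\[
\Pr[X_u \ge \gamma' |D(u)|] \le \frac{\mathbf{E}[X_u]}{\gamma'|D(u)|} \le \frac{1}{12} \le \frac12,
\]
which is the claim. (If $|D(u)| = 0$ the statement is vacuous, so we may assume $|D(u)| \ge 1$.)

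The only subtlety — and the step I would be most careful about — is getting the right upper bound on the number of admissible neighbors $v$ of a given $w$ weighted by $1/d(v)$: one must be sure that the degree-similarity bound from Definition~\ref{def:precluster} (bullet 2) is being applied in the correct direction, and that no extra factor creeps in from vertices of very small degree. Here the key point is precisely that admissibility forces $d(v) \ge \tfrac{\varepsilon}{2} d(w)$, so the sum $\sum_{v \,:\, vw \in E_{\adm}} 1/d(v)$ is controlled by $d^{\adm}(w) \cdot \frac{2}{\varepsilon d(w)} \le \frac{2}{\varepsilon}$. Everything else is a routine union bound plus Markov. Note that the constant $24$ (and the exponent $4$ on $\epsilon$) in the selection probability is chosen with exactly this computation in mind, leaving comfortable slack so the final bound is well below $1/2$; the extra slack is what will later absorb the ``double-marking'' stopping rule without changing the conclusion.
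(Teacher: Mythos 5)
Your overall strategy is exactly the paper's: write $X_u$ as a sum of indicators, bound each indicator's expectation by a union bound over admissible neighbors combined with degree-similarity, and finish with Markov. However, there is a genuine error in the one step you yourself flagged as delicate: you assert that ``since $d^{\adm}(w) \le d(w)$ there are at most $d(w)$ such $v$.'' That inequality is not a property of an $\varepsilon$-good preclustered instance and is false in general: admissible pairs need not be edges of $G$ (the definition of $E_{\adm}$ only requires degree-similarity and many common neighbors, not adjacency), and the actual guarantee from Definition~\ref{def:precluster} is only $d^{\adm}(w) \le 2\varepsilon^{-3} d(w)$. So a vertex can have far more admissible neighbors than actual neighbors, and your bound $\mathbf{E}[Y_w] \le \epsilon^3\gamma'/12$ is unjustified as written.

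The good news is that the argument is easily repaired, and the constants in the lemma are chosen precisely to absorb the correct bound: using $d^{\adm}(w) \le 2\varepsilon^{-3} d(w)$ together with your (correct) degree-similarity step $1/d(v) \le 2/(\varepsilon d(w))$ gives
\[
\mathbf{E}[Y_w] \;\le\; d^{\adm}(w)\cdot \frac{\epsilon^4\gamma'}{24}\cdot\frac{2}{\varepsilon d(w)} \;\le\; 2\varepsilon^{-3} d(w)\cdot \frac{\epsilon^3\gamma'}{12\, d(w)} \;=\; \frac{\gamma'}{6},
\]
hence $\mathbf{E}[X_u] \le \gamma'|D(u)|/6$ and Markov still yields the claimed bound of $1/2$ (the paper's own computation, which bounds the admissible neighborhood by $6\varepsilon^{-3}d(v)$, lands at exactly $\gamma'|D(u)|/2$). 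This is why the selection probability carries the factor $\epsilon^4$: one power of $\epsilon$ compensates the degree-similarity loss and three powers compensate the $\varepsilon^{-3}$ blow-up in the number of admissible neighbors. With that single correction your proof matches the paper's.
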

\begin{proof}
    First, we calculate the expected value of $X_u$:
    \begin{align*}
        E[X_u] & = \sum_{v\in D(u)}\sum_{w\in N_v}\frac{\epsilon^4\gamma'}{24 d(w)} \\
        & \le \sum_{v\in D(u)}\sum_{w \in N_v}\frac{\epsilon^4\gamma'}{12\epsilon d(v)} \\
        & \le \sum_{v\in D(u)}|N_v|\frac{\epsilon^4\gamma'}{12\epsilon d(v)} \\
        & \le \sum_{v\in D(u)}6\epsilon^{-3}d(v)\frac{\epsilon^4\gamma'}{12\epsilon d(v)} \\
        & \le \frac{\gamma'|D(u)|}{2}
    \end{align*}
    by a union bound and degree similarity of neighbours in $E_{adm}$. Then the lemma follows from Markov's inequality.
\end{proof}
The result of this lemma is that we are only going to stop due to having a too large amount of the neighbours selected with probability at most $1/2$.

To argue that this algorithm works in MPC model, we first note that each vertex only needs to broadcast information to its neighbors, or aggregate the information from its neighbors. When each machine have memory $O(m^{\delta})$, we can have $O(m^{1-\delta}+n)$ machines each taking care of $O(m^{\delta})$ edges incident to the same vertex. For each vertex with degree larger than $m^{\delta}$, we can build a $O(\frac{1}{\delta})$-level B-tree with fan-out $O(m^{\delta})$ to connect all machines related to this vertex. Then the broadcasting and aggregating can be done in $O(\frac{1}{\delta})$ rounds.

\begin{lemma}\label{lem:mpc-success}
    Let $\mathcal{C}$ be a clustering, let $C$ be the optimal improvement for the clustering, let $S'$ be the improving cluster computed by the near linear time local search with $r$ as the starting vertex. Let $S_{mpc}'$ be the improving cluster computed by the MPC local search with $r$ as the selected vertex. Let $\gamma' \le \frac{1}{2304}\varepsilon^8\gamma$ and $\eta$ and $\gamma$ as defined in \cref{lem:generate-cluster-4.2}. Then with probability at least $\frac{1}{4}-2\eta\exp(-\eta)$  
    $$\cost(\mathcal{C} + S_{mpc}') \le \cost(\mathcal{C} + S') + \frac{\gamma}{4}\sum_{u\in C}d^{adm}(u)$$
\end{lemma}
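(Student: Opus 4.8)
The plan is to reduce the statement to a bound on how much the cost moves when a small set of vertices is forbidden from joining the new cluster. Observe that the per-vertex move decision made inside \textbf{GenerateCluster} for a vertex $v\in D_r^i$ depends only on $S^i$, $s^i$, $v$ and the current clustering $\calC$ (through $C(v)$), and not on which other vertices have already joined; the MPC variant is identical except that it never even considers the set $B$ of \emph{double-marked} vertices of $D(r)$. Hence, run with the same uniformly sampled sets $T^1,\dots,T^\eta$ and over the same families of multisets $\mathcal{S}$ and size vectors $\vec{s}$, for every fixed $(\mathcal{S},\vec{s})$ the MPC output equals the near-linear output with $B$ removed. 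Since \textbf{MPC-Local-Search} returns the $(\mathcal{S},\vec{s})$ of maximum improvement, taking the maximiser $(\mathcal{S}^\dagger,\vec{s}^\dagger)$ that the near-linear search uses to produce $S'$ gives $\cost(\calC+S_{mpc}')\le\cost(\calC+(S'\setminus B))$, so it suffices to prove $\cost(\calC+(S'\setminus B))\le\cost(\calC+S')+\tfrac{\gamma}{4}\sum_{u\in C}d^{\adm}(u)$.

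Deleting $B$ from $S'$ only changes the status of pairs incident to $B$: the at most $|B|\cdot|S'|\le|B|\cdot|N(r)|$ pairs between $B$ and $S'$, and, for each $b\in B\cap S'$, the pairs joining $b$ to what remains of its former $\calC$-cluster $C(b)$; by Corollary~\ref{cor:bounded-minus-cost} and Claim~\ref{claim:degree-4.2}, $|C(b)|\le 12\varepsilon^{-4}d(b)\le 48\varepsilon^{-6}|N(r)|$ for $b\in N(r)$, so the number of affected pairs, each of weight at most $W$, is $\operatorname{poly}(\varepsilon^{-1})\cdot|B|\cdot|N(r)|$. Next we bound $|B|$: conditioned on $r$ being the selected vertex, a double-marked vertex of $D(r)$ is precisely one having a selected admissible neighbour other than $r$, so Lemma~\ref{lem:mpc-cluster-lost} (whose hypothesis also controls the ``stop'' rule of the MPC pivot step) gives $|B|\le\gamma'|D(r)|$ and non-stopping with probability at least $\tfrac12$. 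Finally, since $C$ is a good cluster with $K(r)\subseteq C\subseteq N(r)$, Lemma~\ref{lem:D-times-N-4.2} yields $|D(r)|\cdot|N(r)|\le 576\varepsilon^{-8}\sum_{u\in C}d^{\adm}(u)$, hence $|B|\cdot|N(r)|\le 576\gamma'\varepsilon^{-8}\sum_{u\in C}d^{\adm}(u)$; taking $\gamma'\le\tfrac{1}{2304}\varepsilon^8\gamma$ (absorbing the remaining $\operatorname{poly}(\varepsilon^{-1})$ factor into a slightly smaller $\gamma'$, or into a larger $\eta$) makes the extra term at most $\tfrac{\gamma}{4}\sum_{u\in C}d^{\adm}(u)$.

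For the probability I would combine three events living on disjoint parts of the randomness. First, each $T^i$ contains at least $\eta^5$ vertices of the cluster from which \textbf{GenerateCluster} needs a uniform sample; this holds with probability $\ge\tfrac12$ by Lemma~\ref{lem:sampled-enough} (using $s>10^6\eta^6\varepsilon^{-27}$). Second, given this, the concentration of $\rmEstCostStays$/$\rmEstCostMoves$, used exactly as in the proof of Lemma~\ref{lem:generate-cluster-4.2} via Corollary~\ref{cor:concentration}, holds with probability $\ge 1-2\eta\exp(-\eta)$. Third, $|B|\le\gamma'|D(r)|$ and non-stopping hold with probability $\ge\tfrac12$, independently of the sampling. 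Multiplying, the success probability is at least $\tfrac14\bigl(1-2\eta\exp(-\eta)\bigr)\ge\tfrac14-2\eta\exp(-\eta)$.

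The main obstacle is the second step: one must account carefully for the cost of deleting $B$ from $S'$ when those vertices fall back into large former clusters, and then check that the crude $\operatorname{poly}(\varepsilon^{-1})\cdot|B|\cdot|N(r)|$ estimate, once converted through Lemma~\ref{lem:D-times-N-4.2} into a multiple of $\sum_{u\in C}d^{\adm}(u)$, is genuinely below $\tfrac{\gamma}{4}$ for the stated range of $\gamma'$. If the pair-count is not tight enough, the alternative is to rerun a Lemma~\ref{lem:generate-cluster-4.2}-style analysis directly on the reduced candidate set $N(r)\setminus B$: compare $S_{mpc}'$ with the best cluster $\tilde C^\ast$ contained in $K(r)\cup(D(r)\setminus B)$, bound $\cost(\calC+\tilde C^\ast)\le\cost(\calC+(C^\ast\setminus B))$ where $C^\ast=\text{argmin}_{K(r)\subseteq Q\subseteq N(r)}\cost(\calC+Q)$, and use $\cost(\calC+C^\ast)\le\cost(\calC+S')$ together with the slack bound of Lemma~\ref{lem:generate-cluster-4.2}.
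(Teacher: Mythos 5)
Your proposal follows the paper's proof essentially verbatim: identify $S_{mpc}'$ with $S'$ minus the double-marked set $B$, bound $|B|\le\gamma'|D(r)|$ via Lemma~\ref{lem:mpc-cluster-lost}, convert $|D(r)|\cdot|N(r)|$ into $\sum_{u\in C}d^{\adm}(u)$ via Lemma~\ref{lem:D-times-N-4.2}, and combine the two probability-$\tfrac12$ events with the failure probability of Lemma~\ref{lem:generate-cluster-4.2} to get $\tfrac14-2\eta\exp(-\eta)$. The only divergence is that you also account for the pairs between $B$ and the clusters those vertices fall back into --- a contribution the paper's one-line estimate $\cost(\calC+S_{mpc}')-\cost(\calC+S')\le|S_{mpc}'|\cdot|S'\setminus S_{mpc}'|$ silently drops --- which costs you an extra $\operatorname{poly}(\varepsilon^{-1})$ factor that, as you correctly flag, forces $\gamma'$ to be taken smaller than the stated $\frac{1}{2304}\varepsilon^{8}\gamma$; this is a more careful accounting rather than a gap.
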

\begin{proof}
    It is clear that the setting of this algorithm is the same as in \cref{lem:generate-cluster-4.2}, except that we have possibly lost a $\gamma'$ fraction of the optimal cluster. We therefore compare our solution to the optimal one selected in \cref{lem:generate-cluster-4.2}. We only continue if we have lost less than a $\gamma'$ fraction of the neighbors.
    Using this we can determine how much additional cost this introduces. 
    \begin{align*}
        \cost(\mathcal{C} + S_{mpc}')-\cost(\mathcal{C} + S') &\le |S_{mpc}'||S'\setminus S_{mpc}'|\\
        &\le \gamma'|N(r)||D(r)|
    \end{align*}
    using the fact that $S_{mpc}'$ is constructed from the same initial vertex $r$ and so shares the same atoms as $S'$. Furthermore, $S_{mpc}'$ is contained in $N(r)$ while $S'\setminus S_{mpc}'$ is contained in $D(r)$. Applying \cref{lem:D-times-N-4.2}, we get that
    $$\cost(\mathcal{C} + S_{mpc}')-\cost(\mathcal{C} + S) \le 576\varepsilon^{-8}\gamma'\sum_{u \in C} d^{adm}(u) = \frac{\gamma}{4}\sum_{u \in C} d^{adm}(u),$$
    by the definition of $\gamma'$.

    With regards to the probability of this happening, observe that the event of getting enough samples, as described in \cref{lem:sampled-enough} and \cref{lem:mpc-cluster-lost} are independent and both with probability $1/2$. By a union bound with the probability of success from \cref{lem:generate-cluster-4.2} we get the probability of success being $\frac{1}{4}-2\eta\exp(-\eta)$.
\end{proof}
From this we see that each time we select a vertex to construct a cluster from, with constant probability we are going to get a constant fraction of the possible improvement for each cluster. Since we can hit a $\Omega_\epsilon(1)$ fraction (instead of $\Omega_\epsilon(\frac{1}{n})$ in previous sections) of the clusters in $\opt'$ in one round, a single round will in expectation achieves a constant fraction of the improvement. This means that between the $\Theta(\log n)$ parallel executions, at least one will with high probability achieve the improvement if it is possible. 

\begin{theorem} %
    When $\Delta \le \gamma|E_{\adm}|$, the clustering $\mathcal{C}$ returned by the MPC implementation of local search is a $\frac{3}{2}\gamma$-good local optimum with high probability.
\end{theorem}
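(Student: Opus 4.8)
The plan is to mirror the two-step structure of the near-linear-time analysis (the lemma preceding Theorem~\ref{thm:faster-gamma-optimal}): a purely deterministic reduction from the hypothesis $\Delta \le \gamma|E_{\adm}|$ to the conclusion that $\calC$ is a $\tfrac32\gamma$-good local optimum, followed by a probabilistic argument that the MPC process actually halts in a state with $\Delta \le \gamma|E_{\adm}|$ with high probability. For the deterministic step, let $\opt'$ be the $(\varepsilon,\varepsilon/2)$-good near-optimal clustering obtained from the optimum good clustering by splitting off every atom occupying less than an $\tfrac{\varepsilon^{21}}{576}$-fraction of its cluster, exactly as in Section~\ref{sec:number-of-rounds}; recall it is a partition of $V(G)$ and $\cost(\opt') \le (1+\varepsilon)\cost(\opt)$. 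Setting $\Delta_i := \max\{\cost(\calC) - \cost(\calC + C_i') - \gamma\, d^{\adm}(C_i'),\,0\}$ for each $C_i' \in \opt'$ and $\Delta := \sum_i \Delta_i$, and using $\sum_i d^{\adm}(C_i') = \sum_{u \in V(G)} d^{\adm}(u) = 2|E_{\adm}|$, we get
\[
\sum_{C_i' \in \opt'} \big(\cost(\calC) - \cost(\calC + C_i')\big) \;\le\; \sum_i \big(\Delta_i + \gamma\, d^{\adm}(C_i')\big) \;=\; \Delta + 2\gamma|E_{\adm}| \;\le\; 3\gamma|E_{\adm}|,
\]
which is exactly the defining inequality of a $\tfrac32\gamma$-good local optimum in Definition~\ref{def:apx-max}, with $\opt'$ playing the role of $\calC^*_{(\calK,E_{\adm})}$ as in the faster implementation.

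It remains to show that with high probability the algorithm terminates with $\Delta \le \gamma|E_{\adm}|$, and this is where the real work lies. The ingredients are already available: sampling each vertex $r$ with probability $\tfrac{\varepsilon^4\gamma'}{24\,d(r)}$ and taking $\gamma' \le \tfrac{1}{2304}\varepsilon^8\gamma$, Lemma~\ref{lem:mpc-cluster-lost} guarantees that, conditioned on selecting a pivot inside a given $C_i' \in \opt'$, at most a $\gamma'$-fraction of $C_i'$ is lost to overlap with other selected pivots with probability at least $\tfrac12$; combined with Lemmas~\ref{lem:sampled-enough}, \ref{lem:generate-cluster-4.2} and \ref{lem:mpc-success} (for $s > 10^6\eta^6\varepsilon^{-27}$ and $\eta$ as required there), this shows GenerateCluster then returns a cluster realizing all but a $\tfrac{\gamma}{4} d^{\adm}(C_i')$-additive slack of the improvement $\Delta_i$ with constant probability $\tfrac14 - 2\eta\exp(-\eta)$. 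Within one execution the double-marking makes the clusters committed for distinct pivots vertex-disjoint, so their realized improvements add; since each $C_i'$ is hit by a selected pivot with probability $\Omega_\varepsilon(1)$, a single execution achieves expected total improvement $\Omega_\varepsilon(\Delta)$ up to the aggregate slack $\sum_i \tfrac{\gamma}{4} d^{\adm}(C_i') \le \tfrac{\gamma}{2}|E_{\adm}|$. Taking the best of the $\Theta_\varepsilon(\log n)$ independent parallel executions amplifies this so that, whenever $\Delta$ exceeds the threshold $\gamma|E_{\adm}|$, some execution improves $\calC$ by at least $\gamma|E_{\adm}|$ with high probability, and hence the while-loop does not stop. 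Finally, Lemma~\ref{lem:initial-cost} bounds the initial cost by $\cost(\opt) + \tfrac4\varepsilon|E_{\adm}|$ and each non-terminal round strictly decreases the (boundedly-weighted, hence polynomially-bounded) cost by at least $\gamma|E_{\adm}|$, so the number of rounds is polynomial; a union bound over the rounds then yields that with high probability the algorithm halts only once $\Delta \le \gamma|E_{\adm}|$.

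The main obstacle is the ``improvements add up'' step above: one must verify that committing, in a single round, all clusters generated from all selected pivots really achieves the sum of the individual GenerateCluster improvements rather than less. This hinges on the double-marking guaranteeing vertex-disjointness of the committed clusters, on Lemma~\ref{lem:mpc-cluster-lost} controlling the residual overlap, and on the observation that the analysis of Lemma~\ref{lem:generate-cluster-4.2} for a fixed pivot is unaffected by the simultaneous commitment (each vertex's move decision depends only on its own sampled set $S^i$ and the current $\calC$). The remaining nuisance is purely quantitative --- making the $\Omega_\varepsilon(\cdot)$ constants land exactly at the stated threshold $\gamma|E_{\adm}|$ --- and is absorbed by the parameter choices $\gamma' \le \tfrac{1}{2304}\varepsilon^8\gamma$ and $s > 10^6\eta^6\varepsilon^{-27}$ already fixed in the preceding lemmas.
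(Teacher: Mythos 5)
Your proposal is correct and follows essentially the same route as the paper: the paper's proof of this theorem simply defers to the argument for Theorem~\ref{thm:faster-gamma-optimal}, whose two components --- the deterministic implication from $\Delta \le \gamma|E_{\adm}|$ via the definition of $\Delta_i$ and $\sum_i d^{\adm}(C_i') = 2|E_{\adm}|$, and the probabilistic argument (hitting probability, Lemmas~\ref{lem:mpc-cluster-lost} and~\ref{lem:mpc-success}, amplification over the $\Theta(\log n)$ parallel runs, and a union bound over polynomially many rounds) --- are exactly what you reconstruct. Your extra care about the disjointness of simultaneously committed clusters is a reasonable elaboration of a point the paper treats only briefly, but it does not change the approach.
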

\begin{proof}
Since the setting is essentially the same as in \cref{thm:faster-gamma-optimal}, the proof is the same too.    
\end{proof}

\section{Sublinear and Streaming Implementations of Local Search}
\label{sec:sublinear}
\newcommand{\rmEstImprovement}{\mathrm{EstImprovement}}

Recall Lemmas \ref{lem:sample-neighbor}, \ref{lem:sample-vertex} and Theorem \ref{thm:admissible-query}, which allow us to sample neighbors, (globally) sample vertices, query admissible pairs, and traverse admissible neighbors, in both of the sublinear and streaming models.

We start from our MPC implementation which only has $O(\log n)$ parallel runs in constant rounds, and in each run, we may create multiple disjoint new clusters from multiple selected ``pivots''. There are two operations we need to change to make the algorithm work in sublinear and streaming models: 
\begin{itemize}
    \item When we estimate CostStays and CostMoves, we cannot enumerate all neighbors of $v$. Instead, we need to sample a constant-size set of its neighbors, and use one more Chernoff bound in Lemma \ref{lem:concentration-weighted-noisy};
    \item For each cluster outputted by GenerateCluster, we need to estimate the improvement, and then choose the most (estimated) improving one. We need a concentration bound for the improvement of each round in Section \ref{sec:number-of-rounds}.
\end{itemize}

\paragraph{Estimating CostStays and CostMoves.}
Here we rewrite the subroutine that estimates CostStays.
Let $N_G(v) = \{u \in V \mid (u, v) \in E^+\}$ be the neighborhood of $v$ in the original graph $G$. 

\begin{tcolorbox}[beforeafter skip=10pt]
    \textbf{$\rmEstCostStays'_w(S = \{u_1, \dots, u_{\eta_0}\}, s, v)$}
    \begin{itemize}[itemsep=0pt, topsep=4pt]
        \item Draw $\eta'$ i.i.d uniformly random samples $x_1, \dots, x_{\eta'}$ from $N_G(v)$. 
        \item return $d_w(v) - \frac{|N_G(v)|}{\eta'}\sum_{i = 1}^{\eta'}[x_i \in C(v)]w(x_i, v) - 1 + \frac{s}{\eta_0}\sum_{i=1}^{\eta_0}(d_w(v, \{u_i\}) + d(v, \{u_i\}) - 1)$
    \end{itemize}
\end{tcolorbox}

Recall the subroutine that estimates CostMoves.

\begin{tcolorbox}[beforeafter skip=10pt]
    \textbf{$\rmEstCostMoves_w(S = \{u_1, \dots, u_{\eta_0}\}, s, v)$}
    \begin{itemize}[itemsep=0pt, topsep=4pt]
        \item return $d_w(v) + s - 1 - \frac{s}{\eta_0}\sum_{i=1}^{\eta_0}(d_w(v, \{u_i\}) + d(v, \{u_i\}))$
    \end{itemize}
\end{tcolorbox}

In our algorithm, we only care about the difference between $\rmEstCostStays$ and $\rmEstCostMoves$, so instead of computing them individually, we can implement the following function that compute the difference between the two, to avoid computing $d_w(v)$. 
\begin{tcolorbox}[beforeafter skip=10pt]
    \textbf{$\mathrm{EstCostDiff}'_w(S = \{u_1, \dots, u_{\eta_0}\}, s, v)$}
    \begin{itemize}[itemsep=0pt, topsep=4pt]
        \item $\rmCostStays = s - 1 - \frac{s}{\eta_0}\sum_{i=1}^{\eta_0}(d_w(v, \{u_i\}) + d(v, \{u_i\}))$
        \item Draw $\eta'$ i.i.d uniformly random samples $x_1, \dots, x_{\eta'}$ from $N_G(v)$. 
        \item $\rmCostMoves = -\frac{|N_G(v)|}{\eta'}\sum_{i = 1}^{\eta'}[x_i \in C(v)]w(x_i, v) - 1 + \frac{s}{\eta_0}\sum_{i=1}^{\eta_0}(d_w(v, \{u_i\}) + d(v, \{u_i\}) - 1)$
        \item return $\rmCostStays - \rmCostMoves$
    \end{itemize}
\end{tcolorbox}

To compute $w(x_i,v)$, we only need to check whether $x_i$ and $v$ are in the same cluster in previous local search solution(s), which we can store in $O(n)$ space. Note that since $x_i$ is sampled from $N_G(v)$, we don't need to query whether it is a neighbor of $v$. Thus we can use Lemma \ref{lem:sample-neighbor} to sample $x_i$'s. In each round, since each vertex only samples a constant number of times, we only need a constant number of realizations of Lemma \ref{lem:sample-neighbor}.

It's a litter more tricky to sample $u_i$'s. When computing $d_w(v,\{u_i\})$, we need to query whether $v$ and $u_i$ are neighbors. Hence we need to have access to the neighborhood of $u_i$. We do the sampling as follows. We first use Lemma \ref{lem:sample-vertex} to sample some vertices. Let's call them special vertices. For the pivot vertex $r$ where we want to sample $u_i$'s from the admissible neighborhood $N(r)$, we know that with high probability, $\Omega(\log n)$ vertices in $N(r)$ are special, since $|N(r)|=\Omega(d(r))$ (when there exists the improving cluster) and vertices in $N(r)$ are degree-similar to $r$. Notice that vertices in $N(r)$ are sampled with similar but not exactly the same probabilities in Lemma \ref{lem:sample-vertex}, so we need to further discard each vertex with some (constant) probability to get independent samples. In the end, we sample $u_i$'s from these remaining special vertices. By Lemma \ref{lem:sample-vertex}, we can entirely store the neighborhood of all special vertices, so we can answer each neighbor query in $O(1)$ time. In each round, since each pivot only samples a constant number of times and different pivots sample from disjoint sets, we only need a constant number of realizations of Lemma \ref{lem:sample-vertex}.

Note that we do not need to store the $\eta'$ samples in the previous algorithms, since we can work with them one by one. So we do not introduce extra space cost here. The running time of this estimation is now $O_{\epsilon}(1)$.

\begin{lemma}
    Consider the setting of lemma~\ref{lem:concentration-weighted-noisy}, let $\eta'>\frac{W^2\eta^5}{4\epsilon^4}$, with probability $1 - 6\exp(-\frac{1}{2}\eta)$, the following two inequalities hold:
    \begin{equation}
        \label{eq:coststay-concentration-sampling}
        \rmEstCostStays_w(S,s,v) \in \rmExpCostStays_w(s,v) \pm \frac{1}{\eta^2}Ws
    \end{equation}
    \begin{equation}
        \label{eq:costmove-concentration-sampling}
        \rmEstCostMoves'_w(S,s,v) \in \rmExpCostMoves_w(s,v) \pm \left(\frac{1}{\eta^2}Ws + \frac{\epsilon^2}{\eta^2}d_w(v)\right)
    \end{equation}
\end{lemma}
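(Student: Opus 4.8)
The plan is to derive this from Lemma~\ref{lem:concentration-weighted} plus one additional Hoeffding bound. Inequality~\eqref{eq:coststay-concentration-sampling} concerns the estimator that uses no quantity needing fresh sampling, so it is just the corresponding inequality of Lemma~\ref{lem:concentration-weighted}, restated with the exact size $s$ in place of $\tilde s$; nothing new is needed for it. The content is in~\eqref{eq:costmove-concentration-sampling}, whose estimator replaces the quantity $d_w(v,C(v))$ -- which cannot be read off in $O(1)$ time -- by the empirical average $\widehat d := \tfrac{|N_G(v)|}{\eta'}\sum_{i=1}^{\eta'}[x_i\in C(v)]\,w(x_i,v)$ over $\eta'$ i.i.d.\ uniform samples $x_1,\dots,x_{\eta'}$ from $N_G(v)$. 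So I would first control $|\widehat d - d_w(v,C(v))|$ and then add this error on top of the concentration that Lemma~\ref{lem:concentration-weighted} already provides for the remaining $S$-sampled terms.

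For the new bound: $\widehat d$ is an unbiased estimator of $d_w(v,C(v))=\sum_{u\in N_G(v)\cap C(v)}w(u,v)$, since each summand $\tfrac{|N_G(v)|}{\eta'}[x_i\in C(v)]w(x_i,v)$ has mean $\tfrac1{\eta'}d_w(v,C(v))$, and each summand lies in $[0,\,W|N_G(v)|/\eta']$ because $w\le W$ on edges. Hoeffding's inequality gives $\Pr[\,|\widehat d - d_w(v,C(v))|\ge t\,]\le 2\exp\!\big(-2t^2\eta'/(W^2|N_G(v)|^2)\big)$. Choosing $t=\tfrac{\epsilon^2}{\eta^2}|N_G(v)|$ makes the exponent $-2\epsilon^4\eta'/(W^2\eta^4)$, which by the hypothesis $\eta'>W^2\eta^5/(4\epsilon^4)$ is at most $-\tfrac12\eta$. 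Hence with probability at least $1-2\exp(-\tfrac12\eta)$ we get $|\widehat d - d_w(v,C(v))|\le \tfrac{\epsilon^2}{\eta^2}|N_G(v)|=\tfrac{\epsilon^2}{\eta^2}d(v)\le\tfrac{\epsilon^2}{\eta^2}d_w(v)$, using $|N_G(v)|=d(v)$ and $d(v)\le d_w(v)$ from the degree Observation.

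To finish, I would condition on this event together with the event of Lemma~\ref{lem:concentration-weighted} (failure probability at most $4\exp(-\tfrac12\eta)$), which controls all the $S$-sampled terms. Then~\eqref{eq:coststay-concentration-sampling} is immediate. For~\eqref{eq:costmove-concentration-sampling}, observe that its estimator equals, up to deterministic shifts, one of the $S$-sampled sums analysed in Lemma~\ref{lem:concentration-weighted} minus $\widehat d$; so by the triangle inequality its deviation from its mean is at most the deviation of that $S$-sampled sum from its mean (at most $\tfrac1{\eta^2}Ws$ by Lemma~\ref{lem:concentration-weighted}) plus the replacement error $|\widehat d - d_w(v,C(v))|\le\tfrac{\epsilon^2}{\eta^2}d_w(v)$, giving the claimed tolerance $\tfrac1{\eta^2}Ws+\tfrac{\epsilon^2}{\eta^2}d_w(v)$. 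A union bound over the two failure events yields the overall probability $1-6\exp(-\tfrac12\eta)$.

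The main obstacle is bookkeeping rather than anything conceptual: the only genuinely new estimate is the Hoeffding bound of the second paragraph, and everything else is assembling it with the already-proved concentration lemma. The points needing care are (i) that the constant in the hypothesis $\eta'>W^2\eta^5/(4\epsilon^4)$ is exactly what pushes the Hoeffding exponent below $-\tfrac12\eta$ -- a target tolerance of order $\eta^{-2}d_w(v)$ forces $t$ of order $\eta^{-2}|N_G(v)|$, which in turn forces $\eta'$ of order $\eta^5$ to keep the exponent of order $\eta$; (ii) the conversion $|N_G(v)|=d(v)\le d_w(v)$; and (iii) that Lemma~\ref{lem:concentration-weighted} is invoked with the exact size $s$, so the $\epsilon'$-slack of Lemma~\ref{lem:concentration-weighted-noisy} is not incurred, matching the statement which is phrased in terms of $s$ rather than $\tilde s$.
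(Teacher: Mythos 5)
Your proposal is correct and matches the paper's own argument: the paper likewise isolates the single new ingredient as a Hoeffding bound on the empirical estimate of $d_w(v,C(v))$ from the $\eta'$ uniform samples of $N_G(v)$ (with threshold $\tfrac{\epsilon^2}{\eta^2}d_w(v)$, using $|N_G(v)|\le d_w(v)$ and $\eta'>W^2\eta^5/(4\epsilon^4)$ to push the exponent below $-\tfrac12\eta$), then union-bounds with the earlier concentration lemma and applies the triangle inequality. Your choice of $t=\tfrac{\epsilon^2}{\eta^2}|N_G(v)|$ versus the paper's $t=\tfrac{\epsilon^2}{\eta^2}d_w(v)$ is an immaterial reparametrization.
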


If the lemma holds, we can replace $\rmEstCostMoves$ by $\rmEstCostMoves'$ in GenerateCluster, since they introduce the same asymptotic bound on the failure probability and error under the setting of lemma~\ref{lem:concentration-weighted-noisy}, the same proof still holds. 

\begin{proof}
    Let $Z_i = \frac{|N_G(v)|}{\eta'}[u_i \in C(v)]w(u_i, v)$. By definition, 
    \[
        \rmEstCostMoves'_w(S, s, v) - \rmEstCostMoves_w(S, s, v) = \sum_{i = 1}^{\eta'}Z_i - \E[\sum_{i = 1}^{\eta'}Z_i] ~.
    \]
    Since $Z_i \in [0, \frac{|N_G(v)|}{\eta'}W]$, according to Hoeffding's inequality, 
    \begin{align*}
         & \Pr\left[\left|\rmEstCostMoves'_w(S, s, v) - \rmEstCostMoves_w(S, s, v)\right| \geq \frac{\epsilon^2}{\eta^2}d_w(v)\right] \\
         & \qquad \leq 2\exp\left(-\frac{2\left(\frac{\epsilon^2}{\eta^2}d_w(v)\right)^2}{\eta'\left(\frac{|N_G(v)|}{\eta'}W\right)^2}\right) \\
         & \qquad \leq 2\exp\left(-\frac{2\eta'\epsilon^4d_w(v)^2}{\eta^4|N_G(v)|^2W^2}\right) \\
         & \qquad \leq 2\exp\left(-\frac{2\eta'\epsilon^4}{\eta^4W^2}\right) \\
         & \qquad \leq 2\exp\left(-\frac{1}{2}\eta\right)
    \end{align*}
    The second last line holds from the fact that $d_w(v) \geq |N_G(v)|$, and the last line holds from $\eta'>\frac{W^2\eta^5}{4\epsilon^4}$.
    By union bound, with probability $1 - 6\exp(-\frac{1}{2}\eta)$, the three inequalities, \eqref{eq:coststay-concentration-error}, \eqref{eq:costmove-concentration-error} and 
    \begin{equation}
        \label{eq:costmove-sampling-error}
        \left|\rmEstCostMoves'_w(S, s, v) - \rmEstCostMoves_w(S, s, v)\right| \leq \frac{\epsilon^2}{\eta^2}d_w(v)
    \end{equation}
    holds. Combining \eqref{eq:costmove-concentration-error} and \eqref{eq:costmove-sampling-error}, we get $\eqref{eq:costmove-concentration-sampling}$. This finishes the proof of this lemma. 
\end{proof}

\paragraph{Estimating the improvement.}
Then we show how to estimate the improvement, $\cost(\calC + S') - \cost(\calC)$, in $O_{\epsilon}(d(r))$ time, where $r$ is the selected vertex and $S' \subseteq N(r)$. 

\begin{tcolorbox}[beforeafter skip=10pt]
    $\rmEstImprovement_w(\calC, S', r)$
    \begin{itemize}[itemsep=0pt, topsep=4pt]
        \item Let $C(r) \in \calC$ be the cluster containing $r$
        \item $I \gets 0$
        \item For $j = 1, \dots, \eta'$:
        \begin{itemize}
            \item Uniformly randomly sample $u_j \in S' \Delta C(r)$ %
            \item $ I\gets  I + \rmCostStays_w(S', u_j) - \rmCostMoves_w(S', u_j)$\\ 
            \hphantom{1.1cm} $ - \frac{1}{2}\sum_{v \in S'}\left(\cost_w(\calC, u_jv) - \cost_w(\calC + S', u_jv)\right)$
            \\ where $\cost_w(\calC, u_jv)$ is the cost paid by the clustering scheme $\calC$ for the edge or non-edge $u_jv$. 
        \end{itemize}
        \item return $I|S' \Delta C(r)|/\eta'$
    \end{itemize}
\end{tcolorbox}

Here we also need to query the neighbors of $u_j$. Since $u_j$'s are sampled from $S'\Delta C(r)$ which also has a size $\Omega(d(r))$, we can use the same way to sample them as we previously sampling $u_i$'s.

We will prove the following lemma. Then, according to union bound, all the costs are estimated well within a small error if we set $\zeta=2\log n$. By changing the constant in the definition of $\Delta$ (in Section \ref{sec:number-of-rounds}), the same algorithm could also work. 

\begin{lemma}
    Let $\eta' > 38220595200\epsilon^{-54}W^2\zeta$ for some parameter $\zeta$. With probability at least $1 - 2\exp(-\zeta)$,
    \begin{equation}
        \label{eq:costfunction-concentration}
        \rmEstImprovement_w(C, S', r) \in \cost_w(\calC) - \cost_w(\calC + S') \pm \frac{\epsilon^{13}}{4}\cdot\frac{\epsilon^8}{576}|D(r)|\cdot|N(r)|.
    \end{equation}
\end{lemma}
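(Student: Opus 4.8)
The plan is to show that $\rmEstImprovement_w(\calC,S',r)$ is an \emph{unbiased} estimator of $\cost_w(\calC)-\cost_w(\calC+S')$ whose individual samples are bounded in absolute value by $\mathrm{poly}(\varepsilon^{-1})W|N(r)|$, and then to apply Hoeffding's inequality with $\eta'$ samples. Writing $N := |S' \Delta C(r)|$ and letting $I_j$ denote the $j$-th summand accumulated into $I$ (so that the returned value equals $\tfrac{N}{\eta'}\sum_{j=1}^{\eta'} I_j$), the four steps are: (i) prove $\E[I_j] = \tfrac1N\big(\cost_w(\calC)-\cost_w(\calC+S')\big)$; (ii) prove $|I_j|\le B$ for $B = O(\varepsilon^{-6}W|N(r)|)$; (iii) invoke Hoeffding on the i.i.d.\ sum $\sum_j I_j$; (iv) solve for the value of $\eta'$ that makes the deviation at most $\tfrac{\varepsilon^{13}}{4}\cdot\tfrac{\varepsilon^8}{576}|D(r)|\cdot|N(r)|$ with failure probability $2\exp(-\zeta)$.

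For step (i), the main observation is that a pair $uv$ has a different cost in $\calC$ and in $\calC+S'$ only if exactly one of $u,v$ lies in $S'$; moreover, since $K(r)\subseteq S'\cap C(r)$ by the invariant that the maintained clustering never splits an atom, every such changed pair has at least one endpoint in $S' \Delta C(r)$, and it has \emph{both} endpoints there exactly when both lie in $S'\setminus C(r)$. Thus $\cost_w(\calC)-\cost_w(\calC+S')$ can be written as a sum over $u\in S' \Delta C(r)$ of the cost change over pairs incident to $u$, minus a correction that halves the contribution of the pairs counted twice: the term $\rmCostStays_w(S',u_j)-\rmCostMoves_w(S',u_j)$ supplies the first quantity and $\tfrac12\sum_{v\in S'}\big(\cost_w(\calC,u_jv)-\cost_w(\calC+S',u_jv)\big)$ supplies the correction. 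Averaging over the uniform choice of $u_j\in S' \Delta C(r)$ and multiplying by $N$ yields exactly $\cost_w(\calC)-\cost_w(\calC+S')$. Verifying this bookkeeping carefully — in particular that the correction term neither over- nor under-counts pairs with one endpoint in $S'\cap C(r)$ — is the most delicate part of the argument.

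For step (ii), fix $u_j$ and let $C\in\calC$ be its cluster. Each of $\rmCostStays_w(S',u_j)$ and $\rmCostMoves_w(S',u_j)$ is the total weighted cost at $u_j$ in some clustering, hence is at most $d_w(u_j)+\max\{|C|,|S'|\}\le Wd(u_j)+12\varepsilon^{-4}d(u_j)+|S'|$ by Corollary~\ref{cor:bounded-minus-cost}; since $u_j\in N(r)$ (see the structural fact below), Claim~\ref{claim:degree-4.2} gives $d(u_j)\le 4\varepsilon^{-2}|N(r)|$ and $|S'|\le|N(r)|$, so both quantities are $O(\varepsilon^{-6}W|N(r)|)$. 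The correction term $\tfrac12\sum_{v\in S'}\big(\cost_w(\calC,u_jv)-\cost_w(\calC+S',u_jv)\big)$ is at most $\tfrac12 W|S'|\le \tfrac12 W|N(r)|$, since each pair's cost changes by at most $W$. Hence $|I_j|\le B$ with $B = c\,\varepsilon^{-6}W|N(r)|$ for an absolute constant $c$ (one may take $c\le 50$, matching Claim~\ref{claim:single-improvement-4.2}).

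For steps (iii)--(iv), the key structural fact is $S' \Delta C(r)\subseteq D(r)$: both $S'$ and $C(r)$ contain the atom $K(r)$ and are contained in $N(r)$ (for $S'$ by construction in \textbf{GenerateCluster}, for $C(r)$ by the invariant on the maintained clustering), so $S' \Delta C(r)\subseteq N(r)\setminus K(r)=D(r)$ and thus $N\le|D(r)|$. Applying Hoeffding's inequality to $\sum_{j=1}^{\eta'} I_j$ (i.i.d.\ summands in $[-B,B]$ with the mean fixed by step (i)), the probability that the output deviates from $\cost_w(\calC)-\cost_w(\calC+S')$ by more than $h:=\tfrac{\varepsilon^{13}}{4}\cdot\tfrac{\varepsilon^8}{576}|D(r)|\cdot|N(r)| = \tfrac{\varepsilon^{21}}{2304}|D(r)|\cdot|N(r)|$ is at most $2\exp\!\big(-\tfrac{\eta' h^2}{2N^2B^2}\big)$. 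Plugging in $N\le|D(r)|$ and $B\le c\varepsilon^{-6}W|N(r)|$ makes the factors $|D(r)|^2$ and $|N(r)|^2$ cancel, giving $\tfrac{h^2}{N^2B^2}\ge \tfrac{\varepsilon^{54}}{2304^2c^2W^2}=\Omega\!\big(\tfrac{\varepsilon^{54}}{W^2}\big)$, so choosing $\eta'>38220595200\,\varepsilon^{-54}W^2\zeta$ forces this probability below $2\exp(-\zeta)$, as required. The only genuinely non-routine points are the unbiasedness bookkeeping of step (i) and pinning down the invariant giving $K(r)\subseteq C(r)\subseteq N(r)$ used in step (iii); everything else reuses the degree and cluster-size estimates already developed for the faster local search in Section~\ref{sec:sublinear} together with a single Hoeffding bound.
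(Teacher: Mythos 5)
Your proposal follows the paper's proof essentially verbatim: the paper defines the same per-vertex quantities $X_u$ (your $I_j$), asserts the same unbiasedness identity $\cost_w(\calC)-\cost_w(\calC+S')=\sum_{u\in S'\oplus C(r)}X_u$ that you flag as the delicate bookkeeping step, bounds $|X_u|$ using Corollary~\ref{cor:bounded-minus-cost} and the degree estimates, and concludes with the identical Hoeffding computation of your steps (iii)--(iv), the constant $38220595200 = 2\cdot 2304^2\cdot 3600$ coming from $|X_u|\le 30W\varepsilon^{-5}d(r)$ together with $|N(r)|\ge\tfrac12\varepsilon d(r)$ and $|S'\oplus C(r)|\le |D(r)|$. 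The only substantive difference is in step (ii): the paper bounds $|X_u|$ from the form $\tfrac12\sum_{v\in S'}\bigl(\cost_w(\calC,uv)-\cost_w(\calC+S',uv)\bigr)+\sum_{v\in C(u)\setminus S'}\bigl(\cost_w(\calC,uv)-\cost_w(\calC+S',uv)\bigr)$, in which the $d_w(u)$ contributions of $\rmCostStays$ and $\rmCostMoves$ have cancelled, yielding the clean bound $W(|S'|+|C(u)|)$, whereas your separate bounds on $\rmCostStays$ and $\rmCostMoves$ retain the $d_w(u_j)$ terms, so your constant $c$ works out closer to $60$ than to the claimed $50$ --- which still (just barely, since the threshold requires $c\le 60$) fits under the stated value of $\eta'$.
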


\begin{proof}
    Let $\cost(\calC, uv)$ be the cost by the clustering scheme $\calC$ for the edge or nonedge $uv$. For each $u \in S'$, let 
    \begin{align*}
        X_u = & \frac{1}{2}\sum_{v \in S'}(\cost_w(\calC, uv) - \cost_w(\calC + S', uv)) + \\
        & \sum_{v \in C(u) \setminus S'}(\cost_w(\calC, uv) - \cost_w(\calC + S', uv))
    \end{align*}
    Observe that
    \begin{align*}
        X_u = & \frac{1}{2}\sum_{v \in S'}(\cost_w(\calC, uv) - \cost_w(\calC + S', uv)) + \\
        & \sum_{v \in C(u) \setminus S'}(\cost_w(\calC, uv) - \cost_w(\calC + S', uv)) \\
        = & \frac{1}{2}\sum_{v \in S'}(\cost_w(\calC, uv) - \cost_w(\calC + S', uv)) + \\
        & \sum_{v \in V \setminus S'}(\cost_w(\calC, uv) - \cost_w(\calC + S', uv)) \\
        = & \sum_{v \in V}(\cost_w(\calC, uv) - \cost_w(\calC + S', uv)) - \\
        & \frac{1}{2}\sum_{v \in S'}(\cost_w(\calC, uv) - \cost_w(\calC + S', uv)) \\
        = & \rmCostStays_w(S', u) - \rmCostMoves_w(S', u) - \frac{1}{2}\sum_{v \in S'}(\cost_w(\calC, uv) - \cost_w(\calC + S', uv))
    \end{align*}
    Note that $X_u$ can be computed in $O(d(u) + |S'|)$. 
    \begin{equation}
        \label{eq:split-improvement}
        \cost_w(\calC) - \cost_w(\calC + S') = \sum_{u \in S' \oplus C(r)}X_u
    \end{equation}
    Since $|X_u| \leq W(|S'| + |C(u)|) \leq W(|N(r)| + |C(u)|)$, by Corollary~\ref{cor:bounded-minus-cost} and lemma~\ref{lem:size-of-neighborhood}, we have
    \begin{align*}
        |X_u| & \leq W(|N(r)| + |C(u)|) \\
        & \leq W(6\epsilon^{-3}d(r) + 12\epsilon^{-4}d(u)) \\
        & \leq W(6\epsilon^{-3}d(r) + 24\epsilon^{-5}d(r)) \\
        & \leq 30W\epsilon^{-5}d(r)
    \end{align*}
    According to Hoeffding's inequality, 
    \begin{align*}
        & \Pr\left[\left|\frac{|S' \oplus C(r)|}{\eta'}\sum_{j = 1}^{\eta'}X_{u_j} - (\cost_w(\calC) - \cost_w(\calC + S'))\right| \geq \frac{\epsilon^{13}}{4}\cdot \frac{\epsilon^8}{576}|D(r)|\cdot|N(r)|\right] \\
        & \qquad \leq 2\exp\left(-\frac{2\left(\frac{\epsilon^{13}}{4}\cdot\frac{\epsilon^8}{576}|D(r)|\cdot|N(r)|\right)^2}{\eta'\left(\frac{|S'\oplus C(r)|}{\eta'}60W\epsilon^{-5}d(r)\right)^2}\right) \\
        & \qquad \leq 2\exp\left(-\frac{2\eta'\epsilon^{42}(|D(r)|\cdot|N(r)|)^2}{19110297600W^2\epsilon^{-10}|D(r)|^2d(r)^2}\right) \\
        & \qquad \leq 2\exp\left(-\frac{\eta'\epsilon^{44}}{38220595200W^2\epsilon^{-10}}\right) \\
        & \qquad \leq 2\exp\left(-\zeta\right)
    \end{align*}
    The last two inequality hold because $|N(r)| \geq \frac{1}{2}\epsilon d(r)$ and $\eta' > 38220595200\epsilon^{-54}W^2\zeta$. This finishes the proof of this lemma. 
\end{proof}

\section{Acknowledgements}
    David Rasmussen Lolck, Mikkel Thorup, Shuyi Yan and Hanwen Zhang are part of Basic Algorithms Research Copenhagen (BARC), supported by the VILLUM Foundation grant 16582. 
    Marcin Pilipczuk is partially supported by the VILLUM Foundation grant 16582 while visiting Basic Algorithms Research Copenhagen. 
    Hanwen Zhang is partially supported by Starting Grant 1054-00032B from the Independent Research Fund Denmark under the Sapere Aude research career programme. 
    \bibliographystyle{alpha}
    \bibliography{references}
\end{document}